\newtheorem{theorem}{Theorem}
{\bfseries}{\itshape}
{\bfseries}{\rm}
{\bfseries}{\itshape}
\newtheorem{example}[theorem]{Example}{\bfseries}{\itshape}
\newtheorem{lemma}[theorem]{Lemma}{\bfseries}{\itshape}
{\itshape}{\itshape}
{\itshape}{\itshape}
\newtheorem{claim}[theorem]{Claim}{\itshape}{\itshape}
\newenvironment{proofof}[1]
{\medskip\noindent \textbf{#1.} \itshape} 
{}
\newcommand{\nat}{\mathbb{N}}
\newcommand{\closure}{\text{closure}}
\newcommand{\layer}{\text{layer}}
\newcommand{\cell}{\text{cell}}
\newcommand{\union}{\text{u}}
\newcommand{\lcs}{\text{lcs}}
\newcommand{\rev}{\text{rev}}
\newcommand{\diff}[1]{\text{D}^#1}
\newcommand{\sufforder}{{\preceq_s}}
\newcommand{\eps}{\varepsilon}
\newcommand{\N}{\mathbb{N}}
\newcommand{\F}{\mathcal{F}}
\newcommand{\U}{\mathcal{U}}
\newcommand{\embeds}{\preceq}
\newcommand{\Al}{\mathtt{Alph}}
\newcommand{\trans}[1]{\xrightarrow{#1}}
\newcommand{\A}{\mathcal{A}}
\newcommand{\B}{\mathcal{B}}
\newcommand{\NFA}{\mathcal{A}}
\newcommand{\LA}{L^{\A}}
\newcommand{\LB}{L^{\B}}
\newcommand{\KA}{K^{\A}}
\newcommand{\KB}{K^{\B}}
\newcommand{\MA}{M^{\A}}
\newcommand{\MB}{M^{\B}}
\newcommand{\pref}{\mathtt{pref}}
\newcommand{\midd}{\mathtt{mid}}
\newcommand{\suff}{\mathtt{suff}}
\newcommand{\graph}{\texttt{SYNCH}}
\renewcommand{\epsilon}{\varepsilon}
\begin{document}
\title{Efficient Separability of Regular Languages\\ by Subsequences and
Suffixes}
\author{
  Wojciech Czerwi\'nski \qquad Wim Martens \qquad Tom\'{a}\v{s} Masopust
  \\
  \small{Institute for Computer Science, University of Bayreuth}
}

\maketitle
\begin{abstract}
  When can two regular word languages $K$ and $L$ be separated by a
  simple language? We investigate this question and consider
  separation by piecewise- and suffix-testable languages and variants
  thereof. We give characterizations of when two languages can be
  separated and present an overview of when these problems can be
  decided in polynomial time if $K$ and $L$ are given by
  nondeterministic automata.     \end{abstract}

\section{Introduction}
\makeatletter{}In this paper we are motivated by scenarios in which we want to
describe something complex by means of a simple language.  The
technical core of our scenarios consists of \emph{separation}
problems, which are usually of the following form:  \vspace{-2mm}
\begin{quote}
  Given are two languages $K$ and $L$. Does there exist a language
  $S$, coming from a family $\F$ of \emph{simple} languages, such that
  $S$ contains everything from $K$ and nothing from $L$?
\end{quote}
\vspace{-2mm}
The family $\F$ of simple languages could be, for example, languages
definable in FO, piecewise testable languages, or languages definable
with small automata.

Our work is specifically motivated by two seemingly orthogonal problems
coming from practice: (a) increasing the user-friendliness of XML Schema
and (b) efficient approximate query answering. We explain these next.

Our first motivation comes from simplifying XML Schema. XML Schema is
currently the only industrially accepted and widely supported schema
language for XML. Historically, it is designed to alleviate the
limited expressiveness of Document Type Definition (DTD) \cite{xml},
thereby making DTDs obsolete. Unfortunately, XML Schema's extra
expressiveness comes at the cost of simplicity. Its code is designed
to be machine-readable rather than human-readable and its logical
core, based on \emph{complex types}, does not seem well-understood by
users \cite{MartensNNS-vldb12}. One reason may be that the
specification of XML Schema's core \cite{xsd-1} consists of over 100
pages of intricate text. The BonXai schema language
\cite{MartensNNS-vldb12,MartensNNS-bonxai} is an attempt to overcome
these issues and to combine the simplicity of DTDs with the
expressiveness of XML Schema. It has exactly the same expressive power
as XML Schema, is designed to be human-readable, and avoids the use of
complex types. Therefore, it aims at simplifying the development or
analysis of XSDs. In its core, a BonXai schema is a set of rules $L_1
\to R_1, \ldots, L_n \to R_n$ in which all $L_i$ and $R_i$ are regular
expressions. An unranked tree $t$ (basically, an XML document) is in
the language of the schema if, for every node $u$, the word formed by
the labels of $u$'s children is in the language $R_k$, where $k$ is
the largest number such that the word of ancestors of $u$ is in
$L_k$. This semantical definition is designed to ensure full
back-and-forth compatibility with XML Schema \cite{MartensNNS-vldb12}.

When translating an XML Schema Definition (XSD) into an equivalent
BonXai schema, the regular expressions $L_i$ are obtained from a
finite automaton that is embedded in the XSD. Since the current
state-of-the-art in translating automata to expressions does not yet
generate sufficiently clean results for our purposes, we
are investigating simpler classes of expressions which we expect to
suffice in practice. Practical and theoretical studies show evidence
that regular expressions of the form $\Sigma^*w$ (with $w \in
\Sigma^+$) and $\Sigma^* a_1 \Sigma^* \cdots \Sigma^* a_n$ (with
$a_1,\ldots,a_n \in \Sigma$) and variations thereof seem to be quite
well-suited
\cite{GeladeN-jcss11,KasneciS-pods07,MartensNSB-tods06}. We
study these kinds of expressions in this paper.

Our second motivation comes from efficient approximate query
answering. Efficiently evaluating regular expressions is
relevant in a very wide array of fields. We choose one: in graph
databases and in the context of the SPARQL language
\cite{ArenasCP-www12,w3c-sparql11-2010,LosemannM-pods12,PerezAG-jws10}
for querying RDF data. Typically, regular expressions are used in this
context to match paths between nodes in a huge graph. In fact, the
data can be so huge that exact evaluation of a regular expression $r$
over the graph (which can lead to a product construction between an
automaton for the expression and the graph
\cite{LosemannM-pods12,PerezAG-jws10}) may not be feasible within
reasonable time. Therefore, as a compromise to exact evaluation, one
could imagine that we try to rewrite the regular expression $r$ as an
expression that we can evaluate much more efficiently and is close
enough to $r$. Concretely, we could specify two expressions
$r_\text{pos}$ (resp., $r_\text{neg}$) that define the language we want
to (resp., do not want to) match in our answer and ask whether there
exists a simple query (e.g., defining a piecewise testable
language) that satisfies these constraints. Notice that the scenario
of approximating an expression $r$ in this way is very general and not
even limited to databases. (Also, we can take $r_\text{neg}$ to be the
complement of $r_\text{pos}$.)

At first sight, these two motivating scenarios may seem to be
fundamentally different. In the first, we want to compute an
\emph{exact} simple description of a complex object and in the second
one we want to compute an \emph{approximate} simple query that can be
evaluated more efficiently. However, both scenarios boil down to the
same underlying question of language separation. Our contributions are:

\noindent (1) 
We formally define separation problems that closely correspond to the
  motivating scenarios. Query approximation will be abstracted as
  \emph{separation} and schema simplification as
  \emph{layer-separation} (Section \ref{sec:separability}).

\noindent (2)
We give a general characterization of separability of
  languages $K$ and $L$ in terms of boolean combinations of simple
  languages, layer-separability, and the existence of an infinite
  sequence of words that goes back and forth between $K$ and $L$. This
  characterization shows how the exact and approximate scenario are
  related and does not require $K$ and $L$ to be regular
  (Sec.~\ref{sec:characterization}). Our characterization generalizes
  a result by Stern \cite{Stern-tcs85} that says that a regular
  language $L$ is piecewise testable iff every increasing
  infinite sequence of words (w.r.t.\ subsequence ordering) alternates
  finitely many times between $L$ and its complement. 

  \noindent (3) In Section~\ref{sec:ptime} we prove a decomposition
  characterization for separability of regular languages by piecewise
  testable languages and we give an algorithm that decides
  separability. The decomposition characterization is in the spirit of
  an algebraic result by Almeida \cite{Almeida-jpaa90}.  It is
  possible to prove our characterization using Almeida's result but we
  provide a self-contained, elementary proof which can be understood
  without a background in algebra. We then use this characterization
  to distill a polynomial time decision procedure for separability of
  languages of NFAs (or regular expressions) by piecewise testable
  languages.       The state-of-the-art algorithm for separability by piecewise
  testable languages (\cite{Almeida-pmd99,AlmeidaZ-ita97}) runs in
  time $O(\text{poly}(|Q|)\cdot 2^{|\Sigma|})$ when given DFAs for the
  regular languages, where $|Q|$ is the number of states in the DFAs
  and $|\Sigma|$ is the alphabet size. Our algorithm runs in time
  $O(\text{poly}(|Q|+|\Sigma|))$ even for NFAs. We explain the
  connection to \cite{Almeida-pmd99,AlmeidaZ-ita97} more closely in the Appendix.       Notice that $|\Sigma|$ can be large (several hundreds and
  more) in the scenarios that motivate us, so we believe the
  improvement with respect to the alphabet to be relevant in practice.

\noindent (4)
  Whereas Section~\ref{sec:ptime} focuses exclusively on separation by
piecewise testable languages, we broaden our scope in
Section~\ref{sec:other}. Let's say that a \emph{subsequence language}
is a language of the form $\Sigma^* a_1 \Sigma^* \cdots \Sigma^* a_n
\Sigma^*$ (with all $a_i \in \Sigma$). Similarly, a \emph{suffix
  language} is of the form $\Sigma^*a_1 \cdots a_n$.  We present an
overview of the complexities of deciding whether regular languages can
be separated by subsequence languages, suffix languages, finite unions
thereof, or boolean combinations thereof. We prove all cases to be in
polynomial time, except separability by a single subsequence language
which is NP-complete. By combining this with the results from 
Section~\ref{sec:characterization} we also have that layer-separability
is in polynomial time for all languages we consider.

We now discuss further related work.  There is a large body
of related work that has not been mentioned yet. Piecewise testable
languages are defined and studied by Simon~\cite{Simon1972,Simon1975}, who
showed that a regular language is piecewise testable iff its syntactic
monoid is \textsf{J}-trivial and iff both the
minimal DFA for the language and the minimal DFA for the reversal are
partially ordered.
Stern~\cite{Stern85a} suggested an $O(n^5)$ algorithm in the size of a
DFA to decide whether a regular language is piecewise testable. This
was improved to quadratic time by
Trahtman~\cite{Trahtman2001}. (Actually, from our proof, it now
follows that this question can be decided in polynomial time if an NFA
and its complement NFA are given.)

Almeida~\cite{Almeida-pmd99} established a connection between a number
of separation problems and properties of families of monoids called
pseudovarieties.  Almeida shows, e.g., that deciding whether two given
regular languages can be separated by a language with its syntactic
monoid lying in pseudovariety \textsf{V} is algorithmically equivalent
to computing two-pointlike sets for a monoid in pseudovariety
\textsf{V}. It is then shown by Almeida et al.~\cite{AlmeidaCZ-jpaa08}
how to compute these two-pointlike sets in the pseudovariety
\textsf{J} corresponding to piecewise testable languages.
Henckell et al. \cite{DBLP:journals/ijac/HenckellRS10a} and Steinberg
\cite{Steinberg-SF01} show that the two-pointlike sets can be computed
for pseudovarieties corresponding to languages definable in first
order logic and languages of dot depth at most one, respectively. By
Almeida's result~\cite{Almeida-pmd99} this implies that the separation
problem is also decidable for these classes.

\section{Preliminaries and Definitions}
\makeatletter{}For a finite set $S$, we denote its cardinality by $|S|$.  By $\Sigma$
we always denote an alphabet, that is, a finite set of symbols.  A
($\Sigma$-)\emph{word} $w$ is a finite sequence of symbols $a_1\cdots
a_n$, where $n\ge 0$ and $a_i \in \Sigma$ for all $i = 1,\ldots,n$.
The \emph{length of $w$}, denoted by $|w|$, is $n$ and the
\emph{alphabet of $w$}, denoted by $\Al(w)$, is the set
$\{a_1,\ldots,a_n\}$ of symbols occurring in $w$.  The empty word is
denoted by $\epsilon$.  The set of all $\Sigma$-words is denoted by
$\Sigma^*$.  A \emph{language} is a set of words. For $v = a_1 \cdots a_n$ and $w \in \Sigma^* a_1 \Sigma ^* \cdots
\Sigma^* a_n \Sigma^*$, we say that $v$ is a \emph{subsequence} of
$w$, denoted by $v \preceq w$.

A {\em (nondeterministic) finite automaton} or {\em NFA} $\NFA$ is a
tuple $(Q,\Sigma,\delta,q_0,F) $, where $Q$ is a finite set of states,
$\delta : Q \times \Sigma \to 2^Q$ is the transition function, $q_0
\in Q$ is the initial state, and $F \subseteq Q$ is the set of
accepting states.  We sometimes denote that $q_2 \in \delta(q_1,a)$ as
$q_1 \trans{a} q_2 \in \delta$ to emphasize that $\NFA$ being in state
$q_1$ can go to state $q_2$ reading an $a \in \Sigma$.  A \emph{run of $\NFA$
  on word $w = a_1 \cdots a_n$} is a sequence of states $q_0 \cdots
q_n$ where, for each $i = 1,\ldots,n$, we have $q_{i-1} \trans{a_i}
q_i \in \delta$.  The run is \emph{accepting} if $q_n\in F$.  Word $w$
is \emph{accepted} by $\NFA$ if there is an accepting run of $\NFA$ on $w$.
The \emph{language of $\NFA$}, denoted by $L(\NFA)$, is the set of all words
accepted by $\NFA$.  By $\delta^*$ we denote the extension of $\delta$ to
words, that is, $\delta^*(q,w)$ is the set of states that can be
reached from $q$ by reading $w$.  The \emph{size} $|\NFA|=|Q| + \sum_{q,a}
|\delta(q,a)|$ of $\NFA$ is the total number of transitions and states.
An NFA is \emph{deterministic} (a \emph{DFA}) when every $\delta(q,a)$ consists
of at most one element.

The \emph{regular expressions (RE)} over $\Sigma$ are defined as
follows: $\varepsilon$ and every $\Sigma$-symbol is a regular
expression; whenever $r$ and $s$ are regular expressions, then so are
$(r\cdot s)$, $(r + s)$, and $(s)^*$.  In addition, we allow
$\emptyset$ as a regular expression, but we assume that $\emptyset$
does not occur in any other regular expression. For
readability, we usually omit concatenation operators and parentheses
in examples. We sometimes abbreviate an $n$-fold concatenation of
$r$ by $r^n$.  The \emph{language}
defined by an RE $r$ is denoted by $L(r)$ and is defined as
usual. Often we simply write $r$ instead of $L(r)$. Whenever we say that expressions or automata are
\emph{equivalent}, we mean that they define the same language.  The
\emph{size} $|r|$ of $r$ is the total number of occurrences of
alphabet symbols, epsilons, and operators in $r$, i.e., the number of
nodes in its parse tree.  A regular expression is \emph{union-free} if
it does not contain the operator $+$.  A language is \emph{union-free}
if it is defined by a union-free regular expression.

A quasi-order is a reflexive and transitive relation. 
For a quasi-order $\preccurlyeq$, the \emph{(upward) $\preccurlyeq$-closure} of a language
$L$ is the set $\closure^\preccurlyeq(L)=\{w \mid v \preccurlyeq w \text{ for some }
v\in L\}$. We denote the $\preccurlyeq$-closure of a word $w$ as
$\closure^\preccurlyeq(w)$ instead of $\closure^\preccurlyeq(\{w\})$.  Language $L$ is
\emph{(upward)  $\preccurlyeq$-closed} if $L=\closure^\preccurlyeq(L)$. 

A quasi-order
$\preccurlyeq$ on a set $X$ is a \emph{well-quasi-ordering} (a
\emph{WQO}) if for every infinite sequence $(x_i)_{i=1}^{\infty}$ of
elements of $X$ there exist indices $i < j$ such that $x_i\preccurlyeq
x_j$. It is known that every WQO is also \emph{well-founded}, that is,
there exist no infinite descending sequences $x_1 \succcurlyeq x_2 \succcurlyeq
\cdots$ such that $x_i \not \preccurlyeq x_{i+1}$ for all $i$.

Higman's Lemma \cite{higman} (which we use
multiple times) states that, for every alphabet $\Sigma$, the
subsequence relation $\preceq$ 
is a WQO on $\Sigma^*$. Notice that, as a corollary to
Higman's Lemma, every $\preceq$-closed language is a finite
union of languages of the form $\Sigma^* a_1 \Sigma^* \ldots \Sigma^*
a_n \Sigma^*$ which means that it is also regular, see
also~\cite{Ehrenfeucht1983}.  A language is \emph{piecewise testable}
if it is a finite boolean combination of $\preceq$-closed
languages (or, finite boolean combination of languages $\Sigma^* a_1
\Sigma^* \cdots \Sigma^* a_n \Sigma^*$). In this paper, all boolean
combinations are finite.

\subsection{Separability of Languages}\label{sec:separability}
A language $S$ \emph{separates language $K$ from $L$} if 
$S$ contains $K$ and does not intersect $L$. We say that $S$
\emph{separates $K$ and $L$} if it either separates $K$ from $L$ or
$L$ from $K$.
Let $\F$ be a family of languages.
Languages $K$ and $L$ are \emph{separable by $\F$} if
there exists a language $S$ in $\F$ that separates $K$ and $L$.
Languages $K$ and $L$ are \emph{layer-separable by $\F$}
if there exists a finite sequence of languages $S_1, \ldots, S_m$ in $\F$ such that
\begin{enumerate}
\item for all $1\le i \le m$, language $S_i \setminus
  \bigcup_{j=1}^{i-1} S_j$ intersects at most one of $K$ and $L$;
\item $K$ or $L$ (possibly both) is included in $\bigcup_{j=1}^{m}
  S_j$.
\end{enumerate}

Notice that separability always implies layer-separability. However,
the opposite implication does not hold, as we demonstrate next.

\begin{example}\label{ex:layer_separation}
 Let $\F = \{a^n a^* \mid n \geq 0\}$ be a family of
  $\preceq$-closed languages over $\Sigma = \{a\}$, 
    $K=\{a,a^3\}$, and $L=\{a^2,a^4\}$.  We first show that languages
  $K$ and $L$ are not separable by $\F$. Indeed, assume that $S \in
  \F$ separates $K$ and $L$.  If $K$ is included in $S$, then $aa^*
  \subseteq S$, hence $L$ and $S$ are not disjoint. Conversely, if $L
  \subseteq S$, then $a^2a^* \subseteq S$ and therefore $S$ and $K$ are
  not disjoint. This contradicts that $S$ separates $K$ and $L$.  Now
  we show that the languages are layer-separable by $\F$.  Consider
  languages $S_1= a^4a^*$, $S_2=a^3a^*$, $S_3=a^2a^*$, and $S_4=aa^*$.
  Then both $K$ and $L$ are included in $S_4$, and $S_1$ intersects
  only $L$, $S_2\setminus S_1=a^3$ intersects only $K$, $S_3\setminus
  (S_1\cup S_2)=a^2$ intersects only $L$, and $S_4\setminus (S_1\cup
  S_2\cup S_3)=a$ intersects only $K$; see Fig.~\ref{fig1}.  
\end{example} 

\begin{figure}
  \centering
  \begin{tikzpicture}[x=.7cm,y=.7cm]
        \draw (1,2) ellipse [x radius=.7cm, y radius=1.4cm];
    \draw (5,2) ellipse [x radius=.7cm, y radius=1.4cm];
    \draw (1,-.3) node {$K$};
    \draw (4.8,-.3) node {$L$};
    \draw (1,3) node {$a^3$};
    \draw (1,1) node {$a$};
    \draw (5,3) node {$a^4$};
    \draw (5,1) node {$a^2$};
            \draw (6.5,1.4) .. controls (3,1.2) and (2.6,4.7) .. (-.5,4.2);
    \draw (6.5,2.2) node {$S_1$};
        \draw (-.5,1.6) .. controls (3,1.6) and (3,1.6) .. (6.5,3.0);
    \draw (-.5,2) node {$S_2$};
    \draw (6.5,-.3) .. controls (3,-.5) and (2.6,3) .. (-.5,2.5);
    \draw (6.5, 0) node {$S_3$};
    \draw (-.5,-0.1) .. controls (3,-0.1) and (3,-0.1) .. (6.5,1.3);
    \draw (-.5,.2) node {$S_4$};
  \end{tikzpicture}
  \caption{An example of a layer-se\-pa\-ra\-tion.\label{fig1}}
\end{figure}
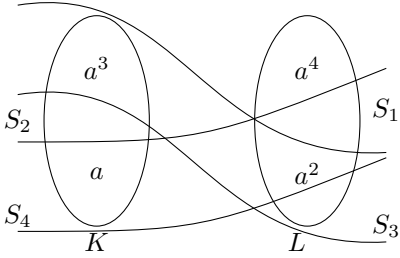
Example \ref{ex:layer_separation} illustrates some intuition behind
layered separability. Our motivation for layered separability comes
from the BonXai schema language which is discussed in the
introduction. We need to solve layer-separability if we want to decide
whether an XML Schema has an equivalent BonXai schema with simple regular
expressions (defining languages in $\F$). Layered separability implies that languages are, in a
sense, separable by languages from $\F$ in a priority-based system: If
we consider the ordered sequence of languages $S_1, S_2, S_3, S_4$
then, in order to classify a word $w \in K \cup L$ in either $K$ or
$L$, we have to match it against the $S_i$ in increasing order of the
index $i$. If we know the lowest index $j$ for which $w \in S_j$, we
know whether $w \in K$ or $w \in L$.

We now define a tool (similar to and slightly more general than the
\emph{alternating towers} of Stern~\cite{Stern-tcs85}) that allows us to
determine when languages are \emph{not} separable.
For languages $K$ and $L$ and a quasi-order $\preccurlyeq$, we say that a sequence $(w_i)_{i=1}^{k}$ of
words is a \emph{$\preccurlyeq$-zigzag between $K$ and $L$} if 
$w_1 \in K \cup L$ and, for all $i = 1, \ldots, k-1$:
\begin{center}
  (1) $w_i \preccurlyeq w_{i+1}$; (2) $w_i \in K$ implies $w_{i+1} \in L$; and
  (3) $w_i \in L$ implies $w_{i+1} \in K$.
\end{center}
We say that $k$ is the \emph{length} of the $\preccurlyeq$-zigzag.
We similarly define an infinite sequence of words to be an
\emph{infinite $\preccurlyeq$-zigzag between $K$ and $L$}.  If the languages $K$ and
$L$ are clear from the context then we sometimes omit them and refer to
the sequence as a \emph{(infinite) $\preccurlyeq$-zigzag}. 
If we consider the subsequence order $\preceq$, then we simply write a \emph{zigzag} instead of a $\preceq$-zigzag.
Notice that we do not require $K$ and $L$ to be disjoint. If there is a $w \in K \cap L$
then there clearly exists an infinite zigzag: $w, w, w, \ldots$

\begin{example}\label{ex:infinite_zigzag}
In order to illustrate infinite zigzags consider the
languages 
$K = \{a(ab)^{2k} c (ac)^{2\ell} \mid k, \ell \geq 0\}$
and
$L = \{b(ab)^{2k + 1} c (ac)^{2\ell + 1} \mid k, \ell \geq 0\}$.
Then the following infinite sequence is an infinite zigzag between $K$ and $L$:
\[
w_i =
\begin{cases}
  b (ab)^i c (ac)^i \hskip 1cm \text{if i is odd} \\
  a (ab)^i c (ac)^i \hskip 1cm \text{if i is even}
\end{cases}
\]
Indeed $w_1 \in L$, words from the sequence alternately belong to $K$ and $L$,
and for all $i \geq 1$ we have $w_i \preceq w_{i+1}$. \qed
\end{example}

\section{A Characterization of Separability}\label{sec:characterization}
\makeatletter{}The aim of this section is to prove the following theorem. It extends
a result by Stern that characterizes piecewise testable languages
\cite{Stern-tcs85}. In particular, it applies to general
languages and does not require $K$ to be the complement of $L$.
\begin{theorem}\label{theo:general-characterization}
  For languages $K$ and $L$ and a WQO $\preccurlyeq$ on words, the following are equivalent.
  \begin{enumerate}[(1)]
  \item $K$ and $L$ are separable by a boolean combination of $\preccurlyeq$-closed languages.
  \item $K$ and $L$ are layer-separable by $\preccurlyeq$-closed languages.
  \item There does not exist an infinite $\preccurlyeq$-zigzag between $K$ and $L$.
  \end{enumerate}
\end{theorem}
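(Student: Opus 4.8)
The plan is to prove the cyclic chain of implications $(1)\Rightarrow(2)\Rightarrow(3)\Rightarrow(1)$, with almost all of the difficulty concentrated in the last step. For $(1)\Rightarrow(2)$, suppose a boolean combination $S$ of the $\preccurlyeq$-closed languages $C_1,\dots,C_n$ separates $K$ and $L$, say $K\subseteq S$ and $L\cap S=\emptyset$. Since $S$ lies in the finite boolean algebra generated by the $C_i$, each atom of this algebra is contained in $S$ or in its complement, so no atom meets both $K$ and $L$. I would take as layers the finite intersections $\bigcap_{i\in I}C_i$ (which are $\preccurlyeq$-closed, being intersections of closed sets), ordered by decreasing $|I|$ and ending with the empty intersection $\Sigma^*$. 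A short check shows that the difference of the $I$-th set with all earlier ones is exactly the atom determined by the index set $\{i : w\in C_i\}$, hence meets at most one of $K,L$; and taking $\Sigma^*$ as the final layer covers everything, so both conditions of layer-separability hold.

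For $(2)\Rightarrow(3)$ I would argue the contrapositive. Given layers $S_1,\dots,S_m$ and an infinite $\preccurlyeq$-zigzag $(w_i)$, the key observation is that since each $S_j$ is upward closed and $w_i\preccurlyeq w_{i+1}$, once some $w_i$ enters a layer it never leaves: the minimal layer index $f(i)=\min\{j:w_i\in S_j\}$ is non-increasing in $i$. Moreover, since one of $K,L$ is fully contained in $\bigcup_j S_j$ and the zigzag meets that language infinitely often, infinitely many $w_i$ are covered, so from some point on all of them are, with $f$ non-increasing and therefore eventually constant, say equal to $c$. Then for large $i$ all $w_i$ lie in the single difference $S_c\setminus\bigcup_{j<c}S_j$; but the zigzag alternates, so two consecutive such words lie in $K$ and in $L$ respectively, contradicting that this difference meets at most one of $K,L$.

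The heart of the argument is $(3)\Rightarrow(1)$. First, if $K\cap L\neq\emptyset$ there is a trivial infinite zigzag, so I may assume $K\cap L=\emptyset$. I would then define the decreasing chain of $\preccurlyeq$-closed sets $C_0=\Sigma^*$, $C_1=\closure^{\preccurlyeq}(L)$, and $C_{k+1}=\closure^{\preccurlyeq}\big((K\text{ or }L)\cap C_k\big)$, alternating the language at each step, so that $w\in C_k$ holds exactly when some length-$k$ zigzag starting in $L$ has its top $\preccurlyeq w$; these sets are nested ($C_{k+1}\subseteq C_k$) because $\closure^{\preccurlyeq}$ of a subset of a closed set stays inside it. Granting that this chain is eventually empty, set $n(w)=\max\{k:w\in C_k\}$. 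A one-step ``append $w$'' argument (if $w\in L$ is above the top of an even-length zigzag, prepend-then-extend to get an odd one, and symmetrically) shows that $n(w)$ is even for $w\in K$ and odd for $w\in L$. Hence $S=\bigcup_{k\text{ even}}(C_k\setminus C_{k+1})$ is a finite boolean combination of $\preccurlyeq$-closed languages containing $K$ and disjoint from $L$.

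The main obstacle is precisely the granted claim: deducing from the absence of an infinite zigzag that zigzag lengths are \emph{uniformly bounded} (equivalently $C_k=\emptyset$ for large $k$). Well-foundedness alone does not suffice, since a decreasing chain of nonempty closed sets can have empty intersection (e.g.\ $\closure^{\preccurlyeq}$-up-sets in $\N$), so this is exactly where the WQO hypothesis must be used. I would form the directed graph on $K\cup L$ whose edges are single zigzag steps, note that absence of an infinite zigzag makes it well-founded so each vertex gets an ordinal rank $\rho$, and observe the crucial monotonicity: if $u\preccurlyeq v$ with $u,v$ in the same language, then every out-neighbour of $v$ is an out-neighbour of $u$, so $\rho(u)\ge\rho(v)$. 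If ranks were unbounded I would pick vertices of strictly increasing rank, pass to an infinite subsequence lying in one of $K,L$, and apply the WQO to find $i<j$ with $v_i\preccurlyeq v_j$; monotonicity then gives $\rho(v_i)\ge\rho(v_j)$, contradicting the strict increase. Thus ranks, and hence zigzag lengths, are bounded, completing the construction.
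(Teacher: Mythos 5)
Your proposal is correct, and it differs from the paper's proof in the one place where the real work happens. The paper establishes $(1)\Leftrightarrow(2)$ as a standalone lemma (Lemma~\ref{lem:theo1-2}, valid for any intersection-closed family containing $\Sigma^*$, no WQO needed), proves $(2)\Rightarrow(3)$ essentially as you do (Lemma~\ref{lem:layer_to_zigzag} picks the minimal layer index met by the zigzag, where your eventually-constant $f$ does the same job), and then closes the loop at $(2)$: Lemma~\ref{lem:bound_for_zigzags} derives a uniform bound $B$ on zigzag lengths via canonical representatives, a ``densest zigzag'' minimization, a finitely branching tree, and K\"onig's lemma, after which Lemma~\ref{lem:zigzag_to_nonseparability} peels off $B$ layers $\layer(K_i,L_i)$. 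You instead close the loop at $(1)$, building a boolean combination directly from the alternating closure chain $C_k$ and the parity of $n(w)$; this peeling is the mirror image of Lemma~\ref{lem:zigzag_to_nonseparability}, so the cosmetic difference is only whether one assembles layers or differences $C_k\setminus C_{k+1}$. The substantive difference is your proof of the uniform bound: you replace the densest-zigzag/K\"onig machinery by an ordinal rank on the zigzag-step digraph together with the monotonicity $\rho(u)\ge\rho(v)$ for $u\preccurlyeq v$ in the same language, applying the WQO once to a rank-increasing sequence. Your monotonicity observation plays exactly the role of the paper's ``denser'' replacement, but it dispenses with K\"onig's lemma and with the canonical-element bookkeeping (your argument works directly with the quasi-order), which is arguably cleaner. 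One step you should spell out: the WQO contradiction shows only that the set of \emph{realized} ranks admits no strictly increasing $\omega$-sequence, hence is finite; a priori a rank could still equal $\omega$ with only finitely many distinct ranks realized, so to bound zigzag lengths you must add that in a finite rank set every nonzero rank is the successor of a smaller realized rank, whence all ranks are natural numbers bounded by the maximum. With that routine addition your argument is complete; note finally that your cyclic scheme yields $(2)\Rightarrow(1)$ only under the WQO hypothesis, whereas the paper's Lemma~\ref{lem:theo1-2} gives that equivalence unconditionally, a generality the paper explicitly points out but which is not needed for the theorem as stated.
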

Some of the equivalences in the theorem still hold when the
assumptions are weakened. For example the equivalence between (1) and
(2) does not require $\preccurlyeq$ to be a WQO. 

Since the subsequence order $\preceq$ is a WQO on words, we know from
Theorem \ref{theo:general-characterization} that languages are
separable by piecewise testable languages if and only if they are
layer-separable by  $\preceq$-closed languages. Actually, since
$\preceq$ is a WQO (and therefore only has finitely many minimal
elements within a language), the latter is equivalent to being
layer-separable by languages of the form $\Sigma^* a_1 \Sigma^* \cdots
\Sigma^* a_n \Sigma^*$.

In Example~\ref{ex:layer_separation} we illustrated two languages
$K$ and $L$ that are layer-separable by  $\preceq$-closed
languages. Notice that $K$ and $L$ can also be separated by a boolean
combination of the languages $a^*a^1$, $a^*a^2$, $a^*a^3$, and $a^*a^4$ from $\F$, 
as $K \subseteq ((a^*a^1 \setminus a^*a^2) \cup (a^*a^3 \setminus
a^*a^4))$ and $L \cap ((a^*a^1 \setminus a^*a^2) \cup (a^*a^3 \setminus
a^*a^4)) = \emptyset$.

We now give an overview of the proof of
Theorem~\ref{theo:general-characterization}. The next lemma proves the
equivalence between (1) and (2), but is slightly more general. In
particular, it does not rely on a WQO.
\begin{lemma}\label{lem:theo1-2}
  Let $\F$ be a family of languages closed under intersection and
  containing $\Sigma^*$. Then languages $K$ and $L$ are separable by a
  finite boolean combination of languages from $\F$ if and only if $K$
  and $L$ are layer-separable by $\F$.
\end{lemma}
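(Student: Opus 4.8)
The plan is to prove the two implications separately, with almost all of the real work in the direction from separability to layer-separability.

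For the easy direction, suppose $S_1,\dots,S_m \in \F$ realize a layer-separation, and write $C_i = S_i \setminus \bigcup_{j<i} S_j$ for the associated cells. By condition~(2) of layer-separability I may assume $K \subseteq \bigcup_j S_j$ (the case in which $L$ is covered is symmetric). I would then let $S$ be the union of exactly those cells $C_i$ that meet $K$. Since the cells partition $\bigcup_j S_j \supseteq K$, every word of $K$ lies in a cell that meets $K$, so $K \subseteq S$; and since by condition~(1) each such cell meets at most one of $K$ and $L$, none of the chosen cells meets $L$, giving $S \cap L = \emptyset$. Finally each $C_i = S_i \cap \bigcap_{j<i}(\Sigma^*\setminus S_j)$ is a boolean combination of languages of $\F$, hence so is $S$, which separates $K$ and $L$. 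This direction uses neither closure under intersection nor the presence of $\Sigma^*$.

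For the converse, let $S$ be a boolean combination of languages $L_1,\dots,L_N \in \F$ separating $K$ and $L$; without loss of generality $K \subseteq S$ and $S \cap L = \emptyset$. The key idea is to pass to the atoms of the finite boolean algebra generated by $L_1,\dots,L_N$: for each $T \subseteq \{1,\dots,N\}$ set $a_T = \bigcap_{i\in T} L_i \cap \bigcap_{i\notin T}(\Sigma^*\setminus L_i)$. These atoms partition $\Sigma^*$, and since $S$ lies in this algebra, each atom is contained in $S$ or in its complement; because $K \subseteq S$ and $L \subseteq \Sigma^*\setminus S$, every atom meets at most one of $K$ and $L$. It remains to realize the atoms as the cells of a layer-separation built from $\F$-languages. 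For this I put $P_T = \bigcap_{i\in T} L_i$, which lies in $\F$ precisely because $\F$ is closed under intersection, with $P_\emptyset = \Sigma^* \in \F$ by hypothesis.

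The decisive point is then an ordering argument: list the languages $P_T$ in an order of non-increasing $|T|$ and take these as $S_1,\dots,S_m$ with $m = 2^N$. Writing $T_w = \{i : w \in L_i\}$ for the type of a word $w$, we have $w \in P_{T'}$ iff $T' \subseteq T_w$. If $T = T_w$, then no $T'$ listed strictly before $T$ satisfies $T' \subseteq T$ (such a $T'$ has $|T'| \ge |T|$ and $T' \ne T$), so $w$ survives into the cell of $P_T$; if $T \subsetneq T_w$, then $T_w$ is listed before $T$ and contains $w$, so $w$ is removed. Hence the cell $P_T \setminus \bigcup_{T' \text{ before } T} P_{T'}$ equals exactly $a_T$, so each cell meets at most one of $K$ and $L$, and the union of all $S_i$ contains $P_\emptyset = \Sigma^*$ and so covers both $K$ and $L$. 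I expect the main obstacle to be exactly this converse direction, and within it the choice of ordering that makes the cells collapse onto the boolean atoms; the two closure hypotheses on $\F$ enter here in a sharp way, since closure under intersection is what makes every $P_T$ (and hence every $S_i$) a legal $\F$-language, while containment of $\Sigma^*$ handles the empty-type atom and guarantees coverage.
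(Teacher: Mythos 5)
Your proof is correct and takes essentially the same route as the paper's: the easy direction (taking the union of the cells $C_i$ that meet $K$) is identical, and your converse realizes the boolean atoms as the cells of a layered sequence consisting of all intersections $P_T=\bigcap_{i\in T}L_i$ from $\F$, with $\Sigma^*$ (the empty intersection) ensuring coverage, exactly as the paper does. The only difference is how the sequence is ordered and verified: the paper builds it by induction on the number of generators, which produces one particular linear extension of reverse inclusion, whereas you order by non-increasing $|T|$ and check directly via the type $T_w=\{i : w\in L_i\}$ that each cell collapses to the corresponding atom --- a slightly more direct verification that also makes clear that any linear extension of the superset order would work.
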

The proof (given in the Appendix) is constructive. The \emph{only if} direction
is the more complex one and shows how to exploit the implicit negation
in the first condition in the definition of layer-separability in
order to simulate separation by boolean combinations. Notice that the
families of
$\preccurlyeq$-closed languages in Theorem~\ref{theo:general-characterization}
always contain $\Sigma^*$ and are closed under intersection.

The following lemma shows that the implication (2) $\Rightarrow$ (3)
in Theorem~\ref{theo:general-characterization} does not require
well-quasi ordering.
\begin{lemma}\label{lem:layer_to_zigzag}
  Let $\preccurlyeq$ be a quasi order on words and assume
  that languages $K$ and $L$ are layer-separable by
   $\preccurlyeq$-closed languages.  Then there is no infinite
  $\preccurlyeq$-zigzag between $K$ and $L$.
\end{lemma}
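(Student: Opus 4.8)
The plan is to argue by contradiction: assume that $K$ and $L$ are layer-separable by $\preccurlyeq$-closed languages $S_1,\ldots,S_m$ witnessing conditions~(1) and~(2), yet that an infinite $\preccurlyeq$-zigzag $(w_i)_{i=1}^{\infty}$ between $K$ and $L$ nevertheless exists. The engine of the proof is the interaction between the upward-closure of the $S_j$ and the monotonicity $w_i \preccurlyeq w_{i+1}$ along the zigzag. First I would record the elementary but crucial fact that membership in each $S_j$ is preserved along the sequence: since $S_j = \closure^\preccurlyeq(S_j)$ and $w_i \preccurlyeq w_{i+1}$, we have that $w_i \in S_j$ implies $w_{i+1} \in S_j$. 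Consequently, once some word of the sequence lies in $\bigcup_{j=1}^m S_j$, every later word does too.

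Next I would ensure that the sequence actually enters the union. By condition~(2) we may assume without loss of generality that $K \subseteq \bigcup_{j=1}^m S_j$. The zigzag conditions force the $w_i$ to alternate between $K$ and $L$ (and, should some word fall in $K \cap L$, all subsequent ones do as well), so infinitely many $w_i$ belong to $K$ and hence to $\bigcup_{j=1}^m S_j$; I fix the first index $i_0$ with $w_{i_0} \in \bigcup_{j=1}^m S_j$. For every $i \ge i_0$ I then define $\ell(i)$ to be the least index $j$ with $w_i \in S_j$. The preservation fact above gives $\ell(i+1) \le \ell(i)$, so $(\ell(i))_{i \ge i_0}$ is a non-increasing sequence of integers in $\{1,\ldots,m\}$ and is therefore eventually constant, say equal to $\ell^*$ for all $i \ge i_1$.

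The contradiction then comes from condition~(1). For every $i \ge i_1$, minimality of $\ell^*$ means that $w_i \in S_{\ell^*}$ while $w_i \notin S_j$ for all $j < \ell^*$, that is, $w_i \in S_{\ell^*} \setminus \bigcup_{j < \ell^*} S_j$. Since the zigzag forces $w_{i_1}\in K\Rightarrow w_{i_1+1}\in L$ and $w_{i_1}\in L\Rightarrow w_{i_1+1}\in K$, the tail of the sequence contains a word of $K$ and a word of $L$, so this single layer difference meets both $K$ and $L$, contradicting that $S_{\ell^*} \setminus \bigcup_{j<\ell^*} S_j$ intersects at most one of them. I expect the main subtlety to lie in the stabilization step: one must argue that the sequence genuinely reaches the union (using condition~(2) together with the alternation, rather than assuming it outright) and that the least-index function is well defined and non-increasing from $i_0$ onward; the remainder is bookkeeping. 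Note that no well-quasi-ordering hypothesis is invoked anywhere, consistent with the remark preceding the lemma.
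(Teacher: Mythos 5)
Your proof is correct and follows essentially the same route as the paper's: both exploit that upward $\preccurlyeq$-closedness propagates membership in each $S_j$ forward along the zigzag, and both locate a minimal layer $S_{\ell^*}$ whose difference $S_{\ell^*} \setminus \bigcup_{j<\ell^*} S_j$ captures two consecutive zigzag words, one in $K$ and one in $L$, contradicting condition~(1). The only cosmetic difference is that the paper takes the least index $k$ with $S_k$ meeting the set of all zigzag words directly (which coincides with your stabilized value $\ell^*$), whereas you reach it via a non-increasing least-index function; your stabilization step is sound but slightly more bookkeeping than necessary.
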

To prove that (3) implies (2), we need the following technical lemma
in which we require $\preccurlyeq$ to be a WQO. In the proof of the
lemma, we argue how we can see $\preccurlyeq$-zigzags in a tree
structure. Intuitively, every path in the tree structure corresponds
to a $\preccurlyeq$-zigzag. We need the fact that $\preccurlyeq$ is a
WQO in order to show that we can assume that every node in this tree
structure has a finite number of children. We then apply K\"onig's
lemma to show that arbitrarily long $\preccurlyeq$-zigzags imply the existence of an
infinite $\preccurlyeq$-zigzag. The lemma then follows by contraposition. 
\begin{lemma}\label{lem:bound_for_zigzags}
  Let $\preccurlyeq$ be a WQO on words.  If there is no infinite
  $\preccurlyeq$-zigzag between languages $K$ and $L$, then
  there exists a constant $k \in \nat$ such that no
  $\preccurlyeq$-zigzag between $K$ and $L$ is longer than $k$.
\end{lemma}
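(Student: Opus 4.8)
The plan is to prove the contrapositive, exactly as the excerpt's roadmap suggests: I will assume that $\preccurlyeq$-zigzags of unbounded length exist and construct from them a single infinite $\preccurlyeq$-zigzag, using a tree-and-K\"onig's-lemma argument. First I would organize the collection of all finite $\preccurlyeq$-zigzags between $K$ and $L$ into a rooted tree. The natural idea is to let nodes be the words appearing in zigzags and let a node's ancestors record the prefix of the zigzag leading to it; concretely, I would build a tree $T$ whose paths from the root are precisely the finite $\preccurlyeq$-zigzags, so that a node at depth $i$ is labelled by a word $w_i$ with the property that $w_1, \ldots, w_i$ forms a valid $\preccurlyeq$-zigzag (satisfying $w_1 \in K \cup L$, the chain condition $w_j \preccurlyeq w_{j+1}$, and the alternation conditions (2)--(3)). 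By hypothesis, for every $k$ there is a zigzag of length $>k$, so $T$ has paths of every finite length and is therefore infinite.

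The obstacle I anticipate — and the reason the lemma requires $\preccurlyeq$ to be a WQO — is that $T$ as just described has infinite branching, since at each node there may be infinitely many words $w_{i+1}$ extending the current zigzag. K\"onig's lemma applies only to finitely-branching trees, so the heart of the proof is to collapse the infinite branching to finite branching without losing the property that infinite paths yield infinite zigzags. This is where I would use the WQO: among all words that can legitimately extend a given partial zigzag, I only need to keep a set of $\preccurlyeq$-minimal representatives. Since $\preccurlyeq$ is a WQO, every set of words has finitely many minimal elements up to the equivalence induced by $\preccurlyeq$, so I can prune the children of each node down to finitely many minimal ones. The key correctness point to check is that extending a zigzag through a minimal witness is ``safe'': if some word $w$ extends the zigzag (with the right membership in $K$ or $L$), and $w' \preccurlyeq w$ with $w'$ in the same language, then $w'$ still satisfies $w_i \preccurlyeq w'$ by transitivity and still respects alternation, so replacing $w$ by a minimal $w'$ below it preserves a valid zigzag of the same length. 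Hence pruning to minimal children does not destroy the existence of arbitrarily long paths.

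Having obtained a finitely-branching infinite tree $T'$, I would invoke K\"onig's lemma to extract an infinite path $w_1, w_2, w_3, \ldots$ Each initial segment of this path is, by construction, a finite $\preccurlyeq$-zigzag, so the infinite sequence satisfies $w_1 \in K \cup L$, $w_i \preccurlyeq w_{i+1}$ for all $i$, and the alternation conditions for all $i$ — that is, it is precisely an infinite $\preccurlyeq$-zigzag between $K$ and $L$. This contradicts the assumption that no infinite $\preccurlyeq$-zigzag exists, completing the contrapositive and hence establishing that the absence of an infinite zigzag forces a uniform bound $k$ on the lengths of all finite zigzags.

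One subtlety I would flag and handle carefully is the precise definition of the tree so that it is genuinely finitely branching after pruning and so that distinct paths correspond to distinct zigzags: I would have the root be a formal symbol (not a word), its children range over the finitely many $\preccurlyeq$-minimal elements of $K \cup L$, and a node's children range over the $\preccurlyeq$-minimal words (in the appropriate one of $K$ or $L$ forced by alternation) that lie $\preccurlyeq$-above the current word. The main thing to verify is that this pruned tree still has arbitrarily long paths, which follows from the unbounded-length hypothesis combined with the safety-of-minimal-witnesses observation above. The rest is a routine application of K\"onig's lemma.
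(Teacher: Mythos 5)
Your proposal is correct and takes essentially the same route as the paper's proof: argue the contrapositive, arrange zigzags as paths of a tree, use the WQO to make the tree finitely branching, and apply K\"onig's lemma to extract an infinite $\preccurlyeq$-zigzag. The paper implements the finite-branching step slightly differently --- it restricts to \emph{densest} zigzags built from canonical representatives of $\preccurlyeq$-equivalence classes and shows that an infinitely branching node would yield a denser zigzag --- whereas you prune each node's children to the finitely many $\preccurlyeq$-minimal extensions up to equivalence; these are interchangeable realizations of the same idea, and your transitivity observation (a smaller replacement still lies below all later elements of the zigzag) is exactly what is needed for the pruned tree to retain arbitrarily long paths.
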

If there is no infinite $\preccurlyeq$-zigzag, then we can put a bound
on the maximal length of zigzags by
Lemma~\ref{lem:bound_for_zigzags}. This bound actually has a close
correspondence to the number of ``layers'' we need to separate $K$ and $L$. 
\begin{lemma}\label{lem:zigzag_to_nonseparability}
  Let $\preccurlyeq$ be a WQO on words and assume that 
  there is no infinite $\preccurlyeq$-zigzag between languages $K$ and $L$.
  Then the languages $K$ and $L$ are layer-separable by  $\preccurlyeq$-closed languages.
\end{lemma}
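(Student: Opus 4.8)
The plan is to use Lemma~\ref{lem:bound_for_zigzags} to convert the absence of infinite $\preccurlyeq$-zigzags into a finite numerical invariant on words, and then read the layers off from this invariant. First I would observe that the hypothesis forces $K$ and $L$ to be disjoint: if some $w\in K\cap L$ existed, the constant sequence $w,w,w,\ldots$ would be an infinite $\preccurlyeq$-zigzag, contradicting the assumption. Hence every word of $K\cup L$ lies in exactly one of the two languages, and for $w\in K\cup L$ I can define its \emph{rank} $r(w)$ as the length of the longest $\preccurlyeq$-zigzag between $K$ and $L$ that ends in $w$. This is well defined and bounded: every single word is a zigzag of length $1$, so $r(w)\ge 1$, and by Lemma~\ref{lem:bound_for_zigzags} there is a constant $k$ with $r(w)\le k$, so only finitely many rank values $1,\ldots,r_{\max}$ occur.

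The key technical observation I would establish is that $r$ is monotone along $\preccurlyeq$ and \emph{strictly} increases across a change of language: if $v\preccurlyeq w$ with $v,w\in K\cup L$, then $r(v)\le r(w)$, and moreover $r(v)<r(w)$ whenever $v$ and $w$ lie in different languages. Both parts follow by manipulating a longest zigzag $u_1,\ldots,u_m=v$ ending in $v$: if $w$ is in the same language as $v$ I replace the last word $v$ by $w$ (using $u_{m-1}\preccurlyeq v\preccurlyeq w$ and transitivity) to get a zigzag of the same length $m$; if $w$ is in the other language I append $w$ to get one of length $m+1$. A direct consequence is that two $\preccurlyeq$-comparable words of equal rank must belong to the same language. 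I would then partition $K\cup L$ into the classes $P_\rho^X=\{w\in X : r(w)=\rho\}$ for $\rho\in\{1,\ldots,r_{\max}\}$ and $X\in\{K,L\}$, and take as layers the upward closures $C_\rho^X=\closure^\preccurlyeq(P_\rho^X)$, each of which is $\preccurlyeq$-closed. I would order these from the highest rank downwards, i.e.\ $C_{r_{\max}}^K, C_{r_{\max}}^L, C_{r_{\max}-1}^K,\ldots, C_1^K, C_1^L$.

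Verifying condition~(2) is then easy: every $w\in K\cup L$ sits in $P_{r(w)}^X\subseteq C_{r(w)}^X$, so the union of all layers contains $K\cup L$ entirely. The heart of the argument, and what I expect to be the main obstacle, is condition~(1): I must show that each layer, after subtracting the earlier ones, meets only one of $K$ and $L$. Fix a layer $C_\rho^X$ and a word $w\in C_\rho^X\cap(K\cup L)$ that survives the subtraction. Since $w\in C_\rho^X$ there is $v\in P_\rho^X$ with $v\preccurlyeq w$, so $r(w)\ge\rho$ by monotonicity. If $r(w)>\rho$, then $w$ already lies in $C_{r(w)}^{Z}$ for its own language $Z$, which is an earlier (higher-rank) layer, so $w$ would have been subtracted; hence $r(w)=\rho$. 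Now $v\preccurlyeq w$ with $r(v)=\rho=r(w)$ forces $v$ and $w$ into the same language $X$ by the comparable-equal-rank observation, so $w\in X$. Thus the surviving part of $C_\rho^X$ is contained in $X$, giving condition~(1). The delicate point is precisely that same-rank words of different languages are $\preccurlyeq$-incomparable, so the closure $C_\rho^X$ cannot pull in a word of the opposite language at the same rank; this is exactly where the strict-increase-across-languages property does the work, and getting that interaction right is the crux of the proof.
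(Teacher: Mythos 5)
Your proof is correct, but it stratifies $K\cup L$ by an invariant dual to the paper's, so the two arguments genuinely differ. The paper peels from the top down: it sets $K_0=K$, $L_0=L$ and repeatedly removes $\layer(K_i,L_i)$, the words of $K_i$ with \emph{no} word of $L_i$ above them (symmetrically for $L_i$), proving by induction that the maximal zigzag length between $K_i$ and $L_i$ drops by at least one per round, so after $B$ rounds (with $B$ from Lemma~\ref{lem:bound_for_zigzags}) both residual languages are empty; the separating sequence is $S_{(K_0,L_0)}, S_{(L_0,K_0)}, S_{(K_1,L_1)},\ldots$ in peel order. In effect the paper's stratum of $w$ is the length of the longest zigzag \emph{starting} at $w$ and climbing upward, whereas your rank is the longest zigzag \emph{ending} at $w$ from below; these give different partitions in general (take $w_1\preccurlyeq w_2\preccurlyeq w_3$ alternating $K,L,K$ plus an isolated $w'\in K$: the paper puts $w_3$ and $w'$ in the same first layer, while you assign them ranks $3$ and $1$). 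The dualization also changes the correctness mechanism: in the paper each layer $S_{(K_i,L_i)}$ is disjoint from the residual opposite language outright (Claim~\ref{claim:one_layer}), whereas your closures $C_\rho^X$ may contain words of both languages, and you instead rely on the strict increase of rank across a language change so that any opposite-language word in $C_\rho^X$ has strictly larger rank and is removed by an earlier layer. Both routes use Lemma~\ref{lem:bound_for_zigzags} only to bound the number of strata; yours buys a more local, pointwise-monotone verification in place of the paper's induction over shrinking language pairs, at the small cost of needing the up-front observation that $K$ and $L$ are disjoint (which, as you note, the hypothesis gives for free).
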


\section{Testing Separability by Piecewise Testable Languages}\label{sec:testing}\label{sec:ptime}
\makeatletter{}Whereas Section 3 proves a result for general WQOs, we focus in this
section exclusively on the ordering $\preceq$ of
subsequences. Therefore, if we say \emph{zigzag} in this section, we always mean
\emph{$\preceq$-zigzag}. We show here how to decide the existence of
an infinite zigzag between two regular word languages, given by
their regular expressions or NFAs, in polynomial time. 
According to
Theorem~\ref{theo:general-characterization}, this is equivalent to
deciding if the two languages can be separated by a piecewise testable
language.

To this end, we first prove a decomposition result that is reminiscent
of a result of Almeida (\cite{Almeida-jpaa90}, Theorem 4.1
in~\cite{AlmeidaCZ-jpaa08}). We show that, if there is an infinite
zigzag between regular languages, then there is an infinite zigzag of
a special form and in which every word can be decomposed in some
synchronized manner. We can find these special forms of zigzags in
polynomial time in the NFAs for the languages. The main features are
that our algorithm runs exponentially faster in the alphabet size
than the current state-of-the-art \cite{AlmeidaZ-ita97} and that our algorithm and its proof
of correctness do not require knowledge of the algebraic perspective
on regular languages.

A regular language is a \emph{cycle language} if it is of the form $u
(v)^* w$, where $u, v, w$ are words and $(\Al(u) \cup \Al(w))
\subseteq \Al(v)$. We say that $v$ is the \emph{cycle} of the language
and that $\Al(v)$ is its \emph{cycle alphabet}.
Regular languages $\LA$ and $\LB$ are \emph{synchronized in one step} if they
are of one of the following forms:
\begin{itemize}
  \item $\LA = \LB = \{w\}$, that is, they are the same singleton word, or
  \item $\LA$ and $\LB$ are cycle languages with equal cycle alphabets.
\end{itemize}
We say that regular languages $\LA$ and $\LB$ are \emph{synchronized}
if they are of the form $\LA = D^{\A}_1 D^{\A}_2 \ldots D^{\A}_k$ and
$\LB = D^{\B}_1 D^{\B}_2 \ldots D^{\B}_k$
where, for all $1\le i\le k$, languages $D^{\A}_i$ and $D^{\B}_i$ are
synchronized in one step.  So, languages are synchronized if they can
be decomposed into (equally many) components that can be synchronized
in one step. Notice that synchronized languages are always non-empty.

\begin{example}\label{ex:synchronization}
  Languages $\LA=a (ba)^* aab \, ca \, bb (bc)^*$ and $\LB=b (aab)^*
  ba \, ca \, cc (cbc)^* b$ are synchronized.  Indeed, $\LA=D^{\A}_1
  D^{\A}_2 D^{\A}_3$ and $\LB=D^{\B}_1 D^{\B}_2 D^{\B}_3$ for
  $D^{\A}_1 = a (ba)^* aab$, $D^{\A}_2 = ca$, $D^{\A}_3 = bb (cb)^*$
  and $D^{\B}_1 = b (aab)^* ba$, $D^{\B}_2 = ca$, and $D^{\B}_3 = cc
  (cbc)^* b$.
\end{example}

The next lemma shows that, in order to search for infinite zigzags, it
suffices to search for synchronized sublanguages. The proof goes
through a sequence of lemmas that gradually shows how the sublanguages
of $\LA$ and $\LB$ can be made more and more specific.
\begin{lemma}[Synchronization / Decomposition]\label{lem:synchronization}
  There is an infinite zigzag between regular languages $\LA$ and
  $\LB$ if and only if there exist synchronized languages $\KA
  \subseteq \LA$ and $\KB \subseteq \LB$.
\end{lemma}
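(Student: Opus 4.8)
The plan is to prove the two implications separately; the ``$\Leftarrow$'' direction is a direct pumping construction, while ``$\Rightarrow$'' carries all of the combinatorial weight and I would attack it by a chain of refinements. For ``$\Leftarrow$'', assume we are given synchronized $\KA = D^{\A}_1 \cdots D^{\A}_k \subseteq \LA$ and $\KB = D^{\B}_1 \cdots D^{\B}_k \subseteq \LB$. If $\LA \cap \LB \neq \emptyset$ there is already the trivial infinite zigzag $w,w,w,\ldots$ for $w \in \LA \cap \LB$, so assume $\LA$ and $\LB$ are disjoint. I would then build an infinite chain $z_1 \preceq z_2 \preceq \cdots$ with $z_{2t-1} \in \KA$ and $z_{2t} \in \KB$ by pumping componentwise. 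The single fact required is that for a cycle language $u(v)^{*}w$ and any word $x$ with $\Al(x) \subseteq \Al(v)$ one has $x \preceq u(v)^{|x|}w$, since the $j$-th letter of $x$ can be matched inside the $j$-th copy of $v$ (here $\Al(u) \cup \Al(w) \subseteq \Al(v)$ and $\Al(v^{\A}_i) = \Al(v^{\B}_i)$ are exactly what make the two sides interchangeable). Leaving singleton components fixed and choosing the cycle exponents large enough on the side being extended makes each $z_j$ a subsequence of $z_{j+1}$, because $\preceq$ is preserved under concatenation of the equally many components. Disjointness of $\LA$ and $\LB$ makes the alternation conditions hold automatically, and $\KA \subseteq \LA$, $\KB \subseteq \LB$ turn this into an infinite zigzag between $\LA$ and $\LB$.

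For ``$\Rightarrow$'' I would start from an arbitrary infinite zigzag between $\LA$ and $\LB$ (disjoint case; the non-disjoint case gives a common singleton, which is synchronized in one step) and make the hosting sublanguages progressively more special. The first step is to reduce to union-free languages: writing $\LA$ and $\LB$ as finite unions of union-free languages and noting that the zigzag distributes its $\LA$-words over finitely many pieces and its $\LB$-words over finitely many pieces, a pigeonhole argument picks one infinite class on each side and interleaves them (using transitivity of $\preceq$) into an infinite zigzag between a single union-free $\LA_0 \subseteq \LA$ and a single union-free $\LB_0 \subseteq \LB$. A further WQO-based refinement (via Higman's Lemma applied to the loop-unrolling data) then lets me assume a non-nested, lasso-shaped skeleton, so that every $\LA_0$-word reads as $u_0 v_1^{e_1} u_1 \cdots v_m^{e_m} u_m$ along one fixed skeleton and every $\LB_0$-word along another, and I pass to an infinite increasing sub-zigzag in which the same set of loops grows unboundedly while the rest stay frozen.

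To locate the synchronized structure inside this stabilized zigzag $w_1 \preceq w_2 \preceq \cdots$, I would use the key monotonicity observation that the occurrence count $\#_a(w_i)$ of each symbol $a$ is non-decreasing along the chain, so each symbol is either \emph{bounded} (its count stabilizes) or \emph{unbounded} (its count tends to $\infty$); since the $\LA$-indices and $\LB$-indices are both cofinal, this bounded/unbounded partition is the same on both sides. For all large $i$ the bounded letters occur a fixed number of times and, being matched bijectively and order-preservingly by the subsequence embeddings, they spell one and the same skeleton word $f = c_1 \cdots c_r$ in every large $\LA_0$-word and in every large $\LB_0$-word. I would then take each $c_j$ as a common singleton component $D^{\A}_j = D^{\B}_j = \{c_j\}$, and the growing stretches of unbounded letters lying between consecutive skeleton letters as the cycle components.

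The decisive step, and where I expect the main obstacle, is to realize each such growing stretch as a genuine cycle language $u(v)^{*}w$ with $\Al(u) \cup \Al(w) \subseteq \Al(v)$ \emph{and} to guarantee that the cycle alphabets of the matched $\A$- and $\B$-stretches are \emph{equal}. This forces a blockwise analysis: one must argue that a letter occurring unboundedly often inside a given stretch on the $\A$-side must occur unboundedly often inside the corresponding stretch on the $\B$-side and conversely, which is exactly what the two-directional subsequence relations of the zigzag supply once combined with the monotone occurrence counts; the bounded ``boundary'' letters inside a stretch then have to be absorbed into the cycle so that the alphabet-containment condition holds. Choosing the exponents along a single WQO-increasing subchain to pin down cycle words $v$ with the prescribed alphabets yields synchronized $\KA \subseteq \LA$ and $\KB \subseteq \LB$, completing the proof.
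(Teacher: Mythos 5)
Your ``$\Leftarrow$'' direction is correct and matches what the paper dismisses as easy, and the opening moves of your ``$\Rightarrow$'' direction (pigeonholing the zigzag onto union-free pieces, flattening to star depth one, stabilizing loop exponents along a subchain) run parallel to the paper's Lemmas~\ref{lem:embedded}, \ref{lem:union_free_embedded} and~\ref{lem:star_depth_one}. But the step you yourself flag as the main obstacle is a genuine gap, and the concrete mechanism you propose for it fails. A stretch of unbounded letters between two consecutive skeleton letters need \emph{not} be realizable as a single cycle language: take $\LA = a(ab)^*(cd)^*$ and $\LB = b(ab)^*(cd)^*$. These are disjoint and admit an infinite zigzag ($a(ab)^n(cd)^n \preceq b(ab)^{n+1}(cd)^{n+1} \preceq a(ab)^{n+2}(cd)^{n+2} \preceq \cdots$), every symbol occurs unboundedly, so your skeleton of bounded letters is empty and the unique stretch is the whole word. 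Yet any infinite cycle language $u(v)^*w$ contained in $\LA$ must have $\Al(v) \subseteq \{a,b\}$ or $\Al(v) \subseteq \{c,d\}$ (since $uv^2w \in \LA$ forbids an $a$ or $b$ after a $c$ or $d$), so the stretch, whose alphabet is $\{a,b,c,d\}$, cannot be one cycle component: it must be split into several aligned cycle components, in the right order and with componentwise equal cycle alphabets, \emph{on both sides simultaneously}. Proving that such an aligned splitting exists is exactly the hard content of the lemma, and your proposal replaces it with the assertion that the two-directional embeddings ``supply'' it; note that in the example even your per-stretch alphabet comparison is uninformative, since both stretches have the same alphabet. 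Your fallback of absorbing bounded boundary letters ``into the cycle'' is also not available: enlarging $\Al(v)$ requires $v$ to be pumpable inside $\LA$ with those letters, which the expression need not permit; such letters must instead become singleton components, and you then owe an argument that the other side has matching occurrences at matching positions.

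The paper closes precisely this hole with machinery your plan lacks. Redundant loops are first eliminated (Lemma~\ref{lem:canceling_a_loop}) --- without this normalization the loop counts on the two sides need not agree; compare $(ab)^*$ with $(ab)^*(ba)^*$ --- then loops are unfolded so that the expressions become \emph{saturated}, which is what makes $\Sigma_0$-decompositions exist at all (Lemma~\ref{lem:unfolding_loops}; the paper's example $(ab)^*a(ac)^*$ shows saturation is not cosmetic). The decisive Lemma~\ref{lem:decompositions} then proves, for a \emph{loop-maximal} alphabet $\Sigma_0$, via pumping every $\Sigma_0$-loop $m+1$ times and a pigeonhole-plus-injectivity argument on positions in the other expression, that the numbers of $\Sigma_0$-loops coincide and that the interleaved remainder pieces are again mutually embeddable and saturated; an induction on the total number of loops recurses into those pieces. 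Your occurrence-count skeleton could conceivably be developed into an alternative word-level recursion (per-stretch counts are also monotone along the chain, so each stretch spawns a refined sub-skeleton), but as written the proposal performs a single pass, asserts one cycle per stretch and equality of stretch alphabets, and therefore does not yet prove the forward direction.
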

We now use this result to obtain a polynomial-time algorithm solving
our problem. The first step is to define what it means for NFAs
to contain synchronized sublanguages.

For an NFA $\A$ over an alphabet $\Sigma$, two states $p$, $q$, and
a word $w \in \Sigma^*$, we write $p \trans{w} q$ if $q \in
\delta^*(p,w)$ or, in other words, the automaton can go from state $p$
to state $q$ by reading $w$.  For $\Sigma_0\subseteq\Sigma$, states
$p$ and $q$ are \emph{$\Sigma_0$-connected} in $\A$ if there exists a
word $uvw \in \Sigma_0^*$ such that:
\begin{enumerate}
\item $\Al(v) = \Sigma_0$ and
\item there is a state $m$ such that $p \trans{u} m$, $m \trans{v} m$,
  and $m \trans{w} q$.
\end{enumerate}

Consider two NFAs $\A = (Q^{\A}, \Sigma, \delta^{\A}, q_0^{\A}, F^{\A})$ and 
$\B = (Q^{\B}, \Sigma, \delta^{\B}, q_0^{\B}, F^{\B})$.
Let $(q^\A, q^\B)$ and $(\bar{q}^\A, \bar{q}^\B)$ be in $Q^{\A} \times Q^{\B}$.
We say that $(q^\A, q^\B)$ and $(\bar{q}^\A, \bar{q}^\B)$ are
\emph{synchronizable in one step}\label{def:ndfsynch} if one of the
following situations occurs:
\begin{itemize}
\item there exists a symbol $a$ in $\Sigma$ such that $q^\A \trans{a}
  \bar{q}^\A$ and $q^\B \trans{a} \bar{q}^\B$,
\item there exists an alphabet $\Sigma_0 \subseteq \Sigma$ such that
  $q^{\A}$ and $\bar{q}^{\A}$ are $\Sigma_0$-connected in $\A$ and
  $q^{\B}$ and $\bar{q}^{\B}$ are $\Sigma_0$-connected in $\B$.
\end{itemize}
We say that automata $\A$ and $\B$ are \emph{synchronizable} if there
exists a sequence of pairs $(q^{\A}_0, q^{\B}_0), \ldots, (q^{\A}_k, q^{\B}_k) \in Q^{\A} \times Q^{\B}$ such that:
\begin{enumerate}
\item for all $0 \leq i < k$, $(q^{\A}_i, q^{\B}_i)$ and
  $(q^{\A}_{i+1}, q^{\B}_{i+1})$ are synchronizable in one step;
\item states $q^\A_0$ and $q^\B_0$ are initial states of $\A$ and
  $\B$, respectively; and
\item states $q^\A_k$ and $q^\B_k$ are accepting states of $\A$ and $\B$,
  respectively.
\end{enumerate}

Notice that if the automata $\A$ and $\B$ are synchronizable, then the
languages $L(\A)$ and $L(\B)$ are not necessarily synchronized, only
some of its sublanguages are necessarily synchronized.
\begin{lemma}[Synchronizability of automata]\label{lem:synchronizability}
  For two NFAs $\A$ and $\B$, the following conditions are equivalent.
  \begin{enumerate}
  \item Automata $\A$ and $\B$ are synchronizable.
  \item There exist synchronized languages $\KA \subseteq L(\A)$ and $\KB \subseteq L(\B)$.
  \end{enumerate}
\end{lemma}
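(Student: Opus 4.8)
The plan is to prove the two implications directly from the definitions: for (1) $\Rightarrow$ (2) I read off a synchronized decomposition from the witnessing sequence of state pairs, and for (2) $\Rightarrow$ (1) I run the automata on carefully chosen words and extract ``checkpoint'' states. Throughout, a cycle component will be turned into (or recovered from) a $\Sigma_0$-connectedness witness, and the only real work is a pumping argument in the backward direction.

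For (1) $\Rightarrow$ (2), suppose $\A$ and $\B$ are synchronizable via a sequence $(q^{\A}_0, q^{\B}_0), \ldots, (q^{\A}_k, q^{\B}_k)$. I build components $D^{\A}_i, D^{\B}_i$ for $1 \le i \le k$ by inspecting how the pairs $(q^{\A}_{i-1}, q^{\B}_{i-1})$ and $(q^{\A}_i, q^{\B}_i)$ are synchronizable in one step. If they are synchronized by a common symbol $a$, I set $D^{\A}_i = D^{\B}_i = \{a\}$, a singleton synchronized in one step. If they are synchronized through a common $\Sigma_0$, then $\Sigma_0$-connectedness in $\A$ furnishes words $u^{\A}, v^{\A}, w^{\A}$ and a state $m^{\A}$ with $q^{\A}_{i-1} \trans{u^{\A}} m^{\A} \trans{v^{\A}} m^{\A} \trans{w^{\A}} q^{\A}_i$, where $\Al(v^{\A}) = \Sigma_0$ and $u^{\A} v^{\A} w^{\A} \in \Sigma_0^*$; I set $D^{\A}_i = u^{\A} (v^{\A})^* w^{\A}$ and symmetrically $D^{\B}_i$. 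Since $u^{\A} v^{\A} w^{\A} \in \Sigma_0^*$ we get $\Al(u^{\A}) \cup \Al(w^{\A}) \subseteq \Sigma_0 = \Al(v^{\A})$, so $D^{\A}_i$ is a cycle language with cycle alphabet $\Sigma_0$; the same holds for $D^{\B}_i$, so the pair is synchronized in one step. Then $\KA = D^{\A}_1 \cdots D^{\A}_k$ and $\KB = D^{\B}_1 \cdots D^{\B}_k$ are synchronized, and since reading any word of $D^{\A}_i$ takes $\A$ from $q^{\A}_{i-1}$ to $q^{\A}_i$ (using $m^{\A} \trans{(v^{\A})^n} m^{\A}$ in the cycle case), while $q^{\A}_0$ is initial and $q^{\A}_k$ is accepting, every word of $\KA$ is accepted; hence $\KA \subseteq L(\A)$, and likewise $\KB \subseteq L(\B)$.

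For (2) $\Rightarrow$ (1), suppose $\KA \subseteq L(\A)$ and $\KB \subseteq L(\B)$ are synchronized, say $\KA = D^{\A}_1 \cdots D^{\A}_k$ and $\KB = D^{\B}_1 \cdots D^{\B}_k$ with each pair synchronized in one step. I pick one word $x^{\A} = y^{\A}_1 \cdots y^{\A}_k \in \KA$ with $y^{\A}_i \in D^{\A}_i$, choosing for every cycle component the cycle exponent to be at least $|Q^{\A}|$; I choose $x^{\B} \in \KB$ analogously with cycle exponents at least $|Q^{\B}|$. Since $\KA \subseteq L(\A)$, I fix an accepting run of $\A$ on $x^{\A}$ and let $q^{\A}_i$ be the state reached after reading $y^{\A}_1 \cdots y^{\A}_i$, so that $q^{\A}_0$ is initial and $q^{\A}_k$ is accepting; I define $q^{\B}_i$ analogously from an accepting run of $\B$ on $x^{\B}$. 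It remains to check that each consecutive pair is synchronizable in one step: for a singleton component $D^{\A}_i = D^{\B}_i = \{a\}$ we have $y^{\A}_i = y^{\B}_i = a$, so the symbol case applies, and a cycle component is handled next.

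Here lies the main obstacle, resolved by pigeonhole. For a cycle component with common cycle alphabet $\Sigma_0$, the run of $\A$ traverses $y^{\A}_i = u^{\A} (v^{\A})^n w^{\A}$ from $q^{\A}_{i-1}$ to $q^{\A}_i$ with $n \ge |Q^{\A}|$; recording the states visited at the $n+1$ boundaries between successive copies of $v^{\A}$, two must coincide, yielding a state $m^{\A}$ and exponents $a < b$ with $q^{\A}_{i-1} \trans{u^{\A} (v^{\A})^a} m^{\A}$, $m^{\A} \trans{(v^{\A})^{b-a}} m^{\A}$, and $m^{\A} \trans{(v^{\A})^{n-b} w^{\A}} q^{\A}_i$. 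Since $b > a$ the middle word $(v^{\A})^{b-a}$ has alphabet exactly $\Sigma_0$, and the whole word lies in $\Sigma_0^*$, so $q^{\A}_{i-1}$ and $q^{\A}_i$ are $\Sigma_0$-connected in $\A$; the identical argument in $\B$, legitimate because the cycle alphabets of $D^{\A}_i$ and $D^{\B}_i$ coincide, shows $q^{\B}_{i-1}$ and $q^{\B}_i$ are $\Sigma_0$-connected in $\B$, so the second case applies. Hence $(q^{\A}_0, q^{\B}_0), \ldots, (q^{\A}_k, q^{\B}_k)$ witnesses synchronizability. The degenerate case $k = 0$, where $\KA = \KB = \{\epsilon\}$ and the initial states must be accepting, is immediate, and I expect no difficulty there.
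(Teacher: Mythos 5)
Your proof is correct and follows essentially the same route as the paper's: the paper likewise treats (1) $\Rightarrow$ (2) as immediate, and for (2) $\Rightarrow$ (1) it fixes the $N$-th canonical words with $N = \max(|Q^{\A}|, |Q^{\B}|)$, reads off the boundary states of accepting runs on their concatenations, and applies the same pigeonhole argument to the $N+1$ states visited between consecutive copies of the cycle word to extract the repeated state $m$ witnessing $\Sigma_0$-connectedness.

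One small slip to repair: a singleton component synchronized in one step is a singleton \emph{word} $\{w\}$, not necessarily a single symbol $\{a\}$ as you assume. When $|w| > 1$, the pair $(q^{\A}_{i-1}, q^{\B}_{i-1})$, $(q^{\A}_i, q^{\B}_i)$ need not be synchronizable in one step (the symbol case of the definition covers only one letter); instead you must interpolate the pairs of states reached after each symbol of $w$ in the two runs, each consecutive interpolated pair then being synchronizable in one step via that symbol. The definition of synchronizability permits such longer witnessing sequences, so the repair is immediate, and the case $w = \epsilon$ is handled by simply dropping the repeated pair. The paper glosses over this point as well, asserting only that these pairs are ``clearly synchronizable'' (note: not ``in one step''), which implicitly means exactly this chaining.
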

The intuition behind Lemma~\ref{lem:synchronizability} is depicted in
Figure~\ref{fig:synch}.  The idea is that there is a sequence
$(q_0^\A,q_0^\B), \ldots, (q_k^\A,q_k^\B)$ that witnesses that $\A$
and $\B$ are synchronizable. The pairs of paths that have the same
style of lines depict parts of the automaton that are synchronizable in one step. In
particular, the dotted path from $q_1^\A$ to $q_j^\A$ has the same
word as the one from $q_1^\B$ to $q_j^\B$. The other two paths contain
at least one loop.

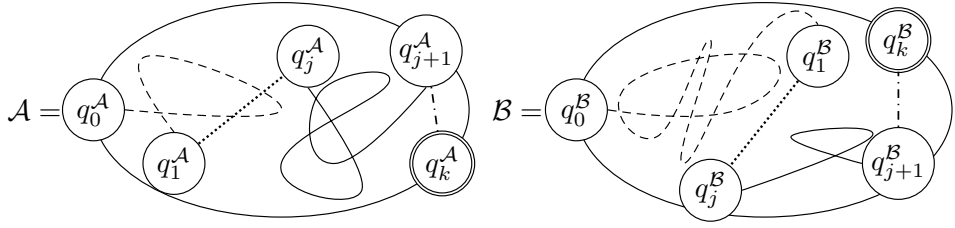
\begin{figure}[t]
  \centering
  \resizebox{\linewidth}{!}{
  \begin{tikzpicture}[scale = 0.65, inner sep = 0.3mm]
    \tikzstyle{vertex} = [circle, draw, fill = white, minimum size=7.5mm]
			  
    \draw (3.5, 2) ellipse [x radius = 3.5cm, y radius = 2cm];
    \draw (12.5, 2) ellipse [x radius = 3.5cm, y radius = 2cm];
    
    \node at (-1.1, 2) {$\A =$};
    \node at (7.9, 2) {$\B =$};
    
    \node[vertex] (a1) at (0, 2) {$q^\A_0$};
    \node[vertex] (a2) at (1.5, 1) {$q^\A_1$};
    \node[vertex] (a3) at (4, 3) {$q^\A_j$};
    \node[vertex] (a4) at (6.2, 3.1) {$q^\A_{j+1}$};
    \node[vertex, double] (a5) at (6.5, 1) {$q^\A_k$};
    
    \node[vertex] (b1) at (9, 2) {$q^\B_0$};
    \node[vertex] (b2) at (13.5, 3) {$q^\B_1$};
    \node[vertex] (b3) at (11.5, 0.5) {$q^\B_j$};
    \node[vertex] (b4) at (15, 1) {$q^\B_{j+1}$};
    \node[vertex, double] (b5) at (15, 3.3) {$q^\B_k$};    

    \draw [densely dashed] plot [smooth, tension = 1] coordinates {(a1.east) (3.5, 2) (1, 3) (a2.north)};
    \draw [densely dotted, thick] (a2) -- (a3);
    \draw plot [smooth, tension = 1] coordinates {(a3.south) (5, 0.5) (3.5, 1) (5.5, 2.5) (4.2, 2.3) (4.5, 1) (a4.south)};
        \draw [line width=.6pt,dash pattern=on .5pt off 2pt on 3pt off 2pt] (a4) -- (a5);

    \draw [densely dashed] plot [smooth, tension = 1]
      coordinates {(b1.east) (12, 2) (12.5, 3) (10, 2.5) (10.5, 1.5) (11.5, 3.3) (11, 1) (12.5, 3.5) (b2.north)};
    \draw [densely dotted, thick] (b2) -- (b3);
    \draw plot [smooth, tension = 1] coordinates {(b3.east) (14.5, 1.5) (13, 1.5) (b4.west)};
    \draw [line width=.6pt,dash pattern=on .5pt off 2pt on 3pt off 2pt] (b4) -- (b5);    
  \end{tikzpicture}}
  \caption{Synchronization of automata $\A$ and $\B$.}
  \label{fig:synch}
\end{figure}

The following theorem states that synchronizability in automata
captures exactly the existence of infinite zigzags between their
languages. The theorem statement uses Theorem~\ref{theo:general-characterization} for the
connection between infinite zigzags and separability.
\begin{theorem}\label{thm:characterization}
  Let $\A$ and $\B$ be two NFAs.  Then the languages $L(\A)$ and
  $L(\B)$ are separable by a piecewise testable language if and only if
  the automata $\A$ and $\B$ are not synchronizable.
\end{theorem}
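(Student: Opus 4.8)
The plan is to chain together the results that have already been established in the excerpt. By Theorem~\ref{theo:general-characterization} (applied to the WQO $\preceq$), the languages $L(\A)$ and $L(\B)$ are separable by a piecewise testable language if and only if there is no infinite zigzag between them. By Lemma~\ref{lem:synchronization}, there is an infinite zigzag between $L(\A)$ and $L(\B)$ if and only if there exist synchronized languages $\KA \subseteq L(\A)$ and $\KB \subseteq L(\B)$. Finally, by Lemma~\ref{lem:synchronizability}, such synchronized sublanguages exist if and only if the automata $\A$ and $\B$ are synchronizable. Composing these three equivalences gives: $L(\A)$ and $L(\B)$ are separable by a piecewise testable language iff there is no infinite zigzag iff there do \emph{not} exist synchronized sublanguages iff $\A$ and $\B$ are \emph{not} synchronizable, which is exactly the claim.

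Concretely, I would write the proof as a short string of ``if and only if'' steps, negating the middle biconditional carefully. Starting from the right-hand side of the theorem, ``$\A$ and $\B$ are not synchronizable'' is equivalent (by Lemma~\ref{lem:synchronizability}, contrapositive) to ``there do not exist synchronized languages $\KA \subseteq L(\A)$ and $\KB \subseteq L(\B)$'', which by Lemma~\ref{lem:synchronization} (contrapositive) is equivalent to ``there is no infinite zigzag between $L(\A)$ and $L(\B)$'', which by the equivalence of conditions (1) and (3) in Theorem~\ref{theo:general-characterization} is equivalent to ``$L(\A)$ and $L(\B)$ are separable by a boolean combination of $\preceq$-closed languages''. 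The only remaining point is to identify this last notion with separability by a \emph{piecewise testable} language.

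That identification is the one place requiring a sentence of justification rather than a bare citation: a piecewise testable language is \emph{defined} in the preliminaries as a finite boolean combination of $\preceq$-closed languages (equivalently, of languages $\Sigma^* a_1 \Sigma^* \cdots \Sigma^* a_n \Sigma^*$), so separability by a boolean combination of $\preceq$-closed languages is literally separability by a piecewise testable language. Since $\preceq$ is a WQO on words by Higman's Lemma, Theorem~\ref{theo:general-characterization} applies verbatim with $\preccurlyeq = \preceq$.

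I do not expect any real obstacle here: all the substantive work has been isolated into the three prior results, and the theorem is essentially their composition. The only thing to be careful about is bookkeeping of the negations, since the theorem equates separability with \emph{non}-synchronizability while the two lemmas are phrased positively (existence of a zigzag, existence of synchronized sublanguages); I would therefore state each intermediate equivalence in its contrapositive form so that the polarities line up cleanly and the final ``not synchronizable'' matches the ``no infinite zigzag'' at the top of the chain.
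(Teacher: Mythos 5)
Your proposal is correct and matches the paper's own proof exactly: the paper likewise derives the theorem by chaining Theorem~\ref{theo:general-characterization} (equivalence of separability by piecewise testable languages with the nonexistence of an infinite zigzag), Lemma~\ref{lem:synchronization}, and Lemma~\ref{lem:synchronizability}. Your added care with the negation bookkeeping and with noting that piecewise testable languages are by definition boolean combinations of $\preceq$-closed languages is sound but does not change the argument.
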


We can now show how the algorithm from \cite{AlmeidaZ-ita97} can be
improved to test in polynomial time whether two given NFAs are
synchronizable or not. Our algorithm computes quadruples of states
that are synchronizable in one step and by linking such quadruples
together so that they form a pair of paths as illustrated in
Figure~\ref{fig:synch}.
\begin{theorem}\label{lem:algorithm}
  Given two NFAs $\A$ and $\B$, it is possible to test in polynomial
  time whether $L(\A)$ and $L(\B)$ can be separated by a piecewise
  testable language. 
\end{theorem}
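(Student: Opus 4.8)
The plan is to reduce, via Theorem~\ref{thm:characterization}, to testing in polynomial time whether $\A$ and $\B$ are synchronizable, since $L(\A)$ and $L(\B)$ are separable by a piecewise testable language exactly when $\A$ and $\B$ are \emph{not} synchronizable. To this end I would build the graph $\graph$ whose vertices are the pairs $(q^\A, q^\B) \in Q^\A \times Q^\B$ and which has an edge from $(p,\bar p)$ to $(q,\bar q)$ precisely when these two pairs are synchronizable in one step. Synchronizability of $\A$ and $\B$ is then equivalent to the existence of a path in $\graph$ from $(q_0^\A, q_0^\B)$ to some pair of accepting states, which is a single reachability query. The vertex set and the reachability query are obviously polynomial, so the whole difficulty is to compute the edge relation---deciding ``synchronizable in one step''---in polynomial time. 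The first case (a common letter $a$ with $p \trans{a} q$ in $\A$ and $\bar p \trans{a} \bar q$ in $\B$) is immediate: loop over all label-matched pairs of transitions. The obstacle is the second case, which existentially quantifies over all $2^{|\Sigma|}$ subalphabets $\Sigma_0 \subseteq \Sigma$, and this is what must be defused.

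To avoid enumerating subalphabets, I would guess the two ``middle'' states $m \in Q^\A$ and $\bar m \in Q^\B$ of the two loops (only $|Q^\A|\cdot|Q^\B|$ choices) and analyse $\Sigma_0$-connectedness as the conjunction of two parts: a \emph{reachability part} (that $p$ reaches $m$ and $m$ reaches $q$ using only $\Sigma_0$-labelled edges, and likewise in $\B$), which is monotone increasing in $\Sigma_0$; and a \emph{cycle part} (that $m$ admits a closed walk using only, and all of, the letters of $\Sigma_0$). Writing $\mathcal{C}_\A(m)$ for the set of such cycle alphabets at $m$, the crucial observation is that $\mathcal{C}_\A(m)$ is closed under union---concatenating two closed walks at $m$ yields a closed walk whose alphabet is the union---so it has a unique maximum, namely the set of labels occurring on edges inside the strongly connected component of $m$ in $\A$. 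Consequently, for fixed $(m,\bar m)$ the family of common cycle alphabets $\mathcal{C}_\A(m) \cap \mathcal{C}_\B(\bar m)$ is again union-closed and has a unique maximum $\Gamma$. By monotonicity of the reachability part, if some common $\Sigma_0$ witnesses synchronizability in one step through $m$ and $\bar m$, then the larger $\Gamma$ does so as well. Hence it suffices to compute the single alphabet $\Gamma$ for each pair $(m, \bar m)$ and to test the reachability conditions only for $\Gamma$.

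It remains to compute $\Gamma$, the largest alphabet that is simultaneously a cycle alphabet at $m$ in $\A$ and at $\bar m$ in $\B$. I would do this by a greatest-fixpoint computation from above: start with $\Sigma_0 := \Sigma$ and repeatedly replace $\Sigma_0$ by the intersection of the label set of the strongly connected component of $m$ in $\A$ restricted to $\Sigma_0$-edges with the label set of the strongly connected component of $\bar m$ in $\B$ restricted to $\Sigma_0$-edges. This operator is monotone and never increases $\Sigma_0$, so it stabilizes after at most $|\Sigma|$ rounds, each round being a standard SCC computation. A common cycle alphabet is precisely a fixpoint of this operator, so the limit is the greatest common cycle alphabet $\Gamma$. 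With $\Gamma$ in hand I compute, by ordinary graph reachability in $\A$ and $\B$ restricted to $\Gamma$-edges, the states that reach $m$, are reached from $m$, reach $\bar m$, and are reached from $\bar m$, and declare every quadruple built from these four sets synchronizable in one step through $(m,\bar m)$. Ranging over all pairs $(m,\bar m)$ and adding the easy case-one edges yields the full edge relation of $\graph$ in polynomial time.

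I expect the main obstacle to be exactly this second case: the naive reading quantifies over exponentially many subalphabets, and the whole efficiency gain over \cite{AlmeidaZ-ita97} rests on replacing that search by the single maximal alphabet $\Gamma$. The two facts that make this work---union-closedness of cycle alphabets and monotonicity of reachability---are easy to state but must be proved carefully, and I would also need to verify that the top-down fixpoint iteration really returns $\max(\mathcal{C}_\A(m)\cap\mathcal{C}_\B(\bar m))$: every common cycle alphabet is a fixpoint lying below the top, and a monotone, non-increasing iteration from the top converges to the greatest such fixpoint. Once the edge relation of $\graph$ is shown correct, a single reachability query decides synchronizability, and Theorem~\ref{thm:characterization} turns this into the desired decision of separability by a piecewise testable language, with every step running in time polynomial in $|\A| + |\B| + |\Sigma|$.
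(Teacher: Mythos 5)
Your proposal is correct and follows essentially the same route as the paper: reduce via Theorem~\ref{thm:characterization} to reachability in the graph $\graph$ of one-step-synchronizable pairs, handle the subalphabet quantification by observing that saturated cycle alphabets at a pair of states are union-closed and hence have a unique maximum, and compute that maximum by the same top-down iteration of SCC-label intersections, which stabilizes within $|\Sigma|$ rounds. Your explicit monotone-greatest-fixpoint framing is just a slightly more abstract packaging of the paper's direct two-inclusion correctness argument for the same procedure.
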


\section{Asymmetric Separation and Suffix Order}\label{sec:other}\label{sec:others}
\makeatletter{}We present a bigger picture on efficient separations that are
relevant to the scenarios that motivate us. For example, we consider
what happens when we restrict the allowed boolean combinations of
languages. Technically, this means that separation is no longer symmetric.
Orthogonally, we also consider the suffix order
$\sufforder$ between strings in which $v \, \sufforder \, w$ if and only if
$v$ is a (not necessarily strict) suffix of $w$. An important technical difference with the
rest of the paper is that the suffix order is not a WQO. Indeed, the
suffix order $\sufforder$ has an infinite antichain, e.g., $a, ab,
abb, abbb, \ldots$ The results we present here for suffix order hold
true for prefix order as well.

Let $\F$ be a family of languages.  Language $K$ is \emph{separable
  from a language $L$ by $\F$} if there exists a language $S$ in $\F$
that separates $K$ from $L$, i.e., contains $K$ and does not intersect $L$.  Thus, if $L$ is closed
under complement, then $K$ is separable
from $L$ implies $L$ is separable from $K$.
The \emph{separation problem by $\F$} asks, given an
NFA for $K$ and an NFA for $L$, whether $K$ is separable from $L$ by
$\F$.

We consider separation by families of languages $\F(O,C)$, where $O$
(``order'') specifies the ordering relation and $C$ (``combinations'')
specifies how we are allowed to combine (upward) $O$-closed
languages. Concretely, $O$ is either the subsequence order $\preceq$
or the suffix order $\sufforder$. We allow $C$ to be one of
\emph{single}, \emph{unions}, or \emph{bc} (boolean combinations),
meaning that each language in $\F(O,C)$ is either the $O$-closure of a
single word, a finite union of the $O$-closures of single words, or a
finite boolean combination of the $O$-closures of single words. Thus,
$\F(\preceq,\textit{bc})$ is the family of piecewise testable
languages and $\F(\sufforder,\textit{bc})$ is the family of
suffix-testable languages.  With this convention in mind, the main
result of this section is to provide a complete complexity overview of
the six possible cases of separation by $\F(O,C)$. The case
$\F(\preceq,\text{bc})$ has been proved in Section~\ref{sec:ptime} and
the remaining ones are proved in the Appendix.
\begin{theorem}\label{theo:other}
  For $O \in \{\preceq,\sufforder\}$ and $C$ being one of single,
  unions, or boolean combinations, we have that the complexity of the
  separation problem by $\F(O,C)$ is as indicated in
  Table~\ref{tab:overview}.
\end{theorem}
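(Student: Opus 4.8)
The plan is to treat the six entries of Table~\ref{tab:overview} one at a time, showing that every case except $\F(\preceq,\textit{single})$ lies in polynomial time and that $\F(\preceq,\textit{single})$ is NP-complete. The two boolean-combination families are closed under complement, so for them the directional notion ``$K$ separable from $L$'' coincides with the symmetric one (if $S$ separates $L$ from $K$, its complement separates $K$ from $L$); hence $\F(\preceq,\textit{bc})$ is exactly the piecewise-testable case already settled by Theorem~\ref{lem:algorithm}, while $\F(\sufforder,\textit{bc})$ is handled below. For each remaining case I would first isolate a purely combinatorial characterization of directional separability and then show that the characterization is testable on the two NFAs $\A$ and $\B$ in polynomial time.

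For the subsequence order I would lean on Higman's Lemma. In the \emph{unions} case, the corollary stated in the preliminaries says every upward $\preceq$-closed language is \emph{already} a finite union of sets $\Sigma^*a_1\Sigma^*\cdots\Sigma^*a_n\Sigma^*$, so $\F(\preceq,\textit{unions})$ is precisely the family of all upward $\preceq$-closed languages. Thus $K$ is separable from $L$ iff the least upward-closed set containing $K$, namely $\closure^{\preceq}(K)$, avoids $L$; an NFA for $\closure^{\preceq}(K)$ is obtained from $\A$ by adding self-loops on every symbol at every state, so the test is a product-and-emptiness check and is polynomial. In the \emph{single} case a separator is $\closure^{\preceq}(w)$ for one word $w=a_1\cdots a_n$, meaning $w$ is a common subsequence of all of $K$ yet a subsequence of no word of $L$. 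For NP-membership I would guess $w$; assuming $K\neq\emptyset$ any valid $w$ is a subsequence of a shortest word of $K$, so $|w|\le|Q^{\A}|$, and both conditions are verified in polynomial time using the $(n+1)$-state DFA for $\closure^{\preceq}(w)$ and its complement. For NP-hardness I would reduce from longest common subsequence of a set of strings $s_1,\dots,s_m$ with threshold $k$: take $K=\{s_1,\dots,s_m\}$ and $L=\Sigma^{\le k-1}$; then $w$ is a subsequence of no word of $L$ exactly when $|w|\ge k$, so a separating $\closure^{\preceq}(w)$ exists iff the $s_i$ share a common subsequence of length at least $k$.

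The suffix cases are more delicate, since $\sufforder$ is not a WQO and Theorem~\ref{theo:general-characterization} is unavailable; each characterization must be argued by hand. The uniform device I would use is reversal: reversing both automata turns suffixes into prefixes, and the relevant closure language (suffixes of $L$, i.e.\ prefixes of $L^{\rev}$) then has a polynomial-size NFA obtained by making the appropriate states initial, which avoids determinization. For $\F(\sufforder,\textit{single})$ the common suffixes of $K$ form a chain, so there is a longest common suffix $w^\ast$ (of length at most $|Q^{\A}|$), computable in polynomial time; since shortening $w$ only enlarges $\Sigma^*w$, separability holds iff $\Sigma^*w^\ast$ is disjoint from $L$, one more emptiness test. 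For $\F(\sufforder,\textit{bc})$ I would show that $K$ and $L$ are separable iff the lengths of common suffixes of a $K$-word and an $L$-word are bounded, detecting an unbounded common suffix as a reachable cycle in the product of the two reversed automata, restricted to states that are forward-reachable in the originals. For $\F(\sufforder,\textit{unions})$ the key fact is that a finite union of good suffix-cones exists iff there is a bound $N$ so that every word of $K$ has a suffix of length at most $N$ that is not a suffix of any word of $L$; unboundedness is again detected by a reachable cycle in a product of $\A^{\rev}$ with the NFA for the suffixes of $L$.

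The main obstacle I anticipate is the suffix layer, and within it the $\F(\sufforder,\textit{unions})$ case. The subtlety is that $\Sigma^*w$ is not rich enough for every upward $\sufforder$-closed set to be a finite union (for instance $ba^\ast$ cannot be covered by finitely many good suffix-cones against $L=a^\ast$), so the bounded-good-suffix characterization is genuinely needed, and I must prove that unbounded shortest-good-suffix length is exactly what blocks a \emph{finite} union and then turn this into a correct reachable-cycle condition on polynomial-size nondeterministic products without ever determinizing. The NP-hardness reduction for $\F(\preceq,\textit{single})$ is the other pivotal point, but once the longest-common-subsequence encoding above is in place it is routine; the remaining upper bounds are straightforward product-automaton arguments.
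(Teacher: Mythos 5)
Your proposal follows essentially the same route as the paper on all six cases: the same LCS reduction of Maier for NP-hardness of $\F(\preceq,\text{single})$ (with the same guess-and-verify NP upper bound via the greedy DFA for $\closure^\preceq(w)$), the same self-loop construction for $\closure^\preceq(K)$ in the unions case, the same reduction of the bc cases to the symmetric problem via closure under complement, the same $\lcs(K)$ argument for $\F(\sufforder,\text{single})$, and the same reversal-plus-prefix-closure-plus-infiniteness test for bounded common suffixes. Your merged characterization for $\F(\sufforder,\text{unions})$ (``there is a bound $N$ such that every word of $K$ has a suffix of length at most $N$ that is not a suffix of any word of $L$'') is correct and is equivalent to the paper's two-condition version (Lemma~\ref{lem:suffixunion}: (1) no $w\in K$ is a suffix of a word of $L$, and (2) common suffix lengths are bounded); this is a mild repackaging, not a different proof.

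There is, however, one concrete gap in your algorithmic sketch for $\F(\sufforder,\text{unions})$: you claim that failure of your characterization ``is again detected by a reachable cycle'' in a product of $\A^{\rev}$ with the NFA for the suffixes of $L$. A cycle test detects only the failure mode in which common suffix lengths are unbounded; it misses the failure mode in which some $w\in K$ has \emph{no} good suffix at all, i.e., $w$ is itself a suffix of some word of $L$. That witness is finite and involves no cycle. For example, with $K=\{a\}$ and $L=\{ba\}$ there is no cycle anywhere (both languages are finite) and common suffixes have bounded length, yet $K$ is not separable from $L$: any $\Sigma^*v$ containing $a$ has $v\in\{\epsilon,a\}$, and both choices contain $ba$. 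So your procedure as stated would wrongly answer ``separable.'' The repair is exactly the paper's condition (1), tested separately as a plain intersection-emptiness check $(\Sigma^* K)\cap L=\emptyset$ (equivalently, acceptance in the $\A^{\rev}$-component of your product while the suffixes-of-$L$ component is still alive), alongside the cycle/infiniteness test for condition (2). You flagged this case as the anticipated obstacle but did not resolve it; with this one extra polynomial-time check added, your argument matches the paper's proof in full. (Your trimming ``restricted to states forward-reachable in the originals'' in the bc case is correct: those are precisely the useful states of the reversed automata, so the prefix-closure NFAs and the infinite-intersection test there are sound.)
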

Since the separation problem for prefix order is basically the same
as the separation for suffix order and has the same complexity we
didn't list it separately in the table. Furthermore, from
Lemma~\ref{lem:theo1-2} we immediately obtain that deciding
layer-separability for all six cases in Table~\ref{tab:overview} is in PTIME.

\begin{table}[t]
  \centering
  \begin{tabular}{|@{\hspace{3mm}}c@{\hspace{3mm}}||@{\hspace{5mm}}c@{\hspace{5mm}}|@{\hspace{5mm}}c@{\hspace{5mm}}|@{\hspace{2mm}}c@{\hspace{2mm}}|}
    \hline
    $\F(O,C)$ & single & unions & bc (boolean combinations) \\
    \hline \hline
    $\preceq$ (subsequence) & NP-complete & PTIME & PTIME \\
    \hline
    $\sufforder$ (suffix) & PTIME & PTIME & PTIME \\
    \hline
  \end{tabular}
  \caption{The complexity of deciding separability for
    regular languages $K$ and $L$.}
  \label{tab:overview}
\end{table}

\section{Conclusions and Further Questions}
\makeatletter{}Subsequence- and suffix languages seem to be very promising for
obtaining ``simple'' separations of regular languages, since we can
often efficiently decide if two given regular languages are separable
(Table~\ref{tab:overview}). Layer-separability is even in PTIME in all
cases. Looking back at our motivating scenarios,
the obvious next questions are: if a separation exists, can we
efficiently compute one? How large is it? 

If we look at the broader picture, we wonder if our characterization
of separability can be used in a wider context than regular languages
and subsequence ordering. Are there other cases where it can be used
lead to obtain efficient decision procedures? Another concrete
question is whether we can decide in polynomial time if a given NFA
defines a piecewise-testable language.  Furthermore, we are also
interested in efficient separation results by combinations of languages of the
form $\Sigma^* w_1 \Sigma^* \cdots \Sigma^* w_n$ or variants thereof.

\paragraph{Acknowledgments.}
We thank Jean-Eric Pin and Marc Zeitoun for patiently answering our
questions about the algebraic perspective on this problem. We are
grateful to Miko{\l}aj Boja{\'n}czyk, who pointed out the connection between layered
separability and boolean combinations. We also thank Piotr Hofman for
pleasant and insightful discussions about our proofs during his visit
to Bayreuth.

\appendix
\section*{Appendix}
\makeatletter{}\section*{Connection to the Algorithm of Almeida and Zeitoun}
Almeida and Zeitoun~\cite{AlmeidaZ-ita97} show that the following problem is in polynomial time:
\begin{description}
\item[Input:] Two DFAs $\A = (Q^\A, \Sigma, \delta^\A, q_0^\A, F^\A)$
  and $\B = (Q^\B, \Sigma, \delta^\B, q_0^\B, F^\B)$ with
  constant-size alphabet $\Sigma$.
\item[Problem:] Are $L(\A)$ and $L(\B)$ separable by a piecewise
  testable language?
\end{description}
A result by Almeida \cite{Almeida-pmd99} says that separability is
equivalent to computing the intersection of topological closures of
the regular languages that are to be separated. This is used by
Almeida and Zeitoun \cite{AlmeidaZ-ita97}, who prove that these
topological closures can be represented by a class of automata (going
beyond DFAs or NFAs) computable from the original automata. The main
differences with the present procedure are that the construction of
\cite{AlmeidaZ-ita97} is
\begin{enumerate}[(1)]
\item exponential in the
  size of the alphabet and
\item defined on DFAs
  rather than on NFAs.
\end{enumerate}
Actually, the exponential time bound w.r.t. the alphabet size has
already been observed by Almeida and Zeitoun in the conclusions of
their paper \cite{AlmeidaCZ-jpaa08}.
The reason why the algorithm from \cite{AlmeidaZ-ita97} is exponential
in the size of the alphabet is that its first step consists of adding,
to each loop that uses a subset $B$ of the alphabet $\Sigma$, a new
loop containing $B^{\omega}$. (See Definition 4.1 from
\cite{AlmeidaZ-ita97} -- the notation $B^{\omega}$ is borrowed from
that paper.) The number of these subsets can be
exponential. In fact, the algorithm from \cite{AlmeidaZ-ita97} first
adds these cycles to $\A$ and $\B$ separately and then (after some
more operations) compares the automata to each other. However, the
relevant cycles to add to $\A$ depend on $\B$.
\begin{example}
  Consider a language
  \[
  L(\A) = (a_1^* \cdots a_n^*)^*.
  \]
  Then, for every subset $S = \{i_1, \ldots, i_j\} \subseteq \{1, \ldots, n\}$,
  there exists a language
  \[
  L(\B_S) = (a_{i_1} \cdots a_{i_j})^*
  \]
  such that the intersection of the closures of the above languages contains $u^\omega$
  only for words $u$ such that $\Al(u) = \{a_{i_1}, \ldots, a_{i_j}\}$.
\end{example}
The example shows that if we compute the closure of $L(\A)$ without
looking simultaneously at $\B$ we have to keep all of the
exponentially many loops in order to be prepared for intersecting this
closure with the closure of any possible language $L(\B_S)$. In fact,
one needs to do more than na\"ively compute largest common
subsets of alphabets of loops that obviously correspond to each
other. We show how to do this while avoiding the exponent in $|\Sigma|$.

The following is a slightly less trivial example that shows how
alphabets of strongly connected components can correspond to each other.
\begin{example}
Consider languages
\[
(a^* b c^* d e^*)^* a c b^* a c (b a)^* c a (b c)^* b
\]
and
\[
(a b)^* d (a b^* d f^*)^* b (c (a b)^* c^* b^* a (c b)^*)^* b.
\]
These languages cannot be separated by a piecewise testable language.
The example of profinite word in the intersection of the closures is
\[
(a b d)^\omega c a c (a b)^\omega c a (b c)^\omega.
\]
(Again, the notation $m^\omega$ is borrowed from~\cite{AlmeidaZ-ita97} and is the standard
one for the unique idempotent power of element $m$ of the semigroup.)
\end{example}

\makeatletter{}\section*{Proofs of Section~\ref{sec:characterization}}

\begin{proofof}{Lemma~\ref{lem:theo1-2}}
  Let $\F$ be a family of
  languages closed under intersection and containing $\Sigma^*$. Then
  languages $K$ and $L$ are separable by a finite boolean combination
  of languages from $\F$ if and only if $K$ and $L$ are
  layer-separable by $\F$.
\end{proofof}

\begin{proof}
  For a sequence $S_1, S_2, \ldots, S_k$ denote, for all $i = 1, \ldots, k$
  \[
  \diff{S}_i = S_i \setminus \bigcup_{j=1}^{i-1} S_j\,.
  \]

  To show the {\it if} part, assume that $S_1,S_2, \ldots, S_m$ is the
  sequence of languages from $\F$ layer-separating $K$ and $L$. We
  will construct a finite boolean combination of languages from $\F$
  that separates $K$ and $L$.
  By definition of layer separability, each language $\diff{S}_i$ intersects at
  most one of $K$ and $L$. Furthermore, $K$ or $L$ is included in
  $\bigcup_{j=1}^{m} \diff{S}_j=\bigcup_{j=1}^{m} S_j$.  Without loss of
  generality, assume that $K$ is included in $\bigcup_{j=1}^{m} \diff{S}_j$,
  and set $J = \{j \mid \diff{S}_j \cap K \neq \emptyset\}$.  Then the
  language $S=\bigcup_{j \in J} \diff{S}_j$ separates $K$ and $L$.  As $S$ is
  a finite boolean combination of languages from $\F$, this part is
  shown.

  To show the {\it only if} part, assume that $K$ and $L$ can be
  separated by a language $S$ that is a finite boolean combination of
  languages from $\U = \{U_1, \ldots, U_k\}$, a finite subset of $\F$.  Without loss
  of generality, assume that $K \subseteq S$ and $L \cap S =
  \emptyset$.  For any subset of indices $I \subseteq \{1, \ldots,
  k\}$ we denote
  \[
  \cell_{\U}(I) = \Big( \bigcap_{i \in I} U_i \Big) \cap \Big( \bigcap_{i \notin I} \overline{U_i} \Big),
  \]
  where $\overline{U_i}=\Sigma^*\setminus U_i$ and call this language
  a \emph{cell}; see Fig.~\ref{fig:cells} for an
  illustration.
  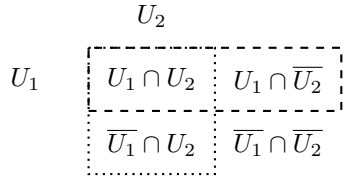
\begin{figure}[b]
    \centering
    \begin{tikzpicture}[x=1.4cm,y=.7cm]
            \draw [thick, dashed] (1.2, 1.2) -- (3.6, 1.2) -- (3.6, 2.4) -- (1.2, 2.4) -- (1.2, 1.2);
      \draw [thick, dotted] (1.2, 2.4) -- (1.2, 0) -- (2.4, 0) -- (2.4, 2.4) -- (1.2, 2.4);
            \draw (0.6, 1.8) node {$U_1$};
      \draw (1.8, 3) node {$U_2$};
                  \draw (1.8, 0.6) node   {$\overline{U_1}\cap U_2$};
      \draw (1.8, 1.8) node  {$U_1\cap U_2$};
      \draw (3, 0.6) node  {$\overline{U_1}\cap \overline{U_2}$};
      \draw (3, 1.8) node {$U_1\cap \overline{U_2}$};
    \end{tikzpicture}
    \caption{Cells for two languages $U_1$ and $U_2$.\label{fig:cells}}
  \end{figure}
  Observe that the cells are pairwise disjoint and $S$ is a finite
  union of cells.  As $S$ separates $K$ and $L$, every cell intersects
  at most one of the languages $K$ and $L$. The cells that form $S$ do
  not intersect $L$ and the others do not intersect $K$. Based on
  this, we construct a layer-separation of $K$ and $L$ by $\F$.

  To this end, we show that there exists a sequence of languages
  $S_1, \ldots, S_{2^k}$ from $\F$ and a bijection $\pi: \{1, \ldots, 2^k\} \to \mathcal{P}(\{1, \ldots, k\})$
  such that, for every $1 \leq j \leq 2^k$
  \[
  \diff{S}_j = \cell_{\U}(\pi(j)).
  \]

  We call $S_1, \ldots, S_{2^k}$ a sequence of \emph{cell-separating}
  languages for $\U$.  It is easy to see that this sequence $S_1,
  \ldots, S_{2^k}$ would layer-separate $K$ and $L$. 
  Indeed, for each $1 \leq \ell \leq 2^k$, the set $\diff{S}_\ell$ is a cell.
  Thus it intersects at most one of $K$ and $L$, which is the first requirement
  of a layer-separation. Moreover, the union
  $\bigcup_{1 \leq i \leq 2^k} S_i = \bigcup_{1 \leq i \leq 2^k} \diff{S}_i$ includes all the cells,
  so it equals $\Sigma^*$, thus clearly includes both $K$ and $L$.
 
  Therefore, it only remains to prove that there exists a sequence
  $S_1,\ldots,S_{2^k}$ of cell-separating languages for $\U$.
  Before we show it formally we present an illustrating example in order to
  give an intuition how the required sequence is constructed.
  For $\U = \{U_1, U_2, U_3\}$ the cell-separating sequence is as follows:
  \[
  U_1 \cap U_2 \cap U_3, U_2 \cap U_3, U_1 \cap U_3, U_3, U_1 \cap U_2, U_2, U_1, \Sigma^*
  \]
  We prove the fact in general by induction on $k$ that there is a sequence of
  cell-separating languages for every $k$-element set $\U \subseteq
  \F$.  For the base step, i.e., $k = 1$, we have that $\U =
  \{S_1\}$. We can simply take $S_1 = U_1$ and $S_2 = \Sigma^*$ and we
  are done.  Assume now that, for some $k$, the induction hypothesis
  is satisfied.  We prove it for $k+1$. Consider an arbitrary subset
  $\U' = \{U_1, \ldots, U_k, U_{k+1}\}$ of $\F$ and take $\U = \{U_1,
  \ldots, U_k\}$.  Let $S_1, \ldots, S_{2^k}$ be the sequence of
  cell-separating languages for $\U$.
  We will show that the sequence
  \[
  S_1 \cap U_{k+1}, \ldots, S_{2^k} \cap U_{k+1}, \, S_1, \ldots, S_{2^k}
  \]
  is cell-separating for $\U'$.
  We name this sequence $T_1, \ldots, T_{2^{k+1}}$, i.e.,
  \[
  T_i =
  \begin{cases}
    S_i \cap U_{k+1}, & \mbox{ if \ } i \leq 2^k \\
    S_{i-2^k}, & \mbox{ if \ } i > 2^k.
  \end{cases}
  \]
  
  It is sufficient to show that there exists a bijection $g$ between $\{1, \ldots, 2^{k+1}\}$ and $\mathcal{P}(\{1, \ldots, k+1\})$
  such that for $1 \leq i \leq 2^{k+1}$
  \[
  \diff{T}_i = \cell_{\U'}(\sigma(i)).
  \]
  Assume that $\pi$ is a bijection between $\{1, \ldots, 2^k\}$ and $\mathcal{P}(\{1, \ldots, k\})$
  such that
  \[
  \diff{S}_i = \cell_{\U}(\pi(i)).
  \]
  We will show that $\sigma$ defined as
  \[
  \sigma(i) =
  \begin{cases}
    \pi(i) \cup \{k+1\}, & \mbox{ if \ } i \leq 2^k \\
    \pi(i - 2^k), & \mbox{ if \ } i > 2^k
  \end{cases}
  \]
  fulfills the necessary condition.
  If $i \leq 2^k$ then
  \begin{align*}
    \diff{T}_i & = T_i \setminus \bigcup_{j=1}^{i-1} T_j = (S_i \cap U_{k+1})  \setminus \bigcup_{j=1}^{i-1} (S_j \cap U_{k+1})
      = (S_i \setminus \bigcup_{j=1}^{i-1} S_j) \cap U_{k+1} \\
    & = \diff{S}_i \cap U_{k+1} = \cell_{\U}(\pi(i)) \cap U_{k+1} = \cell_{\U'}(\sigma(i)).
  \end{align*}
  
  On the other hand if $i > 2^k$ then
  \begin{align*}
    \diff{T}_i & = T_i \setminus \bigcup_{j=1}^{i-1} T_j = (S_{i-2^k})  \setminus
    \Big( \bigcup_{j=1}^{2^k} (S_j \cap U_{k+1}) \cup \bigcup_{j=1}^{i-2^k-1} S_j \Big) \\
    & = (S_{i-2^k})  \setminus \Big(U_{k+1} \cup \bigcup_{j=1}^{i-2^k-1} S_j \Big)
    = \diff{S}_{i-2^k} \setminus U_{k+1} \\
    & = \cell_{\U}(\pi(i-2^k)) \setminus U_{k+1} = \cell_{\U'}(\sigma(i)),
  \end{align*}
  since $\bigcup_{j=1}^{2^k} S_j=\Sigma^*$,
  which completes the proof.
\end{proof}

\begin{proofof}{Lemma~\ref{lem:layer_to_zigzag}}
  Let $\preccurlyeq$ be a quasi-order on words and assume
  that languages $K$ and $L$ are layer-separable by
   $\preccurlyeq$-closed languages.  Then there is no infinite
  $\preccurlyeq$-zigzag between $K$ and $L$.
\end{proofof}

\begin{proof}
  For the sake of contradiction, assume that there exists an infinite
  $\preccurlyeq$-zigzag $(w_i)_{i=1}^{\infty}$ between $K$ and $L$.
  Let $I=\{w_1, w_2, \ldots\}$ and consider the sequence of languages
  $S_1, \ldots, S_m$ layer-separating $K$ and $L$.  Let $k$ in
  $\{1,\ldots,m\}$ be the lowest index for which $S_k\cap
  I\neq\emptyset$. (Notice that $k$ exists by definition of
  layer-separations.)
  Since we
  chose $k$ to be minimal, for every $j \geq 1$ it holds that
  \[
  w_j \notin \bigcup_{i=1}^{k-1} S_i\,.
  \]
  Let $\ell \geq 1$ be such that $w_{\ell} \in S_k\cap I$.
  Without loss of generality, assume that $w_{\ell} \in
  K$. (Otherwise, we switch $K$ and $L$.)  Then, by
  the definition of zigzag, $w_{\ell+1} \in L$.  As $S_k$ is 
  $\preccurlyeq$-closed and as $w_\ell \preccurlyeq w_{\ell+1}$, we have that also $w_{\ell+1} \in S_k$.  Thus,
  \begin{align*}
    w_{\ell+1} \in L \, \cap \, (S_k \setminus \bigcup_{i=1}^{k-1} S_i)
    && \text{ and } &&
    w_{\ell} \in K \, \cap \, (S_k \setminus \bigcup_{i=1}^{k-1} S_i)\,.
  \end{align*}
  But then the set $S_k \setminus \bigcup_{i=1}^{k-1} S_i$ intersects
  both languages $K$ and $L$, which is a contradiction with the
  assumption that $S_1, \ldots, S_m$ layer-separates $K$ and $L$.
\end{proof}

In the next proof we use K\"{o}nig's Lemma, which we recall next.  A
tree is \emph{finitely branching} if every node has finitely many
children. Note that, for every $n > 0$ there can be a node that has at least $n$ children.

\begin{lemma}[K\"{o}nig~\cite{konig}]\label{lem:konig}
A finitely branching tree containing arbitrarily long paths contains an infinite path.
\end{lemma}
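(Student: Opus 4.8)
The plan is to build the infinite path one node at a time by a greedy \emph{fertility} argument. Call a node $v$ \emph{fertile} if the subtree rooted at $v$ still contains arbitrarily long paths, that is, for every $n \in \nat$ there is a path starting at $v$ of length at least $n$. The first observation is trivial: the root is fertile, since this is exactly the hypothesis that the whole tree contains arbitrarily long paths.

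The heart of the argument is a single pigeonhole step, which I would isolate as a claim: \emph{every fertile node has a fertile child.} To see this, suppose $v$ is fertile with children $u_1, \ldots, u_k$; these are finitely many because the tree is finitely branching. If none of the $u_i$ were fertile, then for each $i$ there would be a bound $n_i$ such that no path starting at $u_i$ exceeds length $n_i$. Putting $N = \max_i n_i$, every path starting at $v$ would have length at most $N+1$, since after its first edge it must continue inside one of the finitely many child subtrees. This contradicts the fertility of $v$, proving the claim.

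With these two facts in hand, I would construct the path recursively. Set $r_0$ to be the root, which is fertile. Given a fertile node $r_j$, the claim supplies a fertile child, which we name $r_{j+1}$. Iterating yields an infinite sequence $r_0, r_1, r_2, \ldots$ in which each $r_{j+1}$ is a child of $r_j$, and this is precisely an infinite path through the tree.

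The only delicate point — more a technicality than a genuine obstacle — is that the recursion makes infinitely many selections of fertile children, which in full generality appeals to a weak form of the axiom of choice (dependent choice). This is easily sidestepped in our setting: in every application of the lemma in this paper the children of a node carry a fixed ordering, so one can always pick the \emph{least} fertile child, rendering each choice canonical and eliminating any reliance on choice.
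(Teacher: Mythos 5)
Your proof is correct. Note, however, that the paper does not prove this lemma at all: it is stated as a quoted classical result, attributed directly to K\"onig~\cite{konig}, and used as a black box in the proof of Lemma~\ref{lem:bound_for_zigzags}. So there is no internal proof to compare against; what you have written is the standard textbook argument, and it is sound. Your fertility claim is exactly the right pigeonhole step: if all finitely many children $u_1,\ldots,u_k$ of a fertile $v$ admitted bounds $n_1,\ldots,n_k$ on path lengths, then every path from $v$ would be bounded by $\max_i n_i + 1$, contradicting fertility. One small point in your favor: you work directly with the hypothesis as stated here (``arbitrarily long paths'') rather than the more common ``infinite tree'' formulation, which spares you the intermediate observation that a finitely branching infinite tree has arbitrarily long paths. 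Your remark on dependent choice is also accurate and correctly flagged as a technicality rather than a gap; in the paper's single application of the lemma, the tree nodes are labelled by words over a finite alphabet, which carry a canonical (e.g., length-lexicographic) order, so your ``least fertile child'' fix is indeed available there.
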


\begin{proofof}{Lemma~\ref{lem:bound_for_zigzags}}
  Let $\preccurlyeq$ be a WQO on words.  If there is no infinite
  $\preccurlyeq$-zigzag between languages $K$ and $L$, then
  there exists a constant $k \in \nat$ such that no
  $\preccurlyeq$-zigzag between $K$ and $L$ is longer than $k$.
\end{proofof}

\begin{proof}
  All $\preccurlyeq$-zigzags considered in this proof are between languages $K$ and $L$.
  We show that the existence of arbitrarily long $\preccurlyeq$-zigzags implies the existence of an infinite $\preccurlyeq$-zigzag.
  To this end, we restrict the general form of $\preccurlyeq$-zigzag to be able to use K\"{o}nig's Lemma.
  Note that any WQO allows equivalent elements.  For a word $w$, let
  $[w]=\{ v \in \Sigma^* \mid v\preccurlyeq w \text{ and }
  w\preccurlyeq v \}$ denote the equivalence class containing $w$.
  For languages $K$ and $L$, we arbitrarily pick two elements from the
  sets $[w]\cap K$ and $[w]\cap L$, if they exist, denoted by $[w]_K$
  and $[w]_L$, respectively, and call them {\em canonical elements} of
  the class $[w]$.
  We say that a $\preccurlyeq$-zigzag $(w_i)_{i=1}^k$ is \emph{canonical} if it consists only of canonical elements, that is, 
  if $w_i\in K$ then $w_i=[w_i]_K$, and if $w_i\in L$ then $w_i=[w_i]_L$.
  Observe that if there exists a $\preccurlyeq$-zigzag of length $k$ then there also
  exists a canonical $\preccurlyeq$-zigzag of length $k$.
  Indeed, replacing all elements of the $\preccurlyeq$-zigzag with their corresponding canonical elements results in a canonical $\preccurlyeq$-zigzag.
  Thus, in what follows, we consider only canonical $\preccurlyeq$-zigzags. Note that we reduced the quasi order to an order.
  We say that a $\preccurlyeq$-zigzag $(w_i)_{i=1}^k$ is \emph{denser} than a $\preccurlyeq$-zigzag $(v_i)_{i=1}^k$ if
  \begin{itemize}
    \item $w_i \preccurlyeq v_i$, for all $1 \leq i \leq k$;
    \item $w_i \in K \iff v_i \in K$, for all $1 \leq i \leq k$, and also symmetrically for $L$; and
    \item there exists $1\le j\le k$ such that $w_j\neq v_j$.
  \end{itemize}
  A $\preccurlyeq$-zigzag is \emph{densest} if there is no denser $\preccurlyeq$-zigzag.

  Note that if a $\preccurlyeq$-zigzag $(w_i)_{i=1}^k$ is densest then $(w_i)_{i=1}^j$
  is also densest for any $j < k$. Indeed, if $(v_i)_{i=1}^j$ is denser than $(w_i)_{i=1}^j$ then
  $v_1, \ldots, v_j, w_{j+1}, \ldots, w_k$ is also a valid $\preccurlyeq$-zigzag, which is denser
  than $(w_i)_{i=1}^k$.
  Furthermore, observe that if there exists a $\preccurlyeq$-zigzag of length $k$ then there also
  exists a densest $\preccurlyeq$-zigzag of length $k$ because the denser order
  is well founded, as a suborder of a $k$-componentwise product of well founded orders.
  Thus, by the assumtions, there exist arbitrarily long densest $\preccurlyeq$-zigzags.
  Their first element belongs either to $K$ or to $L$. Without loss of generality, we may assume that
  there are arbitrarily long densest $\preccurlyeq$-zigzags starting in $K$.
  Note that the first word in every densest $\preccurlyeq$-zigzag is the shortest canonical element with respect to the order $\preccurlyeq$ among the canonical elements of $K$. 
  As the order $\preccurlyeq$ is a WQO there are only finitely many shortest canonical elements,
  thus there exists a word $w \in K$ such that there are arbitrarily long densest $\preccurlyeq$-zigzags starting from $w$.
  Consider a tree consisting of all these $\preccurlyeq$-zigzags forming its paths.
  By definition, this tree has arbitrary long paths. It is also finitely branching;
  otherwise, if a node has infinitely many children labelled by different words $v_1, v_2, \ldots$, the WQO property implies that
  we can find a pair of indices $i < j$ such that $v_i \preccurlyeq v_j$.
  Then the $\preccurlyeq$-zigzag obtained by choosing the path going through $v_j$ is not densest
  as we can change $v_j$ into $v_i$ in this zigzag obtaining the denser one.
  Thus, by Lemma~\ref{lem:konig}, this tree contains an infinite path that forms
  an infinite $\preccurlyeq$-zigzag.
\end{proof}

\begin{proofof}{Lemma~\ref{lem:zigzag_to_nonseparability}}
  Let $\preccurlyeq$ be a WQO on words and assume that 
  there is no infinite $\preccurlyeq$-zigzag between languages $K$ and $L$.
  Then the languages $K$ and $L$ are layer-separable by $\preccurlyeq$-closed languages.
\end{proofof}

\begin{proof}

  For two languages $X$ and $Y$, let
  \[
  \layer(X, Y)=\{w \in X \mid \text{there does not exist } w' \text{ in } Y \text{ such that } w \preccurlyeq w'\}
  \]
  denote the set of all words of $X$ that are not smaller or equal to
  a string of $Y$ in the WQO. 
              We first show the following claim:
  \begin{claim}\label{claim:one_layer}
    There exists a $\preccurlyeq$-closed language $S_{(X,Y)}$ such that $S_{(X,Y)} \cap Y =
    \emptyset$ and $S_{(X,Y)} \cap X = \layer(X, Y)$.
  \end{claim}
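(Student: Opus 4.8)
The plan is to let $S_{(X,Y)}$ be the upward closure of the layer itself, that is, $S_{(X,Y)} = \closure^{\preccurlyeq}(\layer(X,Y))$. By construction this set is $\preccurlyeq$-closed, so the first requirement is immediate and all of the remaining work lies in verifying the two intersection conditions. Both of these follow purely from the definition of $\layer$ together with reflexivity and transitivity of $\preccurlyeq$; in particular, I do not expect to need the well-quasi-ordering hypothesis for this claim, even though the enclosing lemma does use it.

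For the condition $S_{(X,Y)} \cap Y = \emptyset$ I would argue by contradiction. If some $y \in Y$ belonged to $\closure^{\preccurlyeq}(\layer(X,Y))$, then there would be a word $v \in \layer(X,Y)$ with $v \preccurlyeq y$. But membership of $v$ in $\layer(X,Y)$ means precisely that no word of $Y$ lies above $v$, and $y$ is exactly such a word, a contradiction. For the equality $S_{(X,Y)} \cap X = \layer(X,Y)$, the inclusion from right to left is immediate, since $\layer(X,Y) \subseteq X$ and every word lies in its own upward closure by reflexivity. For the reverse inclusion I would take an arbitrary $w \in X \cap \closure^{\preccurlyeq}(\layer(X,Y))$, fix a witness $v \in \layer(X,Y)$ with $v \preccurlyeq w$, and show $w \in \layer(X,Y)$: since $w \in X$ already, it suffices to rule out any $w' \in Y$ with $w \preccurlyeq w'$, and such a $w'$ would yield $v \preccurlyeq w \preccurlyeq w'$ by transitivity, contradicting $v \in \layer(X,Y)$.

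The argument is elementary throughout, and I do not anticipate a genuine obstacle. The only point that deserves a moment's care is the reverse inclusion in the last equality: one must observe that the defining property of the layer, namely having no $Y$-element above, is inherited upward along $\preccurlyeq$, so that the witness $v$ forces $w$ itself into the layer. This is precisely where transitivity does the real work, and it is the single step I would write out in full rather than leave to the reader.
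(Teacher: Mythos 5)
Your proposal is correct and matches the paper's proof essentially verbatim: the paper also takes $S_{(X,Y)} = \bigcup_{w \in \layer(X,Y)} \closure^{\preccurlyeq}(w)$, which is exactly your $\closure^{\preccurlyeq}(\layer(X,Y))$, and verifies both intersection conditions directly from the definition of $\layer$ using transitivity, with no appeal to the WQO hypothesis. Your write-up is merely a bit more explicit than the paper's about the upward inheritance of the layer property in the inclusion $S_{(X,Y)} \cap X \subseteq \layer(X,Y)$.
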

  The proof of the claim is simple. Let $S_{(X,Y)} = \bigcup_{w \in \layer(X,
    Y)} \closure^\preccurlyeq(w)$. By definition, $S_{(X,Y)}$ is  $\preccurlyeq$-closed.  For each
  $w$ in $\layer(X, Y)$, we have that $\closure^\preccurlyeq(w) \cap Y = \emptyset$
  by definition of $\layer(X,Y)$. Therefore, $S_{(X,Y)}
  \cap Y = \emptyset$.  Moreover, we have that $\layer(X, Y) = S_{(X,Y)} \cap
  X$ because $w \in \layer(X,Y)$ implies that
  $(\closure^\preccurlyeq(w) \cap X) \subseteq \layer(X,Y)$. This concludes the proof of the claim.

  We now proceed with the proof of
  Lemma~\ref{lem:zigzag_to_nonseparability}. Let $B$ be a constant
  such that no $\preccurlyeq$-zigzag between $K$ and $L$ is longer
  than $B$. This constant exists by Lemma~\ref{lem:bound_for_zigzags},
  since there is no infinite $\preccurlyeq$-zigzag between $K$ and
  $L$.  Define the languages $K_0 = K$, $L_0 = L$, and, for each $i
  \in \nat$,
  \begin{align*}
    K_{i+1} = K_i \setminus \layer(K_i, L_i)
    &&\text{ }&&
    L_{i+1} = L_i \setminus \layer(L_i, K_i)\,.
  \end{align*}
  We prove by induction on $i$ that every $\preccurlyeq$-zigzag between
  $K_i$ and $L_i$ has length at most $B - i$.  The claim holds for
  $K_0$ and $L_0$, thus consider $K_{i+1}$ and $L_{i+1}$, for $i\ge
  0$.
        Since $K_{i+1} \subseteq K_i$ and $L_{i+1} \subseteq L_i$ we have
  that every $\preccurlyeq$-zigzag between $K_{i+1}$ and $L_{i+1}$
  would also be a $\preccurlyeq$-zigzag between $K_i$ and $L_i$. By
  induction we know that every $\preccurlyeq$-zigzag between $K_i$ and
  $L_i$ has length at most $B-i$. Therefore, every
  $\preccurlyeq$-zigzag between $K_{i+1}$ and $L_{i+1}$ also has
  length at most $B-i$. It remains to prove that there cannot be
  a $\preccurlyeq$-zigzag of length $B - i$ between $K_{i+1}$ and
  $L_{i+1}$. For the sake of contradiction, assume
  that $(w_k)_{k=1}^{B-i}$ is a $\preccurlyeq$-zigzag between
  $K_{i+1}$ and $L_{i+1}$ of length $B-i$. We either have that
  $w_{B-i} \in K_{i+1}$ or $w_{B-i} \in L_{i+1}$. We prove the case
  $w_{B-i} \in K_{i+1}$ since the other case is analogous. Here, we
  have that $w_{B-i} \notin \layer(K_i, L_i)$. By definition of
  $\layer(K_i, L_i)$, there exists a $w \in L_i$ such that $w_{B-i}
  \preccurlyeq w$. But this means that the sequence $w_1, \ldots,
  w_{B-i}, w$ would be a $\preccurlyeq$-zigzag between $K_i$ and $L_i$ of
  length $B-i+1$, which is a contradiction.

  We therefore have that, for every $i$, every $\preccurlyeq$-zigzag
  between $K_i$ and $L_i$ has length at most $B - i$. In particular,
  this means that if $i \geq B$, every $\preccurlyeq$-zigzag between
  languages $K_i$ and $L_i$ has length at most zero. Since any word $w
  \in K_i \, \cup \, L_i$ would already be a $\preccurlyeq$-zigzag of
  length one, this means that $K_i = L_i = \emptyset$.  

  We now show how the languages $K$ and $L$ can be layer-separated by
   $\preccurlyeq$-closed languages.  Denote by $S_{(X,Y)}$ the
   $\preccurlyeq$-closed language obtained when applying
  Claim~\ref{claim:one_layer} to languages $X$ and $Y$.  Then, the sequence
  \begin{center}
    $S_{(K_0, L_0)}$, $S_{(L_0, K_0)}$, $S_{(K_1, L_1)}$, $S_{(L_1,
      K_1)}$, \ldots, $S_{(K_{B-1}, L_{B-1})}$, $S_{(L_{B-1}, K_{B-1})}$
  \end{center}
  covering the layers 
  \begin{center}
    $\layer(K_0, L_0)$, $\layer(L_0, K_0)$,
    \ldots,     $\layer(K_{B-1}, L_{B-1})$, $\layer(L_{B-1}, K_{B-1})$,
  \end{center}
  respectively, layer-separates $K$ and $L$.
        Condition 1 of the definition of layered separability is satisfied
  because all the languages covering layers with smaller numbers
  appear earlier in the sequence.  Condition 2 is true because the
  union of all the considered layers includes $K \cup L$.  
\end{proof}

\section*{Proofs of Section~\ref{sec:testing}}
\makeatletter{}\subsection*{Proof of Lemma~\ref{lem:synchronization} with Running Example}\label{subsec:synchronization_proof}

In this section we prove Lemma~\ref{lem:synchronization}.

\begin{proofof}{Lemma~\ref{lem:synchronization}}
  There is an infinite zigzag between regular languages $\LA$ and $\LB$
  if and only if
  there exist synchronized languages $\KA \subseteq \LA$ and $\KB \subseteq \LB$.
\end{proofof}

\medskip
To prove it, we need several auxiliary results showing that if there is an infinite zigzag between two regular languages, then there is also an infinite zigzag between their sublanguages of a special form.

To illustrate the proofs of this section we use a running example with regular languages
\[
  \LA_1 = a(b^* a)^* a (bb)^* ab ca bb (bc)^* + (ab^*c)^* + b^* c (cb)^*
\]  
and
\[
  \LB_1 = abd + b(aab)^* ba ca (b(cb^*)^* c)^* cc (cbc)^* b + (aa)^* + ba (bb)^*
\]
having an infinite zigzag between them. After each step we present how the considered languages
have been modified.

We say that language $K$ \emph{embeds} into $L$, denoted by $K \preceq
L$, if for every $v \in K$, there exists a $w \in L$ such that $v \preceq w$. 
In order to be consistent we also say here that word $v$ \emph{embeds} into word $w$
if $\{v\}$ embeds in $\{w\}$, i.e., $v$ is a subsequence of $w$.
Languages $\KA$ and $\KB$ are \emph{mutually embeddable}
if $\KA \embeds \KB$ and $\KB \embeds \KA$.  Note that there always
exists an infinite zigzag between nonempty mutually-embeddable
languages.

\begin{lemma}[Mutual embeddability]\label{lem:embedded}
  If there is an infinite zigzag between regular languages $\LA$ and $\LB$, then there exist nonempty mutually-embeddable
  regular languages $\KA \subseteq \LA$ and $\KB \subseteq \LB$.
\end{lemma}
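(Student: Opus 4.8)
The plan is to extract from the infinite zigzag a single \emph{downward}-closed regular language that the chain traces out, and to intersect it with $\LA$ and $\LB$. Let $(w_i)_{i=1}^{\infty}$ be the infinite zigzag. First I would observe that it visits each of $\LA$ and $\LB$ infinitely often. Every $w_i$ lies in $\LA \cup \LB$ (this holds for $w_1$ by definition and propagates via conditions (2) and (3) of a zigzag). If only finitely many $w_i$ were in $\LA$, then beyond some index every $w_i$ would lie in $\LB$, and condition (3) ($w_i \in \LB \Rightarrow w_{i+1}\in\LA$) would force an $\LA$-word arbitrarily late, a contradiction; the case of $\LB$ is symmetric. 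Hence the index sets $O=\{i \mid w_i \in \LA\}$ and $E=\{i \mid w_i\in\LB\}$ are both infinite. Since the chain is increasing ($w_i \preceq w_j$ whenever $i\le j$) and $O,E$ are cofinal, the chain \emph{interleaves} the two languages: every $w_i$ with $i\in O$ embeds into some later $w_j$ with $j\in E$, and symmetrically.

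Next I would form the downward closure of the whole chain, $D=\{u\in\Sigma^* \mid u\preceq w_i \text{ for some } i\}$, and set $\KA=\LA\cap D$ and $\KB=\LB\cap D$. The key point is that $D$ is regular: its complement $\Sigma^*\setminus D=\{u \mid u\not\preceq w_i \text{ for all } i\}$ is $\preceq$-closed (upward closed), because if $u$ embeds into no $w_i$ and $u\preceq u'$ then $u'$ embeds into no $w_i$ either. By Higman's Lemma every $\preceq$-closed language is a finite union of languages $\Sigma^* a_1 \Sigma^* \cdots \Sigma^* a_n \Sigma^*$ and is therefore regular, so $D$ is regular as the complement of a regular language. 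Consequently $\KA$ and $\KB$ are regular, and they are nonempty since they contain the $\LA$-words (resp.\ $\LB$-words) of the chain, i.e.\ the $w_i$ with $i\in O$ (resp.\ $i\in E$).

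Finally I would verify mutual embeddability directly from the interleaving property. Take any $v\in\KA$; by definition of $D$ we have $v\preceq w_i$ for some $i$, and choosing $j\in E$ with $j>i$ (possible since $E$ is infinite) yields $w_j\in\LB\cap D=\KB$ with $v\preceq w_i\preceq w_j$. Thus $\KA\embeds\KB$, and the argument for $\KB\embeds\KA$ is symmetric using $O$. This gives nonempty mutually-embeddable regular languages $\KA\subseteq\LA$ and $\KB\subseteq\LB$, as required.

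The step I expect to be the main obstacle — or at least the one where it is easiest to slip — is getting the \emph{direction} of the closure right. Mutual embeddability requires, for each word, a \emph{larger} word in the opposite language, so one must use the downward closure of the chain (and invoke Higman only to see that this downward-closed set is regular); using the upward closure would produce the wrong inequality and break the argument. The interleaving of the zigzag is exactly what supplies the needed larger words in both directions simultaneously, which is why it is essential that both $\LA$ and $\LB$ occur cofinally along the chain. A more hands-on alternative via a union-free decomposition of $\LA,\LB$ and a Dickson/pumping argument on cycle exponents also works, but it is considerably messier and I would prefer the closure-based route above.
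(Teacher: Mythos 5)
Your proof is correct, and it takes a recognizably different route from the paper's, even though both proofs share the skeleton ``intersect $\LA,\LB$ with a zigzag-derived set, use Higman for regularity, use the alternation for mutual embeddability.'' The paper does not fix a single zigzag: it defines $I$ as the set of \emph{all} words occurring in \emph{any} infinite zigzag and sets $\KA=\LA\cap I$, $\KB=\LB\cap I$. With that choice, mutual embeddability is immediate (every $w\in\KA$ sits inside some infinite zigzag, which must contain a later $\LB$-word above it), but regularity becomes the delicate step: $\Sigma^*\setminus I$ is \emph{not} upward closed in $\Sigma^*$, so the paper has to show that $\LA\setminus\KA$ is upward closed \emph{relative to} $\LA$ --- via the observation that if $w\in\LA$, $w\preceq w'$ and $w'$ lies in an infinite zigzag, then prepending $w$ to a suitable tail of that zigzag puts $w$ into $I$ as well --- and then writes $\LA\setminus\KA=\LA\cap\bigcup_{w\in M}\closure^\preceq(w)$ for the finite (by Higman) set $M$ of minimal excluded words. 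Your construction inverts these trade-offs: by taking $D$ to be the downward closure of one fixed chain, the complement $\Sigma^*\setminus D$ is genuinely upward closed in $\Sigma^*$, so regularity of $D$ (and hence of $\KA,\KB$) falls out of Higman with no relative-closure argument; the price is that mutual embeddability now needs your cofinality observation that both index sets $O$ and $E$ are infinite, which you prove correctly from conditions (2) and (3) of the zigzag definition. Both proofs are complete; yours is arguably slightly cleaner on the regularity side, while the paper's $\KA,\KB$ are the canonical (largest possible) choice, capturing exactly the words usable in infinite zigzags rather than only those dominated by one particular chain. Your closing remark about the direction of the closure is also on target: it is precisely the downward closure whose complement is $\preceq$-closed, and the embeddability inequalities would indeed break with the upward closure.
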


\begin{proof}
  We define languages $\KA$ and $\KB$ and show that they possess the required properties.
  Let $I$ denote the set of all words that belong to any infinite zigzag between languages $\LA$ and $\LB$,
  and let $\KA = \LA\cap I$ and $\KB=\LB\cap I$.
  Then, for any $w \in \KA$, let $I_w$ denote an infinite zigzag containing $w$.
  As $I_w$ is infinite, there exists $w'\in I_w\cap\LB$ such that $w \embeds w'$, hence $w' \in \KB$.
  Therefore $\KA \embeds \KB$. The case $\KB \embeds \KA$ is analogous.
   
  It remains to show that $\KA$ and $\KB$ are regular. We prove it for $\KA$ since the case for $\KB$ is analogous.
  Let $M$ denote the set of all minimal words of $\LA \setminus \KA$,
  that is, words $w \in \LA \setminus \KA$ such that there is no $w' \in \LA \setminus \KA$ with $w' \embeds w$ and $w' \neq w$.
  Note that any distinct words $w$ and $w'$ of $M$ are incomparable, i.e., $w \not\embeds w'$ and $w' \not\embeds w$.
  By Higman's lemma,  $M$ is finite.
  If $w \in \LA \setminus \KA$, that is, $w \notin I$, then any $w' \in \LA$ with $w \embeds w'$
  also belongs to $\LA \setminus \KA$; otherwise, $w' \in \KA = \LA \cap I$ implies that $w \in I$, which is a contradiction. 
  Thus,
  \[
    \LA \setminus \KA = \LA \cap \bigcup_{w \in M} \closure(w)\,.
  \]
  Notice that the language $\bigcup_{w \in M} \closure(w)$ is $\preceq$-closed,
  hence regular,   and so is the language $\KA = \LA \setminus (\LA \setminus \KA)$.
\end{proof}

By Lemma~\ref{lem:embedded} our running example could be reduced to
\[
  \LA_2 = a(b^* a)^* a (bb)^* ab ca bb (bc)^* +  b^* c (cb)^*
\]  
and
\[
  \LB_2 = b(aab)^* ba ca (b(cb^*)^* c)^* cc (cbc)^* b + (aa)^* + ba (bb)^*,
\]
since words from $(ab^*c)^* \subseteq \LA_1$ and $abd \subseteq \LB_1$ does not belong
to any infinite zigzag.

Now we strengthen the result by imposing a union-free decomposition on the languages $\KA$ and $\KB$. 

\begin{lemma}[Union-free languages]\label{lem:union_free_embedded}
  If there is an infinite zigzag between regular languages $\LA$ and $\LB$, 
  then there exist nonempty mutually-embeddable union-free regular languages $\KA \subseteq \LA$ and $\KB \subseteq \LB$.
\end{lemma}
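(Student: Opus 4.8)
The plan is to upgrade Lemma~\ref{lem:embedded} by replacing the mutually-embeddable regular languages $\KA$ and $\KB$ with mutually-embeddable \emph{union-free} ones. Since every regular language is a finite union of union-free languages (decompose a regular expression by distributing concatenation over $+$, or equivalently decompose the NFA into finitely many ``linear'' sub-automata each following a single nondeterministic path), I would start by fixing such decompositions $\KA = \bigcup_{i=1}^{m} K^{\A}_i$ and $\KB = \bigcup_{j=1}^{n} K^{\B}_j$ into union-free pieces. The goal is to find a single index $i$ and a single index $j$ such that $K^{\A}_i$ and $K^{\B}_j$ are themselves mutually embeddable and nonempty; these will be the desired sublanguages since $K^{\A}_i \subseteq \KA \subseteq \LA$ and similarly on the $\B$ side.

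The key idea is a pigeonhole/Ramsey-style argument applied to the infinite zigzag. Taking the infinite zigzag $(w_k)_{k=1}^\infty$ between $\KA$ and $\KB$ guaranteed by Lemma~\ref{lem:embedded} (every word of $\KA,\KB$ lies on such a zigzag), I note that each $w_k$ in $\KA$ belongs to some union-free component $K^{\A}_{i(k)}$ and each $w_k$ in $\KB$ to some $K^{\B}_{j(k)}$. Since there are only finitely many components on each side but infinitely many words in the zigzag, some pair of components $(i,j)$ must ``carry'' the zigzag cofinally: more precisely, I would argue that there is an infinite sub-zigzag all of whose $\A$-words lie in $K^{\A}_i$ and all of whose $\B$-words lie in $K^{\B}_j$. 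The cleanest way to extract this is to observe that the zigzag visits $\KA$ and $\KB$ alternately, so it induces an infinite sequence of pairs $(i(k), j(k+1))$ (a $\KA$-word followed by its $\KB$-successor), and by finiteness some pair $(i,j)$ recurs infinitely often; picking out the corresponding positions and using transitivity of $\preceq$ to ``splice'' them yields an infinite zigzag between $K^{\A}_i$ and $K^{\B}_j$.

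From an infinite zigzag between $K^{\A}_i$ and $K^{\B}_j$ I can conclude they are nonempty, and then re-apply the mutual-embeddability argument of Lemma~\ref{lem:embedded} (or simply note directly from the infinite zigzag that each word of one embeds into a later word of the other) to get $K^{\A}_i \embeds K^{\B}_j$ and $K^{\B}_j \embeds K^{\A}_i$. Here I must be slightly careful: the mutual embeddability of $K^{\A}_i$ and $K^{\B}_j$ requires that \emph{every} word of $K^{\A}_i$ embed into some word of $K^{\B}_j$, not merely every word appearing on the extracted zigzag. The safe route is to run the Lemma~\ref{lem:embedded} construction \emph{internally} to $K^{\A}_i$ and $K^{\B}_j$: restrict to the set $I'$ of words lying on infinite zigzags between these two union-free languages, and replace each by its intersection with $I'$. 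The subtlety is whether this restriction preserves union-freeness --- it need not, since carving out a $\preceq$-closed set from a union-free language can re-introduce unions.

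The main obstacle is precisely this last tension: extracting a zigzag between two fixed union-free components is straightforward by pigeonhole, but guaranteeing \emph{full} mutual embeddability (a universal statement over all words of the components) while \emph{keeping} the languages union-free is the delicate point. I expect the intended resolution is to choose the union-free components to be \emph{minimal} in a suitable sense --- e.g.\ take for $K^{\A}_i$ not the whole component but the union-free language generated by a single path through the automaton that carries infinitely many zigzag words, and observe that because the component is union-free its set of words is totally structured enough that ``carrying an infinite zigzag'' already forces every word to embed into arbitrarily large words of the partner component. I would therefore spend most of the effort verifying that for union-free $K^{\A}_i$, the existence of an infinite zigzag with $K^{\B}_j$ upgrades automatically to mutual embeddability, exploiting that union-free languages have a pumpable cyclic structure so that every word embeds into some word reachable by further pumping along the same zigzag.
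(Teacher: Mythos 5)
Your plan tracks the paper's proof through its first two moves (apply Lemma~\ref{lem:embedded}, then decompose the resulting mutually-embeddable languages into finitely many union-free pieces), but the bridge you propose to cross the gap you yourself identify is false. You hope that for union-free components, ``the existence of an infinite zigzag with $K^{\B}_j$ upgrades automatically to mutual embeddability.'' It does not. Take $K^{\A}_i = L(a^*bc^*)$ and $K^{\B}_j = L(a^*b)$, both union-free. The increasing sequence $b, ab, aab, \ldots$ is an infinite zigzag between them (each $a^kb$ lies in both languages; recall the paper explicitly allows non-disjoint languages), yet $K^{\A}_i \not\embeds K^{\B}_j$, since $bc \in K^{\A}_i$ embeds into no word of $a^*b$. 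This situation genuinely arises under your extraction: with $\LA = a^*bc^*$ and $\LB = a^*b + a^*bc^*$ (mutually embeddable, with an infinite zigzag), the zigzag words $a^kb$ belong to \emph{both} components of $\LB$, and nothing in your pigeonhole prevents assigning them all to the component $a^*b$, yielding a recurring pair that carries an infinite zigzag but is not mutually embeddable. Your fallback --- restricting the components to zigzag words as in Lemma~\ref{lem:embedded} --- you correctly observe destroys union-freeness, so neither branch of the plan closes, and mutual embeddability (a universal statement over \emph{all} words, which the downstream decomposition lemmas genuinely need) is exactly what is left unproven.

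The paper's resolution avoids zigzags between components altogether and instead exploits that every union-free language is \emph{directed} under $\preceq$: fixing a union-free expression for $D^{\A}_i$ and letting $w_n$ be the word obtained by replacing every star with the exponent $n$, one gets a chain $w_n \preceq w_m$ for $n \leq m$ that is \emph{cofinal} in $D^{\A}_i$ (every $w \in D^{\A}_i$ satisfies $w \preceq w_{|w|}$). Since each $w_n$ embeds into $\MB = \bigcup_j D^{\B}_j$ by mutual embeddability of the ambient languages, infinitely many $w_n$ embed into a single component $D^{\B}_{j_0}$, and cofinality plus the chain property then yield the full universal statement $D^{\A}_i \embeds D^{\B}_{j_0}$ --- for \emph{every} $i$, not only those touched by some zigzag. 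This gives functions $f$ and $g$ with $D^{\A}_i \embeds D^{\B}_{f(i)}$ and $D^{\B}_j \embeds D^{\A}_{g(j)}$; iterating $h = g \circ f$ must enter a cycle, and transitivity of $\embeds$ around the cycle produces the mutually embeddable pair $\KA = D^{\A}_i$, $\KB = D^{\B}_{f(i)}$. So the repair is to replace your zigzag pigeonhole with the canonical-word chain pigeonhole, followed by this cycle argument to convert one-way embeddings into a mutual pair.
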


\begin{proof}
  By Lemma~\ref{lem:embedded}, there exist nonempty mutually-embeddable regular languages $\MA \subseteq \LA$ and $\MB \subseteq \LB$.
  By~\cite{nagy}, see also \cite{Afonin09}, every regular language can be expressed as a finite union of union-free languages, hence we have
  \begin{align*}
    \MA = D^{\A}_1 \cup D^{\A}_2 \cup \ldots \cup D^{\A}_k
    &&\text{ and } &&
    \MB = D^{\B}_1 \cup D^{\B}_2 \cup \ldots \cup D^{\B}_{\ell}\,,
  \end{align*}
  where all languages $D^{\A}_i$ and $D^{\B}_j$ are union-free.
  It remains to show that there exist $1 \leq i \leq k$ and $1 \leq j \leq \ell$
  such that $D^{\A}_i$ and $D^{\B}_j$ are mutually embeddable. 
  We first show that for each $D^{\A}_i$ there exists a $D^{\B}_j$ such that $D^{\A}_i \embeds D^{\B}_j$.
  Consider a union-free regular expression for $D^{\A}_i$.
  For any $n \in \N$, we define $w_n$ as a word obtained from the expression for $D^{\A}_i$ by replacing stars with $n$.
  For the expression $(a b^* c)^* (c b)^*$ we have $w_n = (a b^n c)^n (a b)^n$.
  Note that for every $w\in D^{\A}_i$, there exists $n \in \N$ and a word $w_n \in D^{\A}_i$ such that $w \embeds w_n$.
  Number $n$ can be chosen s $n = |w|$ since $w_n$ is in $D^{\A}_i$ by definition.
  
  Consider now the sequence $(w_n)_{n=1}^{\infty}$. 
  Every word $w_n$ can be embedded to a word of $\MB$, therefore to a word of $D^{\B}_j$, for some $j$.
  Thus, there exists a $j_0$ such that infinitely many words of the sequence $(w_n)_{n=1}^{\infty}$
  embed to words of $D^{\B}_{j_0}$.
  We claim that $D^{\A}_i \embeds D^{\B}_{j_0}$. 
  As mentioned above, for every $w \in D^{\A}_i$, there exists an $n$ such that $w \embeds w_n$. 
  As there are infinitely many words $w_s$ embedding to $D^{\B}_{j_0}$, there exists $m \geq n$
  such that $w_m$ embeds to $D^{\B}_{j_0}$.
  Clearly, $w_n \embeds w_m$, thus $w \embeds w_n \embeds w_m$ and, hence, $w$ embeds to $D^{\B}_{j_0}$, 
  which shows that $D^{\A}_i \embeds D^{\B}_{j_0}$.
 
  Thus, there is a function $f : \{1, \ldots, k\} \to \{1, \ldots, \ell\}$ such that $D^{\A}_i \embeds D^{\B}_{f(i)}$, and 
  a function $g : \{1, \ldots, \ell\} \to \{1, \ldots, k\}$ such that $D^{\B}_j \embeds D^{\A}_{g(j)}$.
  Define the function $h : \{1, \ldots, k\} \to \{1, \ldots, k\}$ by $h(i) = g(f(i))$.
  Considering the sequence $h^1(1), h^2(1), h^3(1), \ldots$ at some moment
  we encounter a repetition since all the values come from a finite set.
  In other words there exist numbers $c, d \in \N$, $c < d$, such that $h^c(1) = h^d(1) = i$. 
  This means that $D^{\A}_i \embeds D^{\B}_{f(i)} \embeds D^{\A}_i$.
  We assign $\KA = D^{\A}_i$ and $\KB = D^{\B}_{f(i)}$.
\end{proof}

By Lemma~\ref{lem:union_free_embedded} by we can simplify the example to
\begin{align*}
  \LA_3 = a(b^* a)^* a (bb)^* ab ca bb (bc)^*
  && \text{ and } &&
  \LB_3 = b(aab)^* ba ca (b(cb^*)^* c)^* cc (cbc)^* b
\end{align*}
by eliminating $b^* c (cb)^* \subseteq \LA_2$ and $(aa)^* + ba (bb)^* \subseteq \LB_2$.

We now consider a transformation $L \mapsto L'$ transforming a language to a simpler one, 
which is still mutually embeddable with the origin.
By transitivity of the mutual-embeddability relation, 
we may transform two mutually embeddable languages and the results will also be mutually embeddable.
Next four lemmas describe this transformation. 

\begin{lemma}[Star depth one]\label{lem:star_depth_one}
  For every union-free regular language $L$, there exists a union-free regular language $L'$
  such that:
  \begin{itemize}
    \item $L'$ is of star depth at most one, i.e., it has a regular expression
    of the form $v_1 (v_2)^* v_3 \ldots (v_{2i})^* v_{2i+1}$,
    where all $v_j \in \Sigma^*$,
    \item $L$ and $L'$ are mutually embeddable,
    \item $L' \subseteq L$.
  \end{itemize}
\end{lemma}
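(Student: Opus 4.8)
The plan is to flatten $L$ in a single pass: I would replace every star that sits at the \emph{top level} of the expression by a star over a single plain word, chosen so that it pumps the entire alphabet of the corresponding starred sublanguage. Concretely, I would first read off a union-free expression for $L$ in the normal form
\[
  L = u_0 (s_1)^* u_1 (s_2)^* \cdots (s_m)^* u_m,
\]
where each $u_i \in \Sigma^*$ is a (possibly empty) plain word and each $s_j$ is a union-free expression that may still contain nested stars. This form comes for free from the top-level concatenation of the expression by grouping maximal runs of symbols into the $u_i$. For each $j$ I then pick a word $v_j \in L(s_j)$ with $\Al(v_j) = \Al(L(s_j))$ and set
\[
  L' = u_0 (v_1)^* u_1 (v_2)^* \cdots (v_m)^* u_m.
\]
Since every $v_j$ is a plain word, $L'$ has exactly the required star-depth-one shape $v_1 (v_2)^* v_3 \cdots$, and no induction on nesting depth is needed because replacing each top-level star wholesale collapses all the inner stars at once.

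For the choice of $v_j$ I would take the word obtained from the expression $s_j$ by replacing each of its stars with exactly one iteration. This word lies in $L(s_j)$ and contains every symbol occurring in $s_j$, so its alphabet is precisely $\Al(L(s_j))$, and it is plain (star depth zero). The single observation that drives the whole argument is that whenever $\Al(x) \subseteq \Al(v)$ one has $x \preceq v^{|x|}$: match the $t$-th symbol of $x$ into the $t$-th copy of $v$, which by assumption contains that symbol. Hence every word over the alphabet $\Al(v_j)$ embeds into a sufficiently high power of $v_j$.

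With these ingredients the three claimed properties follow quickly, and I expect the only real work to be in $L \embeds L'$. The inclusion $L' \subseteq L$ holds because $(v_j)^* \subseteq (s_j)^*$ for each $j$ (as $v_j \in L(s_j)$) and concatenation is monotone; this inclusion also gives $L' \embeds L$ immediately. For the main obstacle, $L \embeds L'$, I would take an arbitrary $w = u_0 y_1 u_1 \cdots y_m u_m \in L$ with each $y_j \in (s_j)^*$. Since $\Al(y_j) \subseteq \Al(L(s_j)) = \Al(v_j)$, the observation above yields $y_j \preceq (v_j)^{|y_j|}$. Because the subsequence relation is compatible with concatenation, these factorwise embeddings combine into $w \preceq u_0 (v_1)^{|y_1|} u_1 \cdots (v_m)^{|y_m|} u_m$, and the right-hand word belongs to $L'$. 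This establishes mutual embeddability and finishes the proof.
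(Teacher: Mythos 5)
Your proposal is correct and takes essentially the same approach as the paper: the construction is identical (the paper's ``deleting all star operations'' from each top-level starred subexpression is exactly your ``replace each star by one iteration,'' giving a word $v_j \in L(s_j)$ with $\Al(v_j) = \Al(L(s_j))$), and both arguments hinge on the same observation that $\Al(x) \subseteq \Al(v)$ implies $x \preceq v^{|x|}$. The only cosmetic difference is that you embed an arbitrary $w \in L$ into $L'$ directly via its top-level factorization, whereas the paper detours through the words $w_n$ (all stars replaced by $n$) reused from the preceding lemma; your direct route is, if anything, slightly cleaner.
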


\begin{proof}
  Consider a union-free regular expression $r$ such that $L(r) = L$.
  Then $r$ is of the form
  \[
    r = v_1 (r_2)^* v_3 (r_4)^* v_5 \ldots (r_{2i})^* v_{2i+1},
  \]
  where $v_{2j+1} \in \Sigma^*$ and $r_{2j}$ is a (union-free) regular expression, for $j \leq i$.
  Let $v_{2j}$ be obtained from $r_{2j}$ by deleting all star operations.
  For example, if $r_{2j} = a(bcd^*b)^*$ then $v_{2j} = abcdb$.
  Clearly, $L(r') \subseteq L$. Our aim is to show that $L$ and $L(r')$ with
  \[
    r' = v_1 (v_2)^* v_3 (v_4)^* v_5\ldots (v_{2i})^* v_{2i+1}
  \]
  are mutually embeddable.
  Recall from the proof of Lemma~\ref{lem:union_free_embedded} that for every word $w \in L$
  there exists $n$ such that $w \embeds w_n$ (where $w_n$ denotes the word obtained from $r$ by replacing all star operations with $n$). 
  It is now sufficient to show that $w_n \embeds L(r')$ for every $n \geq 1$.
  To this end, fix an arbitrary $n \in \N$ and assume that
  \[
    w_n = v_1 \bar{v}_2 v_3 \ldots \bar{v}_{2i} v_{2i+1},
  \]
  where $\bar{v}_{2j} \in L(r_{2j})$, for $j \leq i$.
  Note that each symbol occurring in $\bar{v}_{2j}$ also occurs in $v_{2j}$.
  Thus, $\bar{v}_{2j} \embeds v_{2j}^{|\bar{v}_{2j}|}$ and, therefore, $w_n \embeds L(r')$, which implies that $L(r) \embeds L(r')$.
  Since $L(r') \subseteq L(r)$ implies that $L(r') \embeds L(r)$, the proof is complete.
\end{proof}

Due to Lemma~\ref{lem:star_depth_one} we may eliminate the stars on depth more than one obtaining
\begin{align*}
  \LA_4 = a(ba)^* a (bb)^* ab ca bb (bc)^*
  && \text{ and } &&
  \LB_4 = b(aab)^* ba ca (bcbc)^* cc (cbc)^* b.
\end{align*}

Recall that every union-free language of star depth at most one is of the form
$
  v_1 (v_2)^* v_3 \ldots (v_{2i})^* v_{2i+1}.
$
We call the words $v_{2j}$ in the scope of a star operation \emph{loops}.
A loop $v_{2j}$ with $\Al(v_{2j}) = \Sigma_0 \subseteq \Sigma$ is called a \emph{$\Sigma_0$-loop}.

\begin{lemma}[Eliminating a loop]\label{lem:canceling_a_loop}
  Let $L$ be a regular language of the form
  $
    L = v_1 (v_2)^* v_3 \ldots (v_{2i})^* v_{2i+1}
  $
  and assume that for some $1 \leq k, \ell \leq i$, $k \neq \ell$,
  \begin{itemize}
    \item $\Al(v_{2\ell}) \subseteq \Al(v_{2k})$, and
    \item $\Al(v_j) \subseteq \Al(v_{2k})$ for all $\min(2k,2\ell) < j < \max(2k,2\ell)$.
  \end{itemize}
  Then the languages $L$ and 
  $
    L' = v_1 (v_2)^* v_3 \ldots v_{2\ell-1} v_{2\ell+1} \ldots (v_{2i})^* v_{2i+1}
  $
  obtained from $L$ by eliminating the $v_{2\ell}$ loop are mutually embeddable.
\end{lemma}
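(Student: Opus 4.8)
The plan is to prove the two embeddings separately. The direction $L' \embeds L$ is immediate: every word of $L'$ is obtained from the expression for $L$ by taking the loop $(v_{2\ell})^*$ zero times, so in fact $L' \subseteq L$ and hence trivially $L' \embeds L$. All the work lies in showing $L \embeds L'$, i.e.\ that every word of $L$ is a subsequence of some word of $L'$.

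The key structural observation I would isolate first is that, because of the two alphabet hypotheses, every factor of the expression lying between the $v_{2k}$ loop and the $v_{2\ell}$ loop \emph{inclusive} uses only symbols of $\Al(v_{2k})$. Concretely, writing $m = \min(2k,2\ell)$ and $M = \max(2k,2\ell)$, we have $\Al(v_j) \subseteq \Al(v_{2k})$ for every $m \le j \le M$: this holds for $j = 2k$ trivially, for $j = 2\ell$ by the first hypothesis, and for the strictly intermediate indices by the second hypothesis. The only fact about subsequences I would then use is the one already exploited in the proof of Lemma~\ref{lem:star_depth_one}: a word $u$ over an alphabet $\Sigma_0$ embeds into $t^{|u|}$ whenever $\Al(t) = \Sigma_0$, by placing the $p$-th symbol of $u$ inside the $p$-th copy of $t$.

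Next I would fix an arbitrary word $w = v_1 v_2^{n_2} v_3 \cdots v_{2i}^{n_{2i}} v_{2i+1} \in L$ and, assuming $k < \ell$ (the case $k > \ell$ being symmetric), split it as $w = P \cdot A \cdot S$, where $A = v_{2k}^{n_{2k}} v_{2k+1} \cdots v_{2\ell-1} v_{2\ell}^{n_{2\ell}}$ is the stretch running from the $v_{2k}$ loop through the $v_{2\ell}$ loop, and $P$, $S$ are the untouched prefix and suffix. By the observation above, every symbol of $A$ lies in $\Al(v_{2k})$, so $A \embeds v_{2k}^{|A|}$. I would then take $w' = P \cdot A' \cdot S \in L'$, where $A' = v_{2k}^{N} v_{2k+1} \cdots v_{2\ell-1}$ is the same stretch in the reduced expression, with the $v_{2\ell}$ loop deleted and the $v_{2k}$ loop iterated $N \ge |A|$ times. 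Since $v_{2k}^{|A|}$ is a prefix of $A'$, we get $A \embeds v_{2k}^{|A|} \embeds A'$; as $P$ and $S$ are literally identical in $w$ and $w'$, this yields $w \embeds w'$, and $w' \in L'$ because the reduced expression still offers the loops $(v_{2j})^*$ for all $j \neq \ell$.

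I expect the only real subtlety to be bookkeeping: ensuring that the prefix $P$ ends exactly before the $v_{2k}$ loop and the suffix $S$ begins exactly after the $v_{2\ell}$ loop, so that $P$ and $S$ are genuinely unchanged by the deletion, and checking that the symmetric case $k > \ell$ (where $A$ \emph{ends} with the $v_{2k}$ loop and $v_{2k}^{|A|}$ sits as a suffix of $A'$) goes through identically. Everything else reduces to the single embedding fact for words into powers.
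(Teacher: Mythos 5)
Your proof is correct and essentially identical to the paper's: the same key observation that the whole stretch from the $v_{2k}$ loop through the $v_{2\ell}$ loop lies over $\Al(v_{2k})$, the same embedding of that stretch into a sufficiently high power of $v_{2k}$, and the same re-insertion of $v_{2k+1}\cdots v_{2\ell-1}$ to land in $L'$. The only cosmetic difference is that you handle $k>\ell$ by a direct symmetric argument, whereas the paper reduces it to $k<\ell$ via language reversal; both are fine.
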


\begin{proof}
  We can assume that $k < \ell$. 
  Indeed, if the lemma holds for $k < \ell$ we can immediately infer that it also holds for $k > \ell$ 
  because $K$ and $L$ are mutually embeddable if and only if the reversed languages $K^\text{rev}$ and $L^\text{rev}$ are mutually embeddable.
  As $L' \subseteq L$ we have that $L' \embeds L$. 
  It remains to show that $L$ embeds to $L'$.
  Fix an arbitrary word $w \in L$ and assume that
  $
    w = v_1 \bar{v}_2 v_3 \ldots \bar{v}_{2i} v_{2i+1}
  $
  where $\bar{v}_{2j} \in v_{2j}^*$, for $j \leq i$.
  Note that every symbol occurring in $\bar{v}_{2k} v_{2k+1} \ldots v_{2\ell-1} \bar{v}_{2\ell}$ belongs to $\Al(v_{2k})$.
  Then $\bar{v}_{2k} v_{2k+1} \ldots v_{2\ell-1} \bar{v}_{2\ell}$ embeds to $(v_{2k})^{|w|}$, which implies that
  $
    w \embeds v_1 \bar{v}_2 \ldots v_{2k-1} (v_{2k})^{|w|} v_{2\ell+1} \ldots \bar{v}_{2i} v_{2i+1}
  $
  and, therefore, also
  $$
    w \embeds v_1 \bar{v}_2 \ldots v_{2k-1} (v_{2k})^{|w|} v_{2k+1} \ldots v_{2\ell-1} v_{2\ell+1} \ldots 
    \bar{v}_{2i} v_{2i+1} \in L',
  $$
  which completes the proof.
\end{proof}

Using Lemma~\ref{lem:canceling_a_loop} we may eliminate unnecessary loops
$(bb)^*$ in $\LA_4$ and $(bcbc)^*$ (or, alternatively, $(cbc)^*)$ in $\LB_4$ obtaining
\begin{align*}
  \LA_5 = a(ba)^* aab ca bb (bc)^*
  && \text{ and } &&
  \LB_5 = b(aab)^* ba ca cc (cbc)^* b.
\end{align*}
Note that these are the languages from Example~\ref{ex:synchronization}.

We call a union-free regular expression of star depth at most one with expressions
$v_k$ and $v_\ell$ as mentioned in Lemma~\ref{lem:canceling_a_loop} 
\emph{redundant} since, intuitively, it has a redundant loop. 
A union-free regular expression of star depth at most one that is not redundant is called \emph{nonredundant}.
We use the same notions for the corresponding languages.
In what follows, when we speak about a regular expression of a nonredundant or redundant language, 
we mean the corresponding nonredundant or redundant regular expression, respectively. 
A nonredundant regular expression of the form
\[
  v_1 (v_2)^* v_3 (v_4)^* \ldots (v_{2k})^* v_{2k+1},
\]
where $v_i \in \Sigma^*$, for $1 \leq i \leq 2k+1$, is called
\emph{saturated} if for any two loops $v_m$ and $v_n$ all symbols
from $\Al(v_m) \cup \Al(v_n)$ occur in between.
The language of a saturated expression is called \emph{saturated}.
The intuition behind this notion is explained below.
The following lemma shows that we may assume that our languages are saturated.

\begin{lemma}[Unfolding loops]\label{lem:unfolding_loops}
  Let $L$ be a nonredundant language. 
  Then there exists a saturated language $L' \subseteq L$ such that $L$ and $L'$ are mutually embeddable.
\end{lemma}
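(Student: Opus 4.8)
The plan is to obtain $L'$ from $L$ by \emph{unfolding} each loop once at each of its two ends. Concretely, fix a nonredundant expression
\[
r = v_1 (v_2)^* v_3 (v_4)^* \cdots (v_{2k})^* v_{2k+1}
\]
for $L$ and replace every factor $(v_{2j})^*$ by $v_{2j} (v_{2j})^* v_{2j}$, absorbing the two freshly added copies into the neighbouring constant words. This yields a union-free expression $r'$ of star depth one whose loops are exactly the old loops $(v_{2j})^*$, but whose constant factors are now $v_1 v_2,\ v_2 v_3 v_4,\ v_4 v_5 v_6,\ \ldots,\ v_{2k} v_{2k+1}$. I set $L' = L(r')$.

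I then verify the three required properties. First, $L' \subseteq L$ is immediate, since $v_{2j} (v_{2j})^* v_{2j}$ defines a subset of $(v_{2j})^*$ (those words consisting of at least two copies of $v_{2j}$); in particular $L' \embeds L$. For the reverse embedding $L \embeds L'$, I take an arbitrary word $w = v_1 (v_2)^{n_2} v_3 \cdots (v_{2k})^{n_{2k}} v_{2k+1} \in L$ and observe that inserting one copy of each $v_{2j}$ immediately before and after the corresponding block turns $w$ into a word of $L'$ (take exactly $n_{2j}$ iterations in the $j$-th loop of $r'$) of which $w$ is a subsequence. Hence $w \embeds L'$, so $L$ and $L'$ are mutually embeddable.

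The substantive part is to show that $L'$ is saturated, which first requires that $L'$ be nonredundant, since saturation is only defined for nonredundant expressions. Saturation itself is the easy half: for any two loops $(v_m)^*$ and $(v_n)^*$ with $m < n$, the copy of $v_m$ now sitting immediately to the right of $(v_m)^*$ and the copy of $v_n$ now sitting immediately to the left of $(v_n)^*$ both lie strictly between the two loops, so $\Al(v_m) \cup \Al(v_n)$ occurs in between. The step I expect to be the main obstacle is nonredundancy, because unfolding enlarges the alphabet appearing between loops and therefore a priori threatens to \emph{create} a redundant pair. The observation that rescues the argument is that unfolding only \emph{adds} symbols between loops: the set of symbols occurring strictly between any two loops of $r'$ contains the set occurring strictly between the same two loops of $r$, and the loop alphabets $\Al(v_{2j})$ are unchanged. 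Consequently, for any pair with $\Al(v_{2\ell}) \subseteq \Al(v_{2k})$, the nonredundancy of $r$ already furnishes a symbol outside $\Al(v_{2k})$ strictly between them, and this witness survives in $r'$; so no pair meets both conditions of the redundancy definition and $r'$ is nonredundant. Combining the two parts shows that $L'$ is the desired saturated language, completing the proof.
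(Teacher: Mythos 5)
Your proof is correct and follows essentially the same route as the paper: you unfold each loop once in each direction to get $r' = v_1 v_2 (v_2)^* v_2 v_3 \cdots v_{2k} (v_{2k})^* v_{2k} v_{2k+1}$, obtain saturation from the freshly inserted copies, preserve nonredundancy by pulling any redundant pair in $r'$ back to $r$, and get mutual embeddability from $L' \subseteq L$ together with padding each loop block by two extra iterations. Your write-up is in fact somewhat more explicit than the paper's (e.g., the surviving witness symbol outside $\Al(v_{2k})$ and the concrete embedding of a word with $n_{2j}$ iterations into one with $n_{2j}+2$), but the argument is the same.
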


\begin{proof}
  Let the regular expression of $L$ be
  $
    r = v_1 (v_2)^* v_3 (v_4)^* \ldots (v_{2k})^* v_{2k+1}
  $
  and define
  $
    r' = v_1 v_2 (v_2)^* v_2 v_3 v_4 (v_4)^* v_4 \ldots v_{2k} (v_{2k})^* v_{2k} v_{2k+1},
  $
  where all the loops are unfolded once in every direction and the corresponding language is $L' = L(r')$.
  The nonredundancy of $L'$ is clear.
  Indeed, if there are two loops $v_i$ and $v_j$ in $r'$
  such that one of them has a bigger alphabet
  and every symbol in between $v_i$ and $v_j$ belongs to this alphabet,
  then the situation also takes place before the unfolding of the loops, in the regular expression $r$.
  Furthermore, it is easy to see that $L'$ is saturated.
  It is thus sufficient to show the mutual embeddability. Note that $L(r') \subseteq L(r)$, hence $L(r') \embeds L(r)$.
  On the other hand, every word from $L(r)$ either belongs to $L(r')$ or embeds to a word of $L(r')$
  obtained by unfolding some loops several times.
\end{proof}

After unfolding the loops $(ba)$ and $(bc)$ in $\LA_5$ and $(aab)^*$ and $(cbc)^*$ in $\LB_5$ we obtain
\[
  \LA_6 = a ba (ba)^* ba aab ca bb bc (bc)^* bc
\]
and
\[
  \LB_6 = b aab (aab)^* aab ba ca cc cbc (cbc)^* cbc b.
\]
In fact languages $\LA_5$ and $\LB_5$ were already saturated, but this is not always true in general
for nonredundant languages.

The $\Sigma_0$-decomposition of a saturated regular expression $r$ is of the form
\[
  r_1 \ u_1 (v_1)^* w_1 \ r_2 \ u_2 (v_2)^* w_2 \ldots \ r_k \ u_k (v_k)^* w_k \ r_{k+1},
\]
where 
words $v_1,v_2, \ldots, v_k$ are $\Sigma_0$-loops in $r$, 
words $u_i$ and $w_i$ satisfy $\Al(u_i)\cup \Al(w_i) \subseteq \Sigma_0$, for $1 \leq i \leq k$, and 
$r_1, r_2, \ldots, r_{k+1}$ are nonredundant expressions without $\Sigma_0$-loops
starting and ending with symbols not belonging to $\Sigma_0$.

Notice that the $\Sigma_0$-decomposition may not exist for non-saturated expressions.
Consider for instance the expression $(a b)^* a (a c)^*$, and try to compute its $\{a,b\}$-de\-com\-po\-si\-tion.
It does not exists, as, intuitively, there is no symbol outside $\{a, b\}$ between the $\{a,b\}$-loop and $\{a, c\}$-loop.
Thus it is not possible to start an expression $r_2$ by symbol not belonging to $\{a, b\}$, as required above.
This is the reason why we need to make it saturated, for example by unfolding the loops like in the proof of Lemma~\ref{lem:unfolding_loops}.
Then we obtain the expression $ab (ab)^* ab a ac (ac)^* ac$, which has the $\{a, b\}$-decomposition
of the form $r_1 = \eps$, $u_1 = ab$, $v_1 = ab$, $w_1 = abaa$ and $r_2 = c(ac)^*ac$ starting with symbol outside $\{a,b\}$, as needed.

For two saturated regular expressions $r^{\A}$ and $r^{\B}$ we say that an alphabet $\Sigma_0 \subseteq \Sigma$
is \emph{$(r^{\A}, r^{\B})$-loop-maximal} if
\begin{enumerate}
  \item there exists a $\Sigma_0$-loop either in $r^{\A}$ or in $r^{\B}$; and
  \item there is no $\Sigma' \supsetneq \Sigma_0$ for which a $\Sigma'$-loop occurs either in $r^{\A}$ or in $r^{\B}$.
\end{enumerate}
If $r^{\A}$ and $r^{\B}$ are clear from the context we simply say that an alphabet $\Sigma_0 \subseteq \Sigma$ is \emph{loop-maximal}.

\begin{lemma}[Decompositions]\label{lem:decompositions}
  Let $\LA$ and $\LB$ be two saturated and mutu\-ally-embeddable languages
  with $r^{\A}$ and $r^{\B}$ being their saturated regular expressions.
  Let $\Sigma_0 \subseteq \Sigma$ be loop-maximal. 
  Let the $\Sigma_0$-decomposition of $r^{\A}$ be
  \[
    r^{\A} = r^{\A}_1 \ u^{\A}_1 (v^{\A}_1)^* w^{\A}_1 \ r^{\A}_2 \ u^{\A}_2 (v^{\A}_2)^* w^{\A}_2 \ldots
    \ r^{\A}_k \  u^{\A}_k (v^{\A}_k)^* w^{\A}_k \ r^{\A}_{k+1}\,.
  \]
  Then the numbers of $\Sigma_0$-loops in $r^{\A}$ and $r^{\B}$ coincide.
  Moreover, the $\Sigma_0$-decompo\-si\-tion of $r^{\B}$ is
  \[
    r^{\B} = r^{\B}_1 \ u^{\B}_1 (v^{\B}_1)^* w^{\B}_1 \ r^{\B}_2 \ u^{\B}_2 (v^{\B}_2)^* w^{\B}_2 \ldots
    \ r^{\B}_k \ u^{\B}_k (v^{\B}_k)^* w^{\B}_k \ r^{\B}_{k+1}\,,
  \]
  where, for all $1 \leq i \leq k+1$, the languages $L(r^{\A}_i)$ and $L(r^{\B}_i)$ are mutually embeddable
  and saturated.
\end{lemma}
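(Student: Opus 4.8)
The plan is to extract information about the $\Sigma_0$-loops of $r^{\B}$ by exploiting the mutual embeddability together with the loop-maximality of $\Sigma_0$ and the saturation hypothesis. The key observation is that $\Sigma_0$ is a \emph{maximal} loop alphabet, so no loop in either expression uses a strictly larger alphabet. This means that a $\Sigma_0$-loop can produce, by pumping, arbitrarily long words over $\Sigma_0$ in which \emph{every} symbol of $\Sigma_0$ appears arbitrarily often; and crucially, because of maximality, the only way a word of one language can embed such a highly repetitive $\Sigma_0$-block is by having its own $\Sigma_0$-loop align with it. First I would make this precise: I would define, for each $\Sigma_0$-loop in $r^{\A}$ (say the $i$-th one), a pumping experiment in which that single loop is iterated $N$ times while all other loops are pumped only a bounded number of times, and consider the resulting word $w_N^{(i)} \in \LA$. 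Since $\LA \embeds \LB$, each $w_N^{(i)}$ embeds into some word of $\LB$.

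The main technical step is a \emph{matching argument} showing that, for large $N$, the long $\Sigma_0$-block coming from the $i$-th loop of $r^{\A}$ must be absorbed by a genuine $\Sigma_0$-loop of $r^{\B}$, and that distinct loops of $r^{\A}$ are forced into distinct loops of $r^{\B}$ appearing in the same left-to-right order. The argument is: a word of $\LB$ is $r^{\B}_1 \bar{u}_1 \cdots \bar{u}_m r^{\B}_{m+1}$ where each $\bar{u}_j$ comes from a loop; the non-loop parts $r^{\B}_j$ and the non-$\Sigma_0$-loops contribute only a bounded number of $\Sigma_0$-symbols (bounded independently of $N$, because by loop-maximality and nonredundancy these parts cannot be pumped to produce unboundedly many occurrences of \emph{all} of $\Sigma_0$), so for $N$ large enough the block forces a $\Sigma_0$-loop of $r^{\B}$. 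Running the symmetric experiment from $r^{\B}$ into $r^{\A}$ gives the reverse injection, and the order-preservation of subsequence embeddings turns these two injections into a bijection between the $\Sigma_0$-loops of the two expressions that respects their order. This yields the claim that the numbers of $\Sigma_0$-loops coincide and pins down the stated form of the $\Sigma_0$-decomposition of $r^{\B}$.

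It remains to establish that the interleaved non-loop parts $L(r^{\A}_i)$ and $L(r^{\B}_i)$ are mutually embeddable. Here I would use that the bijection between loops is order-respecting, so it partitions each expression into corresponding segments delimited by matched $\Sigma_0$-loops. I would argue that embeddings respecting the loop-to-loop correspondence must map the material strictly \emph{between} the $i$-th and $(i{+}1)$-th matched loops of $\LA$ into the corresponding between-region of $\LB$: any symbol of $r^{\A}_i$ that does not lie in $\Sigma_0$ (and by the definition of $\Sigma_0$-decomposition the $r^{\A}_i$ start and end with such symbols) cannot be swallowed by a $\Sigma_0$-loop, hence must be matched inside the portion of $\LB$ between the two matched loops, i.e.\ inside $r^{\B}_i$ together with the adjacent $u,w$ glue words. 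Carrying out the same pumping-and-matching localization in both directions gives $L(r^{\A}_i) \embeds L(r^{\B}_i)$ and $L(r^{\B}_i) \embeds L(r^{\A}_i)$. Saturation of each $L(r^{\A}_i)$ and $L(r^{\B}_i)$ is inherited from saturation of the full expressions, since every loop of $r^{\A}_i$ is also a loop of $r^{\A}$ and the separating-symbol condition between any two loops is unchanged by restricting to a contiguous sub-expression.

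I expect the \textbf{main obstacle} to be the matching argument in the second paragraph, specifically proving rigorously that the long pumped $\Sigma_0$-block \emph{cannot} be distributed across several shorter loops or across loop-plus-glue material in $r^{\B}$, but must land in a single $\Sigma_0$-loop. This is where loop-maximality (no strictly larger loop alphabet exists to absorb the block more cheaply) and nonredundancy (no redundant smaller loop can be combined with glue to fake a $\Sigma_0$-loop) are both essential, and it requires a careful counting of how many occurrences of each $\Sigma_0$-symbol the bounded parts can contribute versus the unbounded demand created by pumping.
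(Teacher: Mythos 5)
Your plan follows essentially the same route as the paper's proof: pump the $\Sigma_0$-loops, use mutual embeddability plus loop-maximality to force each long $\Sigma_0$-block into a single $\Sigma_0$-loop of the other expression, use the non-$\Sigma_0$ symbols separating the segments (guaranteed by the definition of $\Sigma_0$-decomposition) to make the loop correspondence injective and order-preserving, and then localize each segment between matched loops via its non-$\Sigma_0$ boundary symbols; saturation of the segments is inherited exactly as you say. Two points in your sketch would need repair, and in both cases the paper's version shows the fix. First, your per-loop pumping experiments (iterate one loop $N$ times, others boundedly) produce a separate embedding for each loop, and separate embeddings do not combine into a single injection from the loops of $r^{\A}$ to those of $r^{\B}$; the paper instead pumps \emph{all} $\Sigma_0$-loops simultaneously, $m+1$ times each with $m = |r^{\B}|$, inside one witness word of $\LA$, so that a single embedding carries all blocks at once and injectivity follows from the separating symbol $a \notin \Sigma_0$ between consecutive loops. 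Second, your proposed accounting --- counting how many occurrences of each $\Sigma_0$-symbol the bounded parts can supply --- is not the right invariant: $a^N b^N$ contains $N$ occurrences of each symbol of $\{a,b\}$ yet does not absorb $(ab)^N$. The correct obstruction is the number of full periods (alternations): with $m+1$ copies of a loop word, the pigeonhole principle applied to the positions of $r^{\B}$ at the period boundaries yields two boundaries mapped to the same expression position, and in a star-depth-one expression this means an entire period, whose alphabet is all of $\Sigma_0$, is read inside one loop of $r^{\B}$; loop-maximality then forces that loop's alphabet to equal $\Sigma_0$. (Nonredundancy, which you invoke here, is not actually needed for this step.) With these two adjustments your argument becomes the paper's; in particular your localization of each $s^{\A}_i = u$ is sound once you note that since the embedding positions are monotone, it suffices to pin down the first and last symbols of $u$, which are non-$\Sigma_0$ and hence cannot land in glue words or $\Sigma_0$-loops --- the paper rules out the last symbol spilling into $r^{\B}_{i+1}$ by observing that the remaining $\Sigma_0$-loops of $r^{\A}$ would then have to inject into strictly fewer $\Sigma_0$-loops of $r^{\B}$.
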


\begin{proof}
  If $v = a_1 \cdots a_k$ embeds into $w = b_1 \cdots b_l$ such that $v = b_{i_1} \cdots b_{i_k}$
  for $i_1 < \ldots < i_k$ then we say that symbol $a_j$ \emph{embeds into the position $i_j$} with respect to this embedding.
  Usually, if embedding is clear from the context, we omit it.

  We first show that both $r^{\A}$ and $r^{\B}$ have the same number of $\Sigma_0$-loops.
  For the sake of contradiction, assume that there are more $\Sigma_0$-loops in $r^{\A}$ than in $r^{\B}$.
  We will exploit the fact that $\LA \embeds \LB$. 
  Let $m$ be the size of $r^{\B}$, i.e., the number of symbols in it, and
  consider an arbitrary word
  \[
    v = s^{\A}_1 u^{\A}_1 (v^{\A}_1)^{m+1} w^{\A}_1 s^{\A}_2 u^{\A}_2 (v^{\A}_2)^{m+1} w^{\A}_2 \ldots
        s^{\A}_k u^{\A}_k (v^{\A}_k)^{m+1} w^{\A}_k s^{\A}_{k+1} \in \LA,
  \]
  where $s^{\A}_i \in L(r^{\A}_i)$, for $i \leq k+1$.
  There is a word $w \in \LB$ such that $v \embeds w$.
  Consider an arbitrary $v^{\A}_j$, for $1 \leq j \leq k$. 
  There are at least $m+1$ occurrences of $v^{\A}_j$ in $v$ and for each one the last symbol of $v^{\A}_j$ coincides with a symbol of $r^{\B}$.
  As there are $m+1$ words $v^{\A}_j$ there are also $m+1$ positions in $r^{\B}$ in which their first symbol embeds.
  By the pigeonhole principle, at least two of them coincide in $r^{\B}$.
  Recall that there is no $\Sigma'_0$-loop for $\Sigma'_0 \supsetneq \Sigma_0$ in $r^\B$.
  Thus some repeated position $x$ in $r^{\B}$ has to be inside some $\Sigma_0$-loop;
  otherwise, it would not be possible to read several words $v^{\A}_j$ and 
  after this end up in the same position in $r^{\B}$.
  Therefore we define a mapping from $\Sigma_0$-loops in $r^\A$ to $\Sigma_0$-loops in $r^\B$,
  which maps a loop from $r^\A$ to some loop in $r^\B$ in which the above discussed repeated position occurs.
  Note that there possibly could be more than one such loop in $r^\B$, then we pick one of them.

  We will show that no $\Sigma_0$-loop in $r^\B$ is assigned to two different $\Sigma_0$-loops $v^{\A}_i$ and $v^{\A}_j$ from $r^{\A}$.
  Assume, to the contrary, that both $v^{\A}_i$ and $v^{\A}_j$, for $i < j$, are mapped to the same $\Sigma_0$-loop $v^{\B}_s$ in $r^{\B}$.
  Thus every symbol in between $v^{\A}_i$ and $v^{\A}_j$ have to embed in some position in the loop $v^{\B}_s$.
  However, recall that there exists a symbol $a \notin \Sigma_0$ in $r_j^{\A}$ between the loops $v^{\A}_i$ and $v^{\A}_j$,
  while loop $v^{\B}_s$ contains only symbols from $\Sigma_0$. This leads to the contradiction.
  Therefore, in particular, there are not more $\Sigma_0$-loops in $r^{\A}$ than in $r^{\B}$.

  Thus, we may assume that the $\Sigma_0$-decomposition of $r^{\B}$ is of the form
  \[
    r^{\B} = r^{\B}_1 \ u^{\B}_1 (v^{\B}_1)^* w^{\B}_1 \ r^{\B}_2 \ u^{\B}_2 (v^{\B}_2)^* w^{\B}_2 \ldots
             \ r^{\B}_k \ u^{\B}_k (v^{\B}_k)^* w^{\B}_k \ r^{\B}_{k+1}\,.
  \]
  By definition of the $\Sigma_0$-decomposition all $r^{\A}_i$ and $r^{\B}_i$ are nonredundant.
  It remains to show that the languages $L(r^{\A}_i)$ and $L(r^{\B}_i)$ are mutually embedded.
  Fix an index $i$. We show that $L(r^{\A}_i) \embeds L(r^{\B}_i)$ since the other direction is analogous.
  Assume, to the contrary, that a word $u \in L(r^{\A}_i)$ does not embed to $L(r^{\B}_i)$.
  Note that the word $v$ above was chosen arbitrarily, with the only restriction that $\Sigma_0$-loops were repeated $m+1$ times each.
  Thus, put $s^{\A}_i = u$ and consider the position in word $w$ where the last symbol of $u$ could embed inspecting $r^\B$ from left to right.
  As shown above, $u$ cannot embed earlier than in $v^{\B}_{i-1}$. 
  Recall that the last symbol of $s^{\A}_i$ does not belong to $\Sigma_0$,
  thus it does not embed to $v^{\B}_{i-1}$ and $w^{\B}_{i-1}$.
  As $u \not\embeds L(r^{\B}_i)$, the last symbol of $u$ does not embed to the infix of $w$ corresponding to $r^{\B}_i$. 
  One more time, as the last symbol of $s^{\A}_i$ does not belong to $\Sigma_0$
  it does not embed to $u^{\B}_i (v^{\B}_i)^* w^{\B}_i$. 
  Thus, the first position where it could embed is somewhere in $r^{\B}_{i+1}$.
  Then, however, we have to assign $\Sigma_0$-loops corresponding
  to words $v^\A_{i+1}, \ldots, v^\A_k$ to $\Sigma_0$-loops corresponding
  to words $v^\B_{i+2}, \ldots, v^\B_k$ in (as shown above) an injective way, which is not possible.
\end{proof}

\begin{proof}[Proof of Lemma~\ref{lem:synchronization}]
  It is easy to see that if the languages $\KA$ and $\KB$ are nonempty and synchronized then there exists an infinite zigzag between them,
  thus also between languages $\LA$ and $\LB$.

  To prove the opposite implication, assume that there exists an infinite zigzag between the languages $\LA$ and $\LB$.
  Applying Lemma~\ref{lem:union_free_embedded} first we
  obtain nonempty union-free mutually-embeddable languages $\MA \subseteq \LA$ and $\MB \subseteq \LB$.
  Then, using Lemma~\ref{lem:star_depth_one}, several times Lemma~\ref{lem:canceling_a_loop} and, finally,
  Lemma~\ref{lem:unfolding_loops} we obtain languages $\KA$ and $\KB$ represented by saturated
  (thus also union free of star depth one and nonredundant) regular expressions
  that are mutually embeddable to the languages $\MA$ and $\MB$, respectively.
  As the mutual-embeddability relation is transitive, $\KA$ and $\KB$ are mutually embeddable.
  Note that $\KA \subseteq \MA$ and $\KB \subseteq \MB$ as the application of Lemmas~\ref{lem:star_depth_one},~\ref{lem:canceling_a_loop}~and~\ref{lem:unfolding_loops} results in sublanguages
  of the original languages. To complete the proof, we show that they are synchronized.

  Consider the regular expressions $r^{\A}$ and $r^{\B}$ (with the properties listed above) for $\KA$ and $\KB$, respectively,
  and denote the number of loops in $r^{\A}$ by $i_\A$ and in $r^{\B}$ by $i_\B$.
  We prove the rest of the lemma by induction on $i_\A + i_\B$.
  For $i_\A + i_\B = 0$, $i_1 = i_2 = 0$ and $\KA = \{w_1\}$ and $\KB = \{w_2\}$, for some $w_1, w_2 \in \Sigma^*$.
  As there exists an infinite zigzag between $\KA$ and $\KB$, we have $w_1 = w_2$ and,
  hence, $\KA$ and $\KB$ are synchronized in one step.
  Note that this is the place where we use that the languages are not necessarily disjoint.

  Assume that $i_\A + i_\B = k > 0$.   Fix an alphabet $\Sigma_0$ which is $(r^{\A}, r^{\B})$-loop-maximal.
  Then, by Lemma~\ref{lem:decompositions}, we obtain that the $\Sigma_0$-decomposition of $r^{\A}$ and $r^{\B}$ are
  \[
    r^{\A} = s^{\A}_1 \ u^{\A}_1 (v^{\A}_1)^* w^{\A}_1 \ s^{\A}_2 \ u^{\A}_2 (v^{\A}_2)^* w^{\A}_2 \ldots
              s^{\A}_k \ u^{\A}_k (v^{\A}_k)^* w^{\A}_k \ s^{\A}_{k+1}
  \]
  and
  \[
    r^{\B} = s^{\B}_1 \ u^{\B}_1\, (v^{\B}_1)^*\, w^{\B}_1 \ s^{\B}_2 \ u^{\B}_2\, (v^{\B}_2)^*\, w^{\B}_2 \ldots
              s^{\B}_k \ u^{\B}_k\, (v^{\B}_k)^*\, w^{\B}_k \ s^{\B}_{k+1}
  \]
  where the languages $L(s^{\A}_i)$ and $L(s^{\B}_i)$ are mutually embeddable and saturated for all $1 \leq i \leq k+1$.
  Thus, by induction hypothesis, all $L(s^{\A}_i)$ and $L(s^{\B}_i)$ are synchronized. 
  As, by definition, $u^{\A}_i (v^{\A}_i)^* w^{\A}_i$ and $u^{\B}_i (v^{\B}_i)^* w^{\B}_i$ are synchronized in one step, 
  for all $1 \leq i \leq k$, 
  we have that $r^{\A}$ and $r^{\B}$ are synchronized, 
  which completes the proof.
\end{proof}
 
\makeatletter{}\subsection*{Remaining Proofs of Section~\ref{sec:testing}}

\begin{proofof}{Lemma~\ref{lem:synchronizability}}
For two NFAs $\A$ and $\B$, the following conditions are equivalent.
\begin{enumerate}
  \item Automata $\A$ and $\B$ are synchronizable.
  \item There exist synchronized languages $\KA \subseteq L(\A)$ and $\KB \subseteq L(\B)$.
\end{enumerate}
\end{proofof}

\begin{proof}
The implication from left to right is immediate.
To prove the opposite implication,
let $\KA = D^{\A}_1 \ldots D^{\A}_k$ and $\KB = D^{\B}_1 \ldots D^{\B}_k$, 
where $D^{\A}_i$ and $D^{\B}_i$ are synchronized in one step, for all $1 \leq i \leq k$.
Define the \emph{$n$-th canonical word} of a singleton language as its unique word, and of a cycle language
$v_\pref (v_\midd)^* v_\suff$ as the word $v_\pref (v_\midd)^n v_\suff$.
Let $N$ be the maximum number of states of automata $\A$ and $\B$, and
let $w^{\A}_i$ and $w^{\B}_i$ be the $N$-th canonical words of languages $D^{\A}_i$ and $D^{\B}_i$, respectively, for $1 \leq i \leq k$.
Let $w^{\A} = w^{\A}_1 \cdots w^{\A}_k$ and $w^{\B} = w^{\B}_1 \cdots w^{\B}_k$.
Notice that $w^{\A} \in \KA \subseteq L(\A)$ and $w^{\B} \in \KB \subseteq L(\B)$.
Consider some of the accepting runs of $\A$ on $w^{\A}$ and of $\B$ on $w^{\B}$, respectively,
\[
  q^{\A}_1 \trans{w^{\A}_1} q^{\A}_2 \trans{w^{\A}_2} \ldots \trans{w^{\A}_{k-1}} q^{\A}_k \trans{w^{\A}_k} q^{\A}_{k+1}
\]
and
\[
  q^{\B}_1 \trans{w^{\B}_1} q^{\B}_2 \trans{w^{\B}_2} \ldots \trans{w^{\B}_{k-1}} q^{\B}_k \trans{w^{\B}_k} q^{\B}_{k+1}\,.
\]
By definition of run, states $q^{\A}_1$ and $q^{\B}_1$ are initial respectively in $\A$ and $\B$, and
states $q^{\A}_{k+1}$ and $q^{\B}_{k+1}$ are accepting in $\A$ and $\B$, respectively. 
Thus, to show that $\A$ and $\B$ are synchronizable, it is sufficient to show that 
pairs $(q^{\A}_i, q^{\B}_i)$ and $(q^{\A}_{i+1}, q^{\B}_{i+1})$ are synchronizable, for all $1 \leq i \leq k$. 
Fix some $1 \leq i \leq k$, then there are two cases. 
Either both $D^{\A}_i$ and $D^{\B}_i$ are singletons, or they are cycle languages. 
Consider first the situation when they are singletons. 
Then we have $w^{\A}_i = w^{\B}_i \in D^{\A}_i = D^{\B}_i$ and pairs $(q^{\A}_i, q^{\B}_i)$ and $(q^{\A}_{i+1}, q^{\B}_{i+1})$ are clearly synchronizable. 

Focus now on the situation where $D^{\A}_i$ and $D^{\B}_i$ are cycle languages. In this case,
\begin{align*}
  D^{\A}_i = v^{\A}_\pref (v^{\A}_\midd)^* v^{\A}_\suff
  &&\text{ and }&&
  D^{\B}_i = v^{\B}_\pref (v^{\B}_\midd)^* v^{\B}_\suff\,,
\end{align*}
for some $v^{\A}_\pref, v^{\A}_\midd, v^{\A}_\suff, v^{\B}_\pref, v^{\B}_\midd, v^{\B}_\suff \in \Sigma^*$; and
\begin{align*}
  w^{\A}_i = v^{\A}_\pref (v^{\A}_\midd)^{N} \, v^{\A}_\suff
  &&\text{ and }&&
  w^{\B}_i = v^{\B}_\pref (v^{\B}_\midd)^{N} \, v^{\B}_\suff\,.
\end{align*}
Consider a run of $\A$ on $w^{\A}_i$ from $q^{\A}_i$ to $q^{\A}_{i+1}$.
It is of the form
\[
  q^{\A}_i \trans{v^{\A}_\pref} m^{\A}_0 \trans{v^{\A}_\midd} m^{\A}_1  \trans{v^{\A}_\midd} \ldots  \trans{v^{\A}_\midd} m^{\A}_{N-1}
  \trans{v^{\A}_\midd} m^{\A}_N \trans{v^{\A}_\suff} q^{\A}_{i+1}\,,
\]
for some states $m^{\A}_j$, for $0 \leq j \leq N$.
Notice that at least two among states $m^{\A}_0, \ldots, m^{\A}_N$ necessarily coincide, as automaton $\A$ has no more than $N$ states.
Assume thus that for some $0 \leq k < \ell \leq N$ we have $m^{\A}_k = m^{\A}_\ell = m^{\A}$. Then
\[
  q^{\A}_i \trans{v^{\A}_\pref (v^{\A}_\midd)^{k}} m^{\A} \trans{(v^{\A}_\midd)^{\ell-k}} m^{\A} \trans{(v^{\A}_\midd)^{N-\ell} v^{\A}_\suff} q^{\A}_{i+1}\,,
\]
which shows that states $q^{\A}_i$ and $q^{\A}_{i+1}$ are $\Al(v^{\A}_\midd)$-connected in $\A$,
since we have $\Al(v^{\A}_\pref)\cup \Al(v^{\A}_\suff) \subseteq \Al(v^{\A}_\midd)$ by definition of languages
synchronized in one step.
Similarly we can show that $q^{\B}_i$ and $q^{\B}_{i+1}$ are $\Al(v^{\B}_\midd)$-connected in $\B$.
However, by definition of synchronization in one step, the cycle alphabets of $D^{\A}_i$ and $D^{\B}_i$ are the same,
so $\Al(v^{\A}_\midd) = \Al(v^{\A}_\midd)$. 
This shows that the pairs $(q^{\A}_i, q^{\B}_i)$ and $(q^{\A}_{i+1}, q^{\B}_{i+1})$ are synchronizable and completes the proof.
\end{proof}

\begin{proofof}{Theorem~\ref{thm:characterization}}
  Let $\A$ and $\B$ be two NFAs. 
  Then the languages $L(\A)$ and $L(\B)$ are separable by piecewise testable languages
  if and only if
  the automata $\A$ and $\B$ are not synchronizable.
\end{proofof}

\begin{proof}
  This theorem follows from the previous results.
  Namely, by Theorem~\ref{theo:general-characterization}, the languages $L(\A)$ and $L(\B)$ are separable
  by piecewise testable languages if and only if there is no infinite zigzag between them.
  Lemma~\ref{lem:synchronization} shows that the existence of a zigzag is equivalent to the existence of two synchronized
  sublanguages $\KA \subseteq L(\A)$ and $\KB \subseteq L(\B)$. Finally, by Lemma~\ref{lem:synchronizability},
  the existence of two synchronized sublanguages is equivalent to the fact that the automata $\A$ and $\B$ are synchronizable,
  which concludes the proof.
\end{proof}

\begin{proofof}{Theorem~\ref{lem:algorithm}}
  Given two NFAs $\A$ and $\B$, it is possible to test in polynomial
  time whether $L(\A)$ and $L(\B)$ can be separated by a piecewise
  testable language. 
\end{proofof}

\begin{proof}
  By Theorem~\ref{thm:characterization} it is enough to check whether
  $\A$ and $\B$ are synchronizable.  Let $\A = (Q^{\A}, \Sigma,
  \delta^{\A}, q_0^{\A}, F^{\A})$ and $\B = (Q^{\B}, \Sigma,
  \delta^{\B}, q_0^{\B}, F^{\B})$.  We will consider the graph
  $\graph$ for which the vertices are pairs of states of $Q^{\A} \times
  Q^{\B}$ and the edges correspond to pairs of vertices synchronizable in
  one step.  Specifically, there is an edge $(p^\A, p^\B) \trans{}
  (q^\A, q^\B)$ in $\graph$ if and only if $(p^\A, p^\B)$ and $(q^\A,
  q^\B)$ are synchronizable in one step.   
        Thus, $\A$ and $\B$ are synchronizable if and only if a vertex
  consisting of accepting states is reachable in $\graph$ from the
  pair of initial states $(q^{\A}_0, q^{\B}_0)$. Since reachability is
  testable in PTIME, it is thus sufficient to show how we compute the
  edges of $\graph$.

  The definition of synchronizability in one step
  (page~\pageref{def:ndfsynch}) consists of two cases.  We refer to the
  first case as \emph{symbol synchronization} and to the second case as
  \emph{cycle synchronization}.

  For two symbol-synchronizable pairs of states $(p^\A, p^\B)$,
  $(q^\A, q^\B)$, there should be an edge $(p^\A,p^\B)\to (q^\A,q^\B)$
  in $\graph$ if there exists an $a \in \Sigma$ such that $p^\A
  \trans{a} q^\A$ and $p^\B \trans{a} q^\B$.  Since it is easy to find
  all these pairs in polynomial time, these edges in $\graph$ can be
  easily constructed.

  We now show how to construct the edges for cycle-synchronized
  states.  For two pairs $(p^\A, p^\B)$ and $(q^\A, q^\B)$ to be
  cycle-synchronizable, we require that $p^\A$ and $q^\A$ are
  $\Sigma_0$-connected in $\A$, and $p^\B$ and $q^\B$ are
  $\Sigma_0$-connected in $\B$, for (the same) $\Sigma_0 \subseteq
  \Sigma$.  We now rephrase this definition using other notions that
  will be useful in the algorithm.

  A pair of states $(p^\A, p^\B) \in Q^\A \times Q^\B$ has a
  \emph{saturated $\Sigma_0$-cycle} if there exist two words $v^\A, v^\B$
  satisfying
  \begin{enumerate}
  \item $\Al(v^\A) = \Al(v^\B) = \Sigma_0$;
  \item $p^\A \trans{v^\A} p^\A$ in $\A$; and
  \item $p^\B \trans{v^\B} p^\B$ in $\B$.
  \end{enumerate}
  We say that there is a \emph{$\Sigma_0$-route} from $(p^\A,
  p^\B) \in Q^\A \times Q^\B$ to $(q^\A, q^\B) \in Q^\A \times Q^\B$
  if there exist words $v^\A$ and $v^\B$ in $\Sigma_0^*$ such that
  $p^\A \trans{v^\A} q^\A$ and $p^\B \trans{v^\B} q^\B$.  So, in
  contrast to saturated $\Sigma_0$-cycles, here we do not require that the
  alphabets $\Al(v^\A)$ and $\Al(v^\B)$ are equal to $\Sigma_0$.

  Note that if a pair $V = (q^\A,q^\B)$ has a saturated
  $\Sigma_0$-cycle and a saturated $\Sigma_1$-cycle, then it also has
  a saturated $(\Sigma_0 \cup \Sigma_1)$-cycle (obtained by the
  concatenation of the two cycles).  Thus, for every pair $V =
  (q^\A,q^\B)$, there exists a unique maximal alphabet $\Sigma_0
  \subseteq \Sigma$ such that it has a saturated
  $\Sigma_0$-cycle. (This unique maximal alphabet can be empty if no
  such saturated cycle exists.)  We call this alphabet the
  \emph{saturated cycle alphabet} of $V$ and denote it by
  $\Sigma_0^V$.  This means that $V_p = (p^\A, p^\B)$ and $V_q =
  (q^\A, q^\B)$ are cycle synchronizable if and only if there is a $V$
  such that there are $\Sigma_0^V$-routes from $V_p$ to $V$ and
  from $V$ to $V_q$.

  To find all the cycle synchronizable pairs we can first compute, for
  every $V = (p^\A,p^\B)$, the saturated cycle alphabet
  $\Sigma_0^V$.  This can be done in polynomial time in the following
  manner.  Let $C^{\A}_0$ and $C^{\B}_0$ be the strongly connected
  components of $\A$ and $\B$ containing $p^\A$ and $p^\B$,
  respectively.  For a strongly connected component $C$, let $\Al(C)$ be the
  union of all symbols $a$ of $\Sigma$ that label transitions of the
  form $p \trans{a} q$, where both $p$ and $q$ belong to $C$.  If
  $\Al(C^\A_0) = \Al(C^\B_0)$, then $\Sigma_0^V$
  equals $\Al(C^\A_0)$.  Otherwise, set $\Sigma_1 = \Al(C^\A_0) \,
  \cap \, \Al(C^\B_0)$ and consider automata $\A_1$ and $\B_1$
  obtained from $\A$ and $\B$ by removing all transitions labeled by
  symbols from $\Sigma\setminus\Sigma_1$.  Consider the strongly
  connected components $C^\A_1$ and $C^\B_1$ of $\A_1$ and $\B_1$
  containing $p^\A$ and $p^\B$, respectively, and proceed in the same
  way as before.  Continuing this procedure we obtain a sequence of
  decreasing alphabets $\Sigma_1 \supsetneq \Sigma_2 \supsetneq
  \ldots$, hence we perform at most $|\Sigma|$ iterations. If we
  arrive at the empty alphabet then we say $\Sigma_0^V = \emptyset$.

  We argue that we compute $\Sigma_0^V$ correctly. Clearly, if the
  algorithm returns a set $\Sigma'$, then $\Sigma' \subseteq
  \Sigma_0^V$. Conversely, we have that $\Sigma_0^V \subseteq \Sigma'$
  because, at each point in the algorithm, the alphabet under
  consideration contains $\Sigma_0^V$. (In the first iteration,
  $\Sigma_0^V \subseteq \Al(C^\A_0)$ and $\Sigma_0^V \subseteq
  \Al(C^\B_0)$. Furthermore, at each iteration $i$, if $\Sigma_0^V
  \subseteq \Al(C^\A_i)$ and $\Sigma_0^V \subseteq \Al(C^\B_i)$, then
  $\Sigma_0^V \subseteq (\Al(C^\A_i) \cap \Al(C^\B_i))$.)

  Once we know, for each pair $V = (q^\A,q^\B)$, its saturated cycle alphabet
  $\Sigma_0^V$, we can find all vertices $V_p$ such that
  there is a $\Sigma_0^V$-route from $V_p$ to $V$ and all vertices
  $V_q$ such that there is a $\Sigma_0^V$-route from $V$ to $V_q$,
  and add edges $V_p\to V_q$ to the graph $\graph$. This concludes the
  construction of $\graph$ and the presentation of the algorithm.
  We note that our algorithm is clearly not yet time-optimal.
                  \end{proof}

\makeatletter{}\section*{Proofs of Section~\ref{sec:others}}

The goal is to prove the following Theorem.

\begin{proofof}{Theorem~\ref{theo:other}}
  For $O \in \{\preceq,\sufforder\}$ and $C$ being one of single,
  unions, or boolean combinations, we have that the complexity of the
  separation problem by $\F(O,C)$ is as indicated in
  Table~\ref{tab:overview}.
\end{proofof}

\subsection*{The Subsequence Order Cases}

\begin{lemma}\label{lem:sepNPc}
  The separation problem by $\F(\preceq,\text{single})$ is NP-complete.
\end{lemma}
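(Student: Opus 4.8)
The plan is to establish membership in NP and NP-hardness separately. First, it helps to restate concretely what separation by $\F(\preceq,\text{single})$ means: $K$ is separable from $L$ iff there is a single word $w$ whose closure $\closure^{\preceq}(w)=\{u \mid w \preceq u\}$ contains $K$ and avoids $L$. Since $\closure^{\preceq}(w)\supseteq K$ says exactly that $w$ is a subsequence of every word of $K$, and $\closure^{\preceq}(w)\cap L=\emptyset$ says that $w$ is a subsequence of no word of $L$, the question is: \emph{is there a common subsequence of $K$ that embeds into no word of $L$?}

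For membership in NP I would first bound the witness. If $K\neq\emptyset$, let $x$ be a shortest word of $L(\A)$; its length is at most $|Q^{\A}|$. Any separator $w$ satisfies $w\preceq x$ (because $x\in K\subseteq\closure^{\preceq}(w)$), so $|w|\le |Q^{\A}|$ and it suffices to guess $w$ of polynomial length. Both conditions are then checked in polynomial time: $\closure^{\preceq}(w)$ is recognized by the deterministic $(|w|{+}1)$-state ``subsequence automaton'' for $w$, so $K\subseteq\closure^{\preceq}(w)$ amounts to emptiness of the product of $\A$ with the complement of that automaton, and $\closure^{\preceq}(w)\cap L=\emptyset$ amounts to emptiness of the product of $\B$ with that automaton. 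The degenerate cases are handled directly: if $L=\emptyset$ the answer is always yes (take $w=\epsilon$, so $S=\Sigma^{*}$); if $K=\emptyset$ a separator exists iff the downward closure $\{u\mid u\preceq y \text{ for some } y\in L\}$ is not all of $\Sigma^{*}$, which holds iff $\B$ has no strongly connected component on an accepting path whose internal transition alphabet equals $\Sigma$ — a polynomial-time test.

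For NP-hardness I would reduce from the Longest Common Subsequence problem, which is NP-complete for an unbounded number of strings \cite{DBLP:journals/jacm/Maier78}: given strings $s_1,\dots,s_k$ over $\Sigma$ and a bound $t$, decide whether they admit a common subsequence of length at least $t$. The \emph{key observation} is a length gadget: for $L=\{y\in\Sigma^{*} \mid |y|<t\}$ and any word $w$ we have $\closure^{\preceq}(w)\cap L=\emptyset$ iff $|w|\ge t$. Indeed, if $|w|\ge t$ then $w$ cannot embed into a strictly shorter word, and if $|w|<t$ then $w$ itself lies in $L$ and embeds into itself. I therefore set $K=\{s_1,\dots,s_k\}$ (a finite language defined by a polynomial-size NFA $\A$) and $L=\{y \mid |y|<t\}$ (recognized by a DFA $\B$ with $t{+}1$ states). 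A separator $w$ then exists iff $w$ is a common subsequence of the $s_i$ with $|w|\ge t$, i.e.\ iff the LCS instance is a yes-instance. As the construction is polynomial, NP-hardness follows, and together with the upper bound the problem is NP-complete.

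The main obstacle is the hardness reduction, specifically isolating the length constraint of LCS as a separation constraint; the gadget $L=\{y\mid |y|<t\}$ does exactly this, converting ``the common subsequence is long'' into ``the common subsequence escapes $L$''. The only other point requiring care is the $K=\emptyset$ corner case in the upper bound, where a separating $w$ need not be short, so instead of guessing one decides universality of a downward closure directly.
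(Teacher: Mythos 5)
Your proof is correct and follows essentially the same route as the paper's: for membership you guess a witness bounded by the length of a shortest word of $K$ and verify both conditions via the deterministic subsequence automaton for $w$, and for hardness you give the same reduction from Longest Common Subsequence \cite{DBLP:journals/jacm/Maier78} with $K$ the set of input strings and $L$ the words of length below the threshold. Your explicit treatment of the degenerate cases $K=\emptyset$ and $L=\emptyset$ (which the paper's witness bound silently presupposes away) is a small but sound refinement, not a different approach.
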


\begin{proof}
  Let $K$ and $L$ be two regular languages over $\Sigma$ given by
  NFAs.  The problem is to find a word $w$ in $\Sigma^*$ such that $K
  \subseteq \closure^\preceq(w)$ and $L \cap \closure^\preceq(w) =
  \emptyset$.  By definition of the subsequence order $\preceq$, the
  maximal length of such a word $w$ is equal to the length of a
  shortest word of $K$. Therefore, such a $w$ cannot be longer than
  the size of the automaton for $K$.  An NP algorithm can guess such a
  word $w$ of length at most the size of the automaton and computes
  the minimal DFA for $\closure^\preceq(w)$. This minimal DFA
  corresponds to a ``greedy'' procedure for embedding $w$ in a given
  string. That is, the states of this DFA correspond to the maximal
  prefix of $w$ that can be embedded in the currently read string. It
  can be computed in polynomial time from $w$.
  Verifying if $K \subseteq \closure^\preceq(w)$ and $L \cap
  \closure^\preceq(w) = \emptyset$ then reduces to standard automata
  constructions that can be done in polynomial time.

  To show NP-hardness, we use a simple reduction of the longest common
  subsequence problem, which is well known to be
  NP-hard~\cite{DBLP:journals/jacm/Maier78}.  A word $w$ is a
  \emph{longest common subsequence} of words $(w_i)_{i=1}^n$ if $w
  \preceq w_i$ for all $1 \leq i \leq n$ and there is no longer word
  with this property.  This word $w$ is not necessarily unique (the
  longest common subsequence for $ab$ and $ba$ could be $a$ or $b$).
  By~\cite{DBLP:journals/jacm/Maier78}, to determine whether the
  length of the longest common subsequences of words $(w_i)_{i=1}^n$
  is longer than a given $k$ is NP-hard with respect to $\sum_{i=1}^n
  |w_i|$ and $k$.  

  Consider the DFA $\A$ that accepts the finite language $K=\{w_1,
  \ldots, w_n\}$ and the DFA $\B$ that accepts the language $L$ of all
  words up to length $k-1$. Then we have that the existence of a
  common subsequence of $(w_i)_{i=1}^n$ longer than $k$ is then
  equivalent to the possibility to separate $K$ and $L$ by
  $\F(\preceq,\text{single})$. Furthermore, we can construct $\A$ in
  time $O(\sum_{i=1}^n w_i)$ and $\B$ in time $O(k \cdot \sum_{i=1}^n
  w_i)$. Since both $\A$ and $\B$ are DFAs, we have shown that the
  problem even remains NP-hard if the input is given as DFAs instead
  of NFAs.   
\end{proof}

Actually, using the proof of Lemma~\ref{lem:sepNPc} we can prove the
same result for union-free languages.

\begin{lemma}\label{lem:union-free}
  The separation problem by union-free languages is NP-complete.
\end{lemma}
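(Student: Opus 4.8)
The plan is to prove NP-completeness in two parts, reusing the longest-common-subsequence reduction from Lemma~\ref{lem:sepNPc} for hardness and guessing a small witness for membership. For hardness I would take exactly the instance of Lemma~\ref{lem:sepNPc}: given words $w_1,\dots,w_n$ and a bound $k$ for the longest common subsequence problem (NP-hard by~\cite{DBLP:journals/jacm/Maier78}), let $K=\{w_1,\dots,w_n\}$ and let $L$ be the finite language of all words of length less than $k$, both given by DFAs constructible in polynomial time. I would then show that $K$ is separable from $L$ by a union-free language if and only if $w_1,\dots,w_n$ have a common subsequence of length at least $k$.

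Two structural facts about union-free languages drive this equivalence. First, the \emph{shortest-word property}: for a union-free regular expression $r$, let $w_0$ be the word obtained by deleting every starred subexpression (equivalently, taking each Kleene star to zero iterations). A straightforward induction on $r$ shows $w_0\in L(r)$ and $w_0\preceq w$ for every $w\in L(r)$; hence $w_0$ is the unique shortest word of $L(r)$ and $L(r)\subseteq\closure^\preceq(w_0)$. Second, every language $\closure^\preceq(w)=\Sigma^*a_1\Sigma^*\cdots a_m\Sigma^*$ is itself union-free, because $\Sigma^*$ is union-free via the identity $\Sigma^*=(b_1^*\cdots b_p^*)^*$ for $\Sigma=\{b_1,\dots,b_p\}$, so the whole expression uses only concatenation and star. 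With these facts the equivalence is immediate: if $w$ is a common subsequence with $|w|\ge k$, then $\closure^\preceq(w)$ is a union-free separator, since $K\subseteq\closure^\preceq(w)$ (as $w\preceq w_i$) and every word of $\closure^\preceq(w)$ has length $\ge k$ so it avoids $L$; conversely, if some union-free $S$ separates, its shortest word $w_0$ satisfies $K\subseteq S\subseteq\closure^\preceq(w_0)$, so $w_0\preceq w_i$ for all $i$, and $w_0\in S$ together with $S\cap L=\emptyset$ forces $|w_0|\ge k$, so $w_0$ is a common subsequence of length $\ge k$.

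For membership in NP I would guess a representation of a union-free separator $S$ and verify $K\subseteq S$ and $L\cap S=\emptyset$ by standard automata constructions. The shortest-word property already bounds part of a witness: the shortest word $w_0$ of $S$ is a common subsequence of $K$, hence $|w_0|$ is at most the length of a shortest word of $K$, which is polynomial in the size of the NFA for $K$.

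The hard part, and the main obstacle, is to show that whenever a union-free separator exists, one exists of polynomial size and in a form for which both inclusion checks run in polynomial time. I would attack this by a normalization and pumping argument on the union-free structure of $S$: its skeleton is $w_0$, whose length is already bounded; each starred loop can be pumped down so that its word is no longer than the product of the state sets of the automata for $K$ and $L$ without destroying $K\subseteq S$ or $S\cap L=\emptyset$; and the number of loops can be bounded by tracking how the words of $K$ are threaded through $S$ in the product of the automata for $K$ and $L$. Unlike the subsequence-single case, $\closure^\preceq(w_0)$ need not separate — for instance $K=\{ab\}$ and $L=\{aab\}$ are separated only by the strictly smaller union-free language $\{ab\}$ — so a minimal separator genuinely uses nontrivial loops, and the delicate point is to argue that these loops and their number stay polynomial and that containment of the NFA for $K$ in the guessed union-free language can indeed be decided in polynomial time.
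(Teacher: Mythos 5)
Your NP-hardness half is correct and is essentially the paper's argument: the same longest-common-subsequence reduction as in Lemma~\ref{lem:sepNPc}, combined with the observation that $\closure^\preceq(w)$ is union-free because $(b_1+\cdots+b_m)^*=(b_1^*\cdots b_m^*)^*$. Your \emph{shortest-word property} is exactly the (implicit) reason why the converse direction of the reduction survives the passage from single subsequence-closures to arbitrary union-free separators, and your example $K=\{ab\}$, $L=\{aab\}$ is a correct and useful illustration that separability by union-free languages does not coincide with separability by $\F(\preceq,\text{single})$ in general, even though the two coincide on the instances produced by the reduction (there $L$ consists of all short words, so only the length of the skeleton matters).

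The genuine gap is in the NP membership half, and you name it yourself without closing it: you never establish that a union-free separator, when one exists, can be chosen of polynomial size, nor that the verification step is feasible. Your pumping plan ("each starred loop can be pumped down\dots the number of loops can be bounded by tracking how the words of $K$ are threaded through $S$") is a program, not a proof --- pumping a loop down shrinks $S$ and so preserves $S\cap L=\emptyset$, but it can destroy $K\subseteq S$, and bounding the \emph{number} of loops is precisely the step you leave open. The paper avoids this derivation entirely by fixing the shape of the certificate up front: the algorithm guesses a word $w$ of length at most that of a shortest word of $K$ (the bound your shortest-word property justifies, as in Lemma~\ref{lem:sepNPc}) together with the positions and scopes of the star operators inserted into it, so the witness is polynomial-size by construction rather than by a pumping argument on an arbitrary separator. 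Moreover, your appeal to "standard automata constructions" for checking $K\subseteq S$ does not go through for an arbitrary guessed regular expression: deciding containment of an NFA's language in the language of a regular expression is PSPACE-complete in general, since it requires complementing $S$. In Lemma~\ref{lem:sepNPc} this is circumvented because $\closure^\preceq(w)$ has a small "greedy" minimal DFA; any completed membership proof must likewise guarantee that the guessed union-free witness admits a polynomial-size DFA (or some other polynomially checkable containment test), which is exactly the point your proposal flags as delicate and does not resolve.
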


\begin{proof}
  The proof of NP-hardness of the proof of Lemma~\ref{lem:sepNPc} also
  applies to union-free languages since the language
  $\closure^\preceq(a_1a_2\ldots
  a_n)=\Sigma^*a_1\Sigma^*a_2\Sigma^*\ldots\Sigma^*a_n\Sigma^*$ is
  union free.  Indeed, any regular expression
  $(b_1+b_2+\ldots+b_m)^*=(b_1^*\ldots b_m^*)^*$ is union free.  The
  NP algorithm guesses a word $w$ as above and the positions and
  scopes of star operators. 
\end{proof}

We now turn to separation by $\F(\preceq,\text{unions})$.

\begin{lemma}\label{lem:union}
  A language $K$ is separable from a language $L$ by 
  $\F(\preceq,\text{unions})$ if and only if there exist no words $w \in K$
  and $w' \in L$ such that $w \preceq w'$.
\end{lemma}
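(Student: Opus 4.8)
The plan is to verify both implications essentially by unfolding the definitions of $\closure^\preceq$ and of separability, the single nonroutine step being an appeal to Higman's Lemma to keep the separating union finite. For the \emph{only if} direction I would argue by contraposition: suppose there do exist words $w \in K$ and $w' \in L$ with $w \preceq w'$, and let $S = \bigcup_{i=1}^{m} \closure^\preceq(u_i)$ be an arbitrary language from $\F(\preceq,\text{unions})$ that contains $K$. Since $w \in K \subseteq S$, there is an index $i$ with $u_i \preceq w$; by transitivity of $\preceq$ we then get $u_i \preceq w \preceq w'$, so $w' \in \closure^\preceq(u_i) \subseteq S$. As $w' \in L$, this shows $S \cap L \neq \emptyset$, so no such $S$ separates $K$ from $L$.

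For the \emph{if} direction, assume that no pair $w \in K$, $w' \in L$ with $w \preceq w'$ exists. The candidate separator is the full upward closure $S = \closure^\preceq(K) = \bigcup_{w \in K} \closure^\preceq(w)$, which obviously contains $K$. Moreover the hypothesis says precisely that $\closure^\preceq(w) \cap L = \emptyset$ for every $w \in K$ (a word $w' \in \closure^\preceq(w) \cap L$ would witness $w \preceq w'$), so $S \cap L = \emptyset$. The remaining issue is that this union ranges over all of $K$ and may be infinite, whereas $\F(\preceq,\text{unions})$ permits only finite unions. Here I would invoke the corollary to Higman's Lemma recorded in the Preliminaries: the language $S$ is $\preceq$-closed, hence a finite union of languages of the form $\Sigma^* a_1 \Sigma^* \cdots \Sigma^* a_n \Sigma^*$, that is, $S = \bigcup_{j=1}^{k} \closure^\preceq(v_j)$ for finitely many words $v_j$. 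This exhibits $S$ as a member of $\F(\preceq,\text{unions})$ and finishes the argument.

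The only place where genuine work beyond definition chasing is required is the collapse of the a priori infinite union $\closure^\preceq(K)$ to a finite one, and this is exactly what Higman's Lemma supplies. I therefore expect no serious obstacle: the two set-theoretic inclusions are immediate, and the finiteness is delivered by a corollary already stated in the paper.
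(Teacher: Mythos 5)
Your proof is correct and takes essentially the same route as the paper: the \emph{only if} direction is the paper's observation that finite unions of $\preceq$-closed languages are $\preceq$-closed and hence absorb $w'$ along with $w$, and your separator $\closure^\preceq(K)$ is exactly what the paper obtains by applying Claim~\ref{claim:one_layer} with $\layer(K,L)=K$, with the corollary to Higman's Lemma collapsing the infinite union to a finite one in both treatments. No gaps.
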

\begin{proof}
  If there exist $w \in K$ and $w' \in L$ with $w\preceq w'$, then any
  $\preceq$-closed language containing $w$ also contains $w'$. Since
  unions of $\preceq$-closed languages are also $\preceq$-closed, we
  have that $K$ is not separable from $L$ by $\F^{\preceq}_\union$.

  The opposite implication follows directly from
  Claim~\ref{claim:one_layer}. Observe that in this case $\layer(K, L)
  = K$ and every $\preceq$-closed language is a finite union of
  languages $\Sigma^* a_1 \Sigma^* \cdots \Sigma^* a_n \Sigma^*$ due
  to Higman's lemma.  
\end{proof}

The words $w$ and $w'$ from the statement of Lemma~\ref{lem:union} exist iff
$\closure^\preceq(K) \cap L = \emptyset$.  An NFA for $\closure^\preceq(K)$ is
obtained by adding self loops under all symbols in $\Sigma$ to all
states of the automaton for $K$.  Emptiness of intersection is then
decidable in polynomial time by standard methods. This gives the
following lemma.

\begin{lemma}\label{lem:union-ptime}
  The separation problem by $\F(\preceq,\text{unions})$ is in polynomial time.
\end{lemma}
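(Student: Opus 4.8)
The plan is to reduce the decision problem to an emptiness-of-intersection test that can be answered by standard automata algorithms. By Lemma~\ref{lem:union}, a language $K$ is separable from $L$ by $\F(\preceq,\text{unions})$ precisely when there are no words $w \in K$ and $w' \in L$ with $w \preceq w'$. So the first step is simply to decide whether such a pair exists.

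Next I would observe that such a pair exists if and only if $L$ contains a word that has some subsequence lying in $K$, that is, $\closure^\preceq(K) \cap L \neq \emptyset$. Thus $K$ is separable from $L$ by $\F(\preceq,\text{unions})$ exactly when $\closure^\preceq(K) \cap L = \emptyset$, and it remains to test this emptiness in polynomial time. To this end I would build an NFA for $\closure^\preceq(K)$ from a given NFA $\A = (Q,\Sigma,\delta,q_0,F)$ for $K$ by adding, for every state $q$ and every $a \in \Sigma$, a self-loop $q \trans{a} q$. Call the result $\A'$. Its language is exactly $\closure^\preceq(K)$: given an accepting run of $\A$ on $w$, inserting reads of arbitrary symbols via the new self-loops yields accepting runs on all supersequences of $w$; conversely, deleting from any accepting run of $\A'$ on $w'$ the transitions taken along self-loops leaves an accepting run of $\A$ on a subsequence $w \preceq w'$. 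Since $\A'$ only adds $|Q|\cdot|\Sigma|$ transitions, its size is polynomial in $|\A|$.

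Finally, I would decide emptiness of $\closure^\preceq(K) \cap L$ by forming the standard product automaton of $\A'$ with an NFA for $L$ and testing reachability of a pair of accepting states from the pair of initial states; both the product construction and the reachability test run in polynomial time, so the whole procedure is polynomial. I do not expect a genuine obstacle here, as all ingredients are supplied by the preceding results; the only point needing a little care is the correctness of the self-loop construction, namely that it produces exactly $\closure^\preceq(K)$ and nothing larger, which is the routine forward/backward argument on runs described above.
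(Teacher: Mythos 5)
Your proof is correct and follows essentially the same route as the paper: apply Lemma~\ref{lem:union}, construct an NFA for $\closure^\preceq(K)$ by adding self-loops under all alphabet symbols at every state, and test emptiness of $\closure^\preceq(K) \cap L$ via a product construction. Your explicit forward/backward argument for the self-loop construction is a detail the paper leaves implicit, but the approach is identical.
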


\subsection*{The Suffix Order Cases}

It remains to prove the cases for the suffix order
$\sufforder$.  Let $\lcs(L)$ denote the
longest common suffix of all words of language $L$.

\begin{lemma}\label{lem:sufforder}
  A language $K$ is separable from a language $L$ by
  $\F(\sufforder,\text{single})$ if and only if there is no word $w' \in L$ such
  that $\lcs(K) \, \sufforder \, w'$.
\end{lemma}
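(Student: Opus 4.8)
The plan is to unpack the definitions and observe that among all candidate separators there is always a single canonical best one, namely the suffix-closure of $\lcs(K)$. First I would note that $\F(\sufforder,\text{single})$ consists exactly of the languages $\closure^\sufforder(u) = \Sigma^* u$ for words $u$. Hence $K$ is separable from $L$ by $\F(\sufforder,\text{single})$ precisely when there is a word $u$ that is a suffix of every word of $K$ (so that $K \subseteq \Sigma^* u$) and is a suffix of no word of $L$ (so that $\Sigma^* u \cap L = \emptyset$). The goal is therefore to understand which words $u$ can serve as common suffixes of $K$.

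The key structural observation is that the common suffixes of $K$ are totally ordered by $\sufforder$ and are exactly the suffixes of $\lcs(K)$. Indeed, fixing any $w_0 \in K$ (assuming $K \neq \emptyset$), every common suffix of $K$ is in particular a suffix of $w_0$, and the suffixes of a fixed word form a finite $\sufforder$-chain. Thus the set of common suffixes of $K$ has a unique longest element, which is by definition $\lcs(K)$, and every common suffix $u$ of $K$ satisfies $u \sufforder \lcs(K)$. This is the one fact that drives both directions, and it is the only mildly delicate point: it guarantees that enlarging $u$ to $\lcs(K)$ never loses the property $K \subseteq \Sigma^* u$ while making the closure $\Sigma^* u$ as small as possible, so $\lcs(K)$ is the optimal choice of separator.

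With this in hand, both implications are short. For the \emph{if} direction, assume no $w' \in L$ satisfies $\lcs(K) \sufforder w'$, and take $S = \Sigma^* \lcs(K)$: since $\lcs(K)$ is a common suffix of $K$ we have $K \subseteq S$, and since no word of $L$ has $\lcs(K)$ as a suffix we have $S \cap L = \emptyset$, so $S$ separates $K$ from $L$. For the \emph{only if} direction I would argue by contraposition: suppose some $w' \in L$ has $\lcs(K) \sufforder w'$, and let $\Sigma^* u$ be any language of $\F(\sufforder,\text{single})$ with $K \subseteq \Sigma^* u$. Then $u$ is a common suffix of $K$, so $u \sufforder \lcs(K) \sufforder w'$ by the observation and transitivity, whence $w' \in \Sigma^* u \cap L$; thus no $\Sigma^* u$ separates $K$ from $L$. (The degenerate case $K = \emptyset$, where $\lcs(K)$ is not well defined, can be excluded or treated separately, since it does not arise in the intended setting of nonempty input languages.) Rather than any single hard step, the main thing to get right is the totally-ordered structure of common suffixes that makes $\lcs(K)$ the canonical separator.
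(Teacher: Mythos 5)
Your proposal is correct and follows essentially the same route as the paper: both identify that any separator $\Sigma^* u$ forces $u$ to be a common suffix of $K$, use $\Sigma^* \lcs(K)$ as the canonical separator for the \emph{if} direction, and for the \emph{only if} direction observe that every common suffix $u$ of $K$ satisfies $u \sufforder \lcs(K)$, so $\lcs(K) \sufforder w'$ kills every candidate. You merely make explicit (via the total order on suffixes of a fixed $w_0 \in K$) a fact the paper leaves implicit, which is a fine bit of added rigor rather than a different argument.
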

\begin{proof}
  The separation problem asks to check the existence of a word $w \in \Sigma^+$
  such that $K \subseteq \Sigma^* w$ and $\Sigma^* w \cap L =
  \emptyset$.  Obviously, if such a word exits, it must be a common
  suffix of all words from $K$.
  Assume that there is no $w' \in L$ such that $\lcs(K) \, \sufforder
  \, w'$.  Then $K$ is separable from $L$ by the language $\Sigma^*
  \lcs(K)$.  To show the opposite implication, assume that there
  exists a $w'\in L$ such that $\lcs(K) \, \sufforder \, w'$.  Then, for
  any common suffix $w$ of $K$, it holds that $w \, \sufforder \, w'$,
  which means that $K$ is not separable from $L$ by a language from
  $\F(\sufforder,\text{single})$.  
\end{proof}

The word $\lcs(K)$ can be computed from the automaton for $K$ in
polynomial time by inspecting paths that end up in accepting states.  The
length of $\lcs(K)$ is not larger than the length of the shortest word
in $K$, hence linear with respect to the size of the automaton.  To
check whether there exists $w' \in L$ such that $\lcs(K) \, \sufforder
\, w'$ can be done in polynomial time by testing non-emptiness of the
language $\Sigma^*\lcs(K) \cap L$.

\begin{lemma}
  The separation problem by $\F(\sufforder,\text{single})$ is in
  polynomial time.
\end{lemma}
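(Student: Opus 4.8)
The plan is to reduce the decision to the criterion already established in Lemma~\ref{lem:sufforder}: the language $K$ is separable from $L$ by $\F(\sufforder,\text{single})$ if and only if there is no word $w' \in L$ with $\lcs(K) \sufforder w'$. Given NFAs for $K$ and $L$, it therefore suffices to (i) compute $\lcs(K)$ and (ii) test whether some word of $L$ has $\lcs(K)$ as a suffix, both in polynomial time.

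For step (ii), once $\lcs(K)$ is in hand, the words $w'$ in question are exactly those in $\Sigma^* \lcs(K)$, so I would build a small DFA for $\Sigma^* \lcs(K)$, whose size is linear in $|\lcs(K)|$, and then test non-emptiness of $\Sigma^* \lcs(K) \cap L$ by the usual product construction and a reachability check. This is routine and clearly polynomial, so the whole algorithm runs in polynomial time provided step (i) does.

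Step (i) is the only part that needs a genuine argument, and I expect it to be the main (if mild) obstacle. I would compute $\lcs(K)$ by reversing the NFA for $K$, that is, swapping the roles of initial and accepting states and reversing every transition, so that $\lcs(K)$ becomes the reverse of the longest common prefix of the reversed language. The longest common prefix can be read off by a single deterministic walk over the reversed automaton: maintaining the set $S$ of currently reachable \emph{live} states (those that can still reach an accepting state), I extend the prefix by a symbol $a$ exactly when no state of $S$ is accepting and every transition from $S$ into a live state is labelled $a$, and I stop otherwise. Because this traces a single path rather than performing a full subset construction, each step costs only polynomial time; and since any common prefix must be a prefix of the shortest accepted word, its length is bounded by the number of states, so the walk halts after at most $|Q|$ steps and $\lcs(K)$ has length linear in the size of the automaton. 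Correctness follows because the prefix is forced to agree with every accepted word as long as there is no branching among live continuations and no accepting state has yet been reached, and it cannot be extended once either condition fails.
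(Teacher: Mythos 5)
Your proposal is correct and follows essentially the same route as the paper: apply Lemma~\ref{lem:sufforder}, compute $\lcs(K)$ (whose length is bounded by that of a shortest word of $K$, hence by the size of the automaton), and test emptiness of $\Sigma^*\lcs(K)\cap L$ via a product construction and reachability. Your reversed-automaton walk over sets of live states is just a correct, more explicit implementation of the paper's terser claim that $\lcs(K)$ can be computed in polynomial time by inspecting paths ending in accepting states.
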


\begin{lemma}\label{lem:suffixunion}
  A language $K$ is separable from a language $L$ by
  $\F(\sufforder,\text{unions})$ if and only if the following two conditions
  are satisfied:
  \begin{enumerate}
  \item there exist no words $w \in K$ and $w' \in L$ such that $w \,
    \sufforder \, w'$,
  \item there exists a natural number $k\geq 0$ such that no words $w
    \in K$ and $w' \in L$ have a common suffix of length $k$.
  \end{enumerate}
\end{lemma}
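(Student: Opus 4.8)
The plan is to prove both directions directly, with the forward direction being a short argument from transitivity of $\sufforder$ and the backward direction an explicit construction of a finite separator. Throughout, recall that $\closure^{\sufforder}(w) = \Sigma^* w$, so every language in $\F(\sufforder,\text{unions})$ has the form $\Sigma^* w_1 \cup \cdots \cup \Sigma^* w_m$.

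For the \emph{only if} direction, suppose $S = \bigcup_{i=1}^m \Sigma^* w_i$ separates $K$ from $L$, so $K \subseteq S$ and $S \cap L = \emptyset$. Condition~1 is immediate from transitivity of $\sufforder$: if some $w \in K$ and $w' \in L$ satisfied $w \sufforder w'$, then $w \in S$ would give an index $i$ with $w_i \sufforder w \sufforder w'$, placing $w' \in \Sigma^* w_i \subseteq S$ and contradicting $S \cap L = \emptyset$. For Condition~2 I would take $k = \max_i |w_i|$ (and $k = 0$, hence $K = \emptyset$, in the degenerate case $m = 0$). If some $w \in K$ and $w' \in L$ had a common suffix $u$ of length $k$, then from $w \in S$ I choose $w_i \sufforder w$ with $|w_i| \le k = |u|$; as $w_i$ and $u$ are nested suffixes of $w$, the shorter $w_i$ is a suffix of $u$, hence of $w'$, so once more $w' \in S \cap L$. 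This contradiction shows Condition~2 holds for this $k$.

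For the \emph{if} direction I would construct a finite separator from the two conditions. Fix $k$ as in Condition~2; by monotonicity (a common suffix of length $k'$ contains one of length $k \le k'$) we may assume $k \ge 1$. The key preliminary observation is that Condition~1 already gives $\Sigma^* w \cap L = \emptyset$ for \emph{every} single $w \in K$, so the infinite union $\bigcup_{w \in K} \Sigma^* w$ separates $K$ from $L$; the only difficulty is to make it finite. To do so I truncate each long word to a bounded suffix and set
\[
  S = \bigcup_{\substack{w \in K,\ |w| < k}} \Sigma^* w \ \cup\ \bigcup_{\substack{w \in K,\ |w| \ge k}} \Sigma^* u_w,
\]
where $u_w$ is the suffix of $w$ of length exactly $k$. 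Over the finite alphabet $\Sigma$ there are only finitely many words of length $< k$ and finitely many of length $k$, so this is a finite union and hence lies in $\F(\sufforder,\text{unions})$; moreover $K \subseteq S$, since a short word $w$ lies in $\Sigma^* w$ and a long word $w$ lies in $\Sigma^* u_w$.

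It remains to verify $S \cap L = \emptyset$, which is where both hypotheses are used. For a short generator $w \in K$, disjointness of $\Sigma^* w$ from $L$ is precisely Condition~1. For a suffix generator $u_w$, any $w' \in \Sigma^* u_w \cap L$ would have $u_w$ (of length $k$) as a suffix, and $u_w$ is by definition a suffix of $w \in K$, so $u_w$ would be a common suffix of length $k$ of a word of $K$ and a word of $L$, contradicting Condition~2. I expect this finiteness step to be the main obstacle: the naive separator is infinite, and Condition~2 is exactly the hypothesis that lets one replace every long word by its length-$k$ suffix without ever capturing a word of $L$, collapsing the union down to finitely many generators.
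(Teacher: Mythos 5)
Your proof is correct and follows essentially the same route as the paper's: the forward direction exploits upward $\sufforder$-closedness of $S$ and the fact that membership in $\Sigma^* w_i$ depends only on a suffix of length $\max_i |w_i|$, and the backward direction builds the same finite separator from suffix-closures of the short words of $K$ together with the length-$k$ suffixes of the long ones (the paper packages these generators as the set $M$ of all words of length at most $k$ with no $\sufforder$-extension in $L$, but the verification via Condition~1 for short generators and Condition~2 for length-$k$ generators is identical). Your handling of the degenerate cases ($m=0$, $k=0$) is a small bonus the paper leaves implicit.
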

\begin{proof}
  From left to right.  Assume that $K$ is separable from $L$ by a
  language $S = \bigcup_{i=1}^{n} \Sigma^* w_i$. If $w \in K$ and $w
  \in S$ then there is no $w' \in L$ such that $w \, \sufforder \, w'$
  since $S$ contains all words that are longer than $w$ in
  $\sufforder$ and $S \cap L = \emptyset$.  Assume that for every
  number $k$ there are words $w \in K$ and $w' \in L$ with a common
  suffix of length $k$.  Then, in particular, there are words $w \in
  K$ and $w' \in L$ with a common suffix of length $\max(|w_1|,
  \ldots, |w_n|) + 1$.  However, these words are either both inside
  $S$ or both outside $S$, which contradicts that $K$ is separable
  from $L$ by $S$.  This concludes the proof from left to right.

  For the other direction, assume that $K$ and $L$ satisfy conditions
  1 and 2. Let $ M = \{w\in\Sigma^* \mid |w| \leq k \text{ and there
    is no } w' \in L \text{ such that } w \, \sufforder \, w' \} $ and
  define $ S = \bigcup_{w \in M} \Sigma^* w.  $ By definition, $S \cap
  L = \emptyset$ and $S$ is a finite union of suffix languages, i.e.,
  a finite union of $\sufforder$-closures of words. We show that $K
  \subseteq S$. Indeed, let $w \in K$. If $|w| \geq k$ and $v$ is a
  suffix of $w$ of length $k$ then $v$ belongs to $M$, which implies
  that $w \in \Sigma^* v \subseteq S$.  If $|w| < k$ then $w \in M$
  since there is no $w' \in L$ such that $w \, \sufforder \, w'$.
  Thus, $w \in S$, which completes the proof.  
\end{proof}

We now argue that the two conditions in Lemma~\ref{lem:suffixunion}
can be tested in polynomial time, given NFAs for $K$ and $L$.  To
check the first condition we test in polynomial time whether
$(\Sigma^* K) \cap L$ is nonempty.  To decide the second condition we
compute the reversals $\rev(K)$ and $\rev(L)$ of languages $K$ and
$L$, respectively.  This is done by reversing transitions in the
corresponding automata and swapping the role of initial and accepting
states. We note that this step may require an NFA to have more than
one initial state, but NFAs are known to be sufficiently robust to
allow this.  Common suffixes of words from $K$ and $L$ are common
prefixes of words from $\rev(K)$ and $\rev(L)$.  We then compute the
language of all prefixes of words from $\rev(K)$ and $\rev(L)$ by
making all the states accepting, thereby obtaining languages
$\text{pref}(\rev(K))$ and $\text{pref}(\rev(L))$, respectively.  The
intersection $ I = \text{pref}(\rev(K)) \cap \text{pref}(\rev(L)) $ is
the set of all words $v \in \Sigma^*$ such that there are words $w \in
K$ and $w' \in L$ with $v \, \sufforder \, w$ and $v \, \sufforder \,
w'$.  To check the condition it is sufficient to test whether the
language $I$ is infinite, which can also be done in polynomial
time. This leads to the following lemma.
\begin{lemma}\label{lem:suffixunion-ptime}
  The separation problem by $\F(\sufforder,\text{union})$ is in
  polynomial time.
\end{lemma}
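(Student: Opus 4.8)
The plan is to reduce the separation problem to the effective characterization already supplied by Lemma~\ref{lem:suffixunion}: given NFAs for $K$ and $L$, language $K$ is separable from $L$ by $\F(\sufforder,\text{union})$ if and only if (1) no $w \in K$ and $w' \in L$ satisfy $w \sufforder w'$, and (2) there is a natural number $k$ such that no $w \in K$ and $w' \in L$ share a common suffix of length $k$. Hence it suffices to show that each of these two conditions can be tested in polynomial time directly from the input automata. The whole argument is then an assembly of standard automaton constructions, none of which introduces an exponential blow-up in $|\Sigma|$ or in the number of states.

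For condition (1), I would first build an NFA recognising $\Sigma^* K$, obtained from the NFA for $K$ by adding self-loops on every symbol of $\Sigma$ at its initial state; nondeterminism lets the automaton read an arbitrary prefix in this loop and then simulate $K$, so it accepts exactly $\Sigma^* K$. Since $w \sufforder w'$ means $w' \in \Sigma^* w$, there exist such $w \in K$, $w' \in L$ precisely when $(\Sigma^* K) \cap L \neq \emptyset$. Intersecting the two NFAs by the usual product construction and testing emptiness decides this in polynomial time, and condition (1) is its negation.

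Condition (2) is the interesting part and the place where I expect the main difficulty to lie. The key observation is that common suffixes of a word in $K$ and a word in $L$ become, after reversal, common \emph{prefixes} of a word in $\rev(K)$ and a word in $\rev(L)$. I would therefore compute NFAs for $\rev(K)$ and $\rev(L)$ (reverse all transitions and swap the roles of initial and accepting states, allowing several initial states), form the prefix-closed languages $\text{pref}(\rev(K))$ and $\text{pref}(\rev(L))$ by declaring every state accepting, and intersect them to obtain $I = \text{pref}(\rev(K)) \cap \text{pref}(\rev(L))$. Then $I$ is exactly the set of words whose reversals are common suffixes of some $w \in K$ and some $w' \in L$. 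The existence of a bound $k$ as in condition (2) is equivalent to $I$ being \emph{finite}: if $I$ is infinite it contains arbitrarily long common prefixes of the reversals, hence $K$ and $L$ admit arbitrarily long common suffixes, and conversely. The main obstacle is recognising this equivalence cleanly; once it is in place, finiteness of a regular language given by an NFA reduces to searching, in the product automaton for $I$, for a cycle lying on some path from an initial to an accepting state, which is a polynomial-time graph-reachability computation.

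Combining the three pieces, all required tests run in polynomial time in the sizes of the NFAs for $K$ and $L$, which yields the lemma. The only genuinely conceptual step, as opposed to routine automaton manipulation, is the reduction of condition (2) to a finiteness test via reversals and prefix closure; everything else is product constructions, emptiness tests, and reachability.
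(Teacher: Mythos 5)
Your proposal is correct and follows essentially the same route as the paper: condition (1) of Lemma~\ref{lem:suffixunion} is checked via emptiness of $(\Sigma^* K)\cap L$, and condition (2) via reversal, prefix closure (making all states accepting), intersection, and a polynomial-time finiteness test on the resulting NFA. Your extra details (the self-loop construction for $\Sigma^*K$ and the cycle-reachability test for finiteness) are sound instantiations of the standard constructions the paper invokes implicitly.
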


\begin{lemma}\label{lem:suffixbool}
  A language $K$ is separable from a language $L$ by
  $\F(\sufforder,\text{bc})$ if and only if there exists a natural
  number $k\geq 0$ such that no words $w \in K$ and $w' \in L$ have a
  common suffix of length $k$.
\end{lemma}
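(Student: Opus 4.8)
The plan is to first pin down exactly which languages $\F(\sufforder,\text{bc})$ contains, and then prove each direction of the equivalence against that concrete description. The key notion is \emph{$k$-suffix-testability}: call a language $S$ $k$-suffix-testable if $xz \in S \iff yz \in S$ for all $x,y \in \Sigma^*$ and all $z \in \Sigma^k$. One checks immediately that each generator $\Sigma^* w$ is $|w|$-suffix-testable (for $|z|=|w|$, the word $xz$ lies in $\Sigma^* w$ iff $z=w$), that $k$-suffix-testability implies $k'$-suffix-testability for $k'\geq k$, and that for a fixed $k$ these languages are closed under complement and union. Hence any finite boolean combination of $\Sigma^* w_1,\ldots,\Sigma^* w_n$ is $k$-suffix-testable for $k=\max_i|w_i|$, so membership of a word of length at least $k$ depends only on its suffix of length $k$, while the finitely many shorter words are unconstrained.

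For the forward direction I would suppose $S\in\F(\sufforder,\text{bc})$ separates $K$ from $L$ and fix $k$ with $S$ being $k$-suffix-testable. Arguing by contradiction, if some $w\in K$ and $w'\in L$ shared a common suffix $z$ of length $k$, then writing $w=xz$ and $w'=yz$ the defining property gives $w\in S \iff w'\in S$; but $w\in K\subseteq S$ and $w'\in L$ is disjoint from $S$, a contradiction. So no such pair exists, and this $k$ witnesses the stated condition.

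For the converse, assume the condition holds for some $k$. Let $T\subseteq\Sigma^k$ be the set of length-$k$ suffixes of words of $K$ that have length at least $k$; the hypothesis says precisely that no word of $L$ of length at least $k$ has its length-$k$ suffix in $T$. I would set $S = \bigl(\bigcup_{z\in T}\Sigma^* z\bigr)\cup (K\cap\Sigma^{<k})$ and first verify that $S\in\F(\sufforder,\text{bc})$: each $\Sigma^* z$ is a suffix closure, and $K\cap\Sigma^{<k}$ is finite, hence a boolean combination of suffix closures since every singleton satisfies $\{u\}=\Sigma^* u\setminus\bigcup_{a\in\Sigma}\Sigma^* a u$. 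By construction every word of $K$ lies in $S$ (long words through $T$, short words directly), and no \emph{long} word of $L$ lies in $S$ by the choice of $T$.

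The step I expect to be the main obstacle is checking $L\cap S=\emptyset$ on the words shorter than $k$. A short $w'\in L$ lies in $S$ exactly when $w'\in K\cap\Sigma^{<k}$, so the construction separates only if $K\cap L\cap\Sigma^{<k}=\emptyset$. The common-suffix condition already forces $K$ and $L$ to be disjoint on words of length at least $k$ (a shared long word $w$ would give the pair $(w,w)$ a common suffix of length $k$), but it does \emph{not} by itself rule out a shared short word; so the clean form of the equivalence also needs $K\cap L=\emptyset$, which is in any case necessary for separability. With that disjointness made explicit, the short words are classified correctly, and $K\subseteq S$ together with $L\cap S=\emptyset$ yields the desired separation.
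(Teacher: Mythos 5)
Your proof follows essentially the same route as the paper's: the forward direction is the paper's observation that a boolean combination of $\Sigma^* w_1, \ldots, \Sigma^* w_n$ cannot distinguish words sharing a suffix of length $k = \max_i |w_i| + 1$ (your $k$-suffix-testability is a named version of exactly this), and the backward direction uses the identical construction, covering long words of $K$ by $\Sigma^* v$ with $v$ the length-$k$ suffix and short words by singletons via $\{u\} = \Sigma^* u \setminus \bigcup_{a \in \Sigma} \Sigma^* a u$.

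What you add, however, is a legitimate observation that the paper glosses over. The paper's backward proof covers each $w \in K$ with $|w| < k$ by the singleton $\{w\}$ but never verifies that $\{w\} \cap L = \emptyset$, and the stated condition does not force this: a word $w \in K \cap L$ of length less than $k$ has no suffix of length $k$ at all, so the condition can hold vacuously for such a pair. Since the paper explicitly allows $K$ and $L$ to be non-disjoint (Section 2), the lemma as literally stated fails in this degenerate case --- for instance $K = L = \{a\}$ over $\Sigma = \{a\}$ satisfies the right-hand side with $k = 2$ but is plainly not separable. Your repair (adding $K \cap L = \emptyset$, which is in any case necessary for separability) is sound, and both directions then go through; an equivalent fix is to read ``common suffix of length $k$'' as the suffix of length $\min(k, |w|)$, under which the original statement becomes correct as written. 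With the disjointness hypothesis made explicit, your proof is complete and correct.
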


\begin{proof}
  Assume that $K$ is separable from $L$ by a finite boolean
  combination of languages $\Sigma^* w_1, \ldots, \Sigma^* w_n$.  Let
  $k = \max(|w_1|, \ldots, |w_n|) + 1$. Note that, for all words $w v$
  and $w' v$ with $|v| \geq k$ and all $1 \leq i \leq n$, it holds
  that $ w v \in \Sigma^* w_i \text{ if and only if } w' v \in
  \Sigma^* w_i.  $ Thus, any words with a common suffix of length at
  least $k$ cannot be separated by the considered set of languages,
  which means that there are no words $w \in K$ and $w' \in L$ with a
  common suffix of length $k$.

  To show the opposite implication, assume that there exists a natural
  number $k$ satisfying the condition.  Then, for every $w\in K$, if
  $|w|<k$, we can cover word $w$ by the language $\{w\}=\Sigma^*w
  \setminus \bigcup_{a\in \Sigma} \Sigma^* aw$.  If $|w|\ge k$, then
  $w\in\Sigma^* v$, where $v$ is a suffix of $w$ of length $k$.  By the
  assumption that no words of $K$ and $L$ have a common suffix of
  length $k$ we have that $\Sigma^* v\cap L=\emptyset$, which
  completes the proof. 
\end{proof}

It therefore follows that separability by $\F(\sufforder,\text{bc})$
can be done with a simplified version of the procedure for
$\F(\sufforder,\text{unions})$. Since the latter was already in
polynomial time according to Lemma~\ref {lem:suffixunion-ptime}, we have the following lemma.

\begin{lemma}\label{lem:suffixbool-ptime}
  The separation problem by $\F(\sufforder,\text{bc})$ is in
  polynomial time.
\end{lemma}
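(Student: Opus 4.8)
The plan is to reduce the problem directly to the polynomial-time machinery already developed for $\F(\sufforder,\text{unions})$. By Lemma~\ref{lem:suffixbool}, the language $K$ is separable from $L$ by $\F(\sufforder,\text{bc})$ if and only if there exists a bound $k \geq 0$ such that no $w \in K$ and $w' \in L$ share a common suffix of length $k$. This is exactly the second of the two conditions characterizing separability by $\F(\sufforder,\text{unions})$ in Lemma~\ref{lem:suffixunion}; hence the whole test collapses to deciding a single condition that we already know how to handle.

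First I would reuse the procedure sketched before Lemma~\ref{lem:suffixunion-ptime}. Starting from NFAs for $K$ and $L$, I reverse both automata (reversing all transitions and swapping the roles of initial and accepting states, which may introduce several initial states), so that common suffixes of words in $K$ and $L$ become common prefixes of words in $\rev(K)$ and $\rev(L)$. I then make every state accepting to obtain NFAs for $\text{pref}(\rev(K))$ and $\text{pref}(\rev(L))$, and compute the intersection $I = \text{pref}(\rev(K)) \cap \text{pref}(\rev(L))$ by the standard product construction. All these automata have size polynomial in the inputs, and by construction $I$ is exactly the set of words that are a common suffix of some $w \in K$ and some $w' \in L$.

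The bound $k$ then exists if and only if $I$ is finite: if arbitrarily long common suffixes occur, then $I$ contains words of unbounded length and is infinite, whereas a finite $I$ yields the bound $k = \max_{v \in I}|v| + 1$. Since finiteness of the language of an NFA is decidable in polynomial time---for instance by searching for a cycle that is simultaneously reachable from an initial state and co-reachable to an accepting state---the complete test runs in polynomial time, which establishes the lemma.

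I do not expect a genuine obstacle: all the difficulty has already been absorbed into the characterization of Lemma~\ref{lem:suffixbool} and into the finiteness test described for Lemma~\ref{lem:suffixunion-ptime}. The only point that needs a moment's care is the equivalence between the existence of the bound $k$ and the finiteness of $I$, and this follows immediately from the observation above.
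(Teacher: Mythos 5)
Your proposal is correct and follows essentially the same route as the paper: the paper also observes that, by Lemma~\ref{lem:suffixbool}, separability by $\F(\sufforder,\text{bc})$ reduces to the second condition of Lemma~\ref{lem:suffixunion}, and then reuses the reversal/prefix/intersection/finiteness procedure from Lemma~\ref{lem:suffixunion-ptime}. You merely spell out the details (including the equivalence between the existence of the bound $k$ and finiteness of $I$) that the paper leaves implicit in its one-line appeal to a ``simplified version'' of that procedure.
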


\end{document}